\documentclass[11pt, a4paper]{article}
\usepackage{preamble}

\newcommand{\Lloc}{\Lambda_{\mathrm{loc}}}

\newcommand{\Ttot}{T_{\mathrm{tot}}}
\newcommand{\Nexp}{N_{\mathrm{exp}}}
\newcommand{\Stab}{\mathsf{Stab}}
\newcommand{\Cl}{\mathsf{Cl}}
\newcommand{\SWAP}{\mathrm{SWAP}}

\makeatletter

\renewcommand{\theHALG@line}{\thealgorithm.\arabic{ALG@line}}
\makeatother

\definecolor{FixColor}{rgb}{0.86,0.08,0.24}

\title{Learning Arbitrary Lindbladians from Time Evolution}

\author{Zhili Chen\thanks{Centre for Quantum Technologies, National University of Singapore. Email: \url{chen.zhili@u.nus.edu}}\and Zhan Yu\thanks{Centre for Quantum Technologies, National University of Singapore. Email: \url{yu.zhan@u.nus.edu}}}
\date{\today}

\begin{document}
\maketitle
\begin{abstract}
We study the problem of learning an unknown Markovian open-system generator from access to its physical time evolution. This generator, called a Lindbladian, contains Hamiltonian and dissipative coefficients indexed by an exponentially large family of possible Pauli terms. We propose an efficient algorithm that learns arbitrary Lindbladians from time evolution under minimal assumptions. For a Lindbladian of dynamical strength at most $\Lambda$, the algorithm estimates every coefficient to error $\eps$ using $\widetilde O(\Lambda^2/\eps^2)$ experiments and $\widetilde O(\Lambda/\eps^2)$ total evolution time, together with polynomial classical running time. The algorithm consists of two nonadaptive, ancilla-free, and control-free stages:
\begin{itemize}
    \item The \emph{support-learning} stage outputs a candidate support of size $\poly(\Lambda/\eta)$ that contains every Hamiltonian and dissipative coordinate of magnitude at least $\eta$, using $\widetilde O(\Lambda^2/\eta^2)$ experiments with preparations of product Pauli eigenstates and single-qubit Pauli measurements.
    \item The \emph{coefficient-learning} stage estimates all coefficients in any candidate support of size $M$ to error $\eps$, using $\widetilde O\ab\big(\Lambda^2\log M/\eps^{2})$ experiments with preparations of random stabilizer states and measurements in random Clifford bases.
\end{itemize}
Composing the two stages identifies and estimates every coefficient of an arbitrary Lindbladian in polynomial time. The experiment-count and total-evolution-time scalings match the lower bounds of~\cite{ACG+26} up to logarithmic factors, so the algorithm is nearly optimal for learning arbitrary Lindbladians.
\end{abstract}

\clearpage
\tableofcontents
\clearpage

\section{Introduction}

Learning the dynamics of a quantum system from experimental data is a basic task throughout quantum science. It underlies the calibration and certification of quantum devices~\cite{SHN+14,WGFC14,WPS+17}, quantum simulation and computation~\cite{Llo96,BSK+17}, and quantum metrology~\cite{LBS+04,DRC17}. For a closed system, the dynamics is generated by a Hamiltonian, written in the Pauli basis as $H=\sum_a h_aP_a$, and the learning task is to estimate the unknown coefficients $\{h_a\}$. Hamiltonian learning has been intensively studied under various assumptions on locality, interaction structure, temperature, evolution time, and available quantum control~\cite{BAL19,AAKS20,ZYLB21,HKT24,HTFS23,BLMT24b,BLMT24a}. The most general setting among these results is the \emph{ansatz-free} one, where the Pauli terms of the Hamiltonian may occur anywhere in the exponentially large ambient space and no interaction structure is prescribed in advance~\cite{Zha25,HMG+25,ST25,ZG26}.

Real quantum devices are rarely closed and continuously interact with their environments. Coupling to the environment produces dissipation such as relaxation, dephasing, leakage, and correlated noise, which cannot be characterized by a Hamiltonian alone. Dissipation can also be engineered as a resource for simulation~\cite{BMS+11,SA26,JGZ+26}, computation~\cite{SGAZ25,SGR+25}, and Gibbs state preparation~\cite{CKG23,BCL24,BLMT24c,DLL25,CKBG25,RFA26a,RFA26}. For time-independent Markovian open quantum systems, the evolution forms a quantum dynamical semigroup $\{e^{t\cL}\}_{t\geq0}$ generated by a \emph{Lindbladian}~\cite{GKS76,Lin76},
\[
\cL(\rho) = -\iu\sum_{P_a \neq I} h_a[P_a,\rho] + \sum_{P_a,P_b\neq I} \gamma_{ab} \ab\Big(P_a \rho P_b-\frac{1}{2}\{P_b P_a, \rho\}),
\]
where the coefficients $\{h_a\}$ encode coherent interactions and $\{\gamma_{ab}\}$ encode dissipative interactions. The learning task for an open quantum system is therefore to estimate the Lindbladian coefficients, naturally extending Hamiltonian learning. A growing body of work studies Lindbladian learning from steady states or real-time evolution~\cite{BGP+20,FMD+24,FMRW25,IRG+26,RIG+26,Sin26,LTW26,ACG+26,MBFR26}.

Most existing guarantees rely on structural promises such as locality or sparsity. Locality is natural when engineered interactions and dominant noise processes act on only a few neighboring qubits, but it does not capture crosstalk, residual couplings, or correlated dissipative processes connecting distant qubits~\cite{HFW20}. Sparsity is a complementary structural promise, which allows the active terms to occur anywhere in Pauli space but assumes that at most $M$ Hamiltonian and dissipative coefficients are nonzero and that every other coefficient vanishes exactly. This exact-zero promise can be unrealistic, since weak residual couplings or a long tail of small dissipative terms can make $M$ exponentially large even when only a few coefficients matter. Existing methods for learning sparse Lindbladians either depend on a potentially large, instance-dependent conditioning factor~\cite{IRG+26} or require additional resources such as ancillary systems and control interleaved with the unknown evolution~\cite{RIG+26}. These limitations motivate the setting in which the Lindbladian is completely arbitrary and obeys no structural assumption, leaving a central question:
\begin{center}
\emph{Can an arbitrary Lindbladian be efficiently learned from its time evolution?}
\end{center}

To make this question precise, we clarify what efficiency means in this setting. The Pauli representation contains $\Theta(16^n)$ possible Hamiltonian and dissipative coordinates, making exhaustive testing intractable; explicitly listing the exact support of a dense Lindbladian can also require exponential time. However, at entrywise accuracy $\eps$, it suffices to identify all coefficients visible above the target resolution. As the only quantitative prior information about the Lindbladian, we assume a known upper bound $\Lambda$ on the dynamical strength, which fixes the natural time scale $1/\Lambda$. The dynamical strength promise permits exponentially many nonzero coefficients and imposes neither locality nor sparsity. We show that dynamical strength nevertheless controls the number of $\eps$-heavy coefficients, so the learning algorithm is efficient whenever the cost scales as $\poly(\Lambda,n,1/\eps)$.

\subsection{Results}

In this work, we answer this question in the affirmative. We present an efficient algorithm that learns arbitrary Lindbladians in situ under the single quantitative assumption that a known upper bound on the dynamical strength is supplied. No locality, sparsity, or prior knowledge of the support is required. The algorithm consists of two stages. The \emph{support-learning} stage outputs a polynomial-size candidate support containing every coefficient above a chosen threshold, using preparations of product Pauli eigenstates and single-qubit Pauli measurements. The \emph{coefficient-learning} stage estimates all coefficients in a candidate support, using preparations of random stabilizer states and Clifford-basis measurements, with a number of experiments logarithmic in the support size. Composing the two stages yields an algorithm that learns an arbitrary Lindbladian in polynomial time at a cost matching the lower bounds of~\cite{ACG+26} up to polylogarithmic factors. We first formalize the access model and learning objective and then state the two results and their composition.

\paragraph{Access model and resource measures.}
For a target Lindbladian $\cL$, we assume black-box access to the channel $e^{t\cL}$ for any time $t \geq 0$. One \emph{experiment} prepares a quantum state, applies the unknown channel $e^{t\cL}$ once for a chosen time $t\geq0$, and measures the output. If an algorithm performs $N$ experiments at times $t_1,\ldots,t_N$, its \emph{experiment count} is $N$, its \emph{total evolution time} is
\[
\Ttot\coloneq\sum_{m=1}^{N}t_m,
\]
and its \emph{time resolution} is the shortest positive evolution time among all experiments. We also track its classical running time. We call an algorithm \emph{control-free} if it does not interleave control operations with the unknown evolution, so that $e^{t\cL}$ is applied once and without interruption in each experiment; it is \emph{ancilla-free} if the state preparation, evolution, and measurement use no entanglement with ancillary quantum systems. These restrictions capture the \emph{in situ} learning setting considered here.

\paragraph{Lindbladian learning.}
We express $\cL$ in the Pauli basis by its Hamiltonian coefficients $\{h_a\}$ and dissipative coefficients $\{\gamma_{ab}\}$. Beyond black-box access to the evolution, the only promise is a known upper bound $\Lambda$ on the \emph{dynamical strength},
\[
\norm{\cL^\dag}_{\infty\to\infty} \leq \Lambda,
\]
where $\Lambda$ bounds the instantaneous rate at which observables can change. Thus $1/\Lambda$ sets the natural probing time scale. The ambient family contains $4^n-1$ Hamiltonian coordinates and $(4^n-1)^2$ dissipative coordinates, with no prior knowledge of their locations. We therefore represent an estimate compactly by candidate sets $\hat\cS_H$ and $\hat\cS_D$ together with coefficient estimates on those sets; every omitted coordinate is reported as zero. The learning objective is formally defined as follows.

\begin{problem}[Lindbladian learning]\label{prob:lindblad_learning}
Let $\cL$ be an unknown $n$-qubit Lindbladian with dynamical strength at most $\Lambda$, whose Hamiltonian coefficients are $\{h_a\}$ and dissipative coefficients are $\{\gamma_{ab}\}$. Given black-box access to the semigroup $\{e^{t\cL}\}_{t\geq0}$, the dynamical strength bound $\Lambda$, and parameters $\eps,\delta\in(0,1)$, find estimates $\{\hat h_a\}$ and $\{\hat\gamma_{ab}\}$ such that
\[
\max_{a}{\abs[\big]{\hat h_a-h_a}} \leq \eps,
\qquad
\max_{(a,b)}{\abs[\big]{\hat\gamma_{ab}-\gamma_{ab}}} \leq \eps,
\]
with probability at least $1-\delta$.
\end{problem}

At entrywise accuracy $\eps$, it is unnecessary to distinguish a coefficient of magnitude below $\eps$ from zero. This motivates one-sided threshold support recovery. For $\eta > 0$, define the $\eta$-heavy Hamiltonian and dissipative supports by
\[
\cS_H^{\geq\eta}\coloneq
 \cbra{a\neq0:\abs{h_a}\geq\eta},
\qquad
\cS_D^{\geq\eta}\coloneq
 \cbra{(a,b):a,b\neq0,\ \abs{\gamma_{ab}}\geq\eta}.
\]
The support-learning subproblem is to output candidate sets containing these heavy supports. Our first main result solves this subproblem and is summarized informally below.

\begin{theorem}[Support learning; informal version of \cref{thm:support_learning}]\label{thm:main1_informal}
Let $\cL$ be an arbitrary $n$-qubit Lindbladian with dynamical strength at most $\Lambda$. For any $\eta\in(0,\Lambda]$, there is an algorithm that outputs candidate supports of $\cL$ with the following guarantees:
\begin{enumerate}
    \item (Completeness) With probability at least $0.99$, it outputs candidate supports $\hat\cS_H$ and $\hat\cS_D$, each of size $\poly(\Lambda/\eta)$, such that $\cS_H^{\geq\eta}\subseteq\hat\cS_H$ and $\cS_D^{\geq\eta}\subseteq\hat\cS_D$;
    \item (Costs) It performs $\widetilde O({\Lambda^2}/{\eta^2})$ experiments with a total evolution time of $\widetilde O({\Lambda}/{\eta^2})$;
    \item (Time resolution) Each experiment uses an evolution time $t=\widetilde\Theta(1/\Lambda)$;
    \item (Operation) The experiments are nonadaptive, ancilla-free, and control-free: each prepares a random product Pauli eigenstate and measures in a random Pauli basis;
    \item (Classical overhead) The classical running time is $\poly(n,\Lambda/\eta)$.
\end{enumerate}
\end{theorem}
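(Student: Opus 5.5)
Our plan is to reduce support learning to Goldreich--Levin/Kushilevitz--Mansour style heavy-coefficient search over a short-time linearization of the dynamics. We fix $t=c/(\Lambda\log(\Lambda/\eta))$, or more simply $t=\Theta(1/\Lambda)$ with a small constant. The starting point is that $e^{t\cL}=\mathrm{id}+t\cL+R_t$ with $\norm{R_t^\dag}_{\infty\to\infty}\le (t\Lambda)^2/2$. For a product Pauli eigenstate $\rho_{x,s}$ (basis string $x$, signs $s$) and a Pauli observable $P_c$, the measured quantity $\Tr[P_c e^{t\cL}(\rho_{x,s})]$ has first-order part $t\Tr[P_c\cL(\rho_{x,s})]$. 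This part is an explicit bilinear expression in the coefficients: for a Pauli input $P_d$, the map $\cL(P_d)$ has components $\pm 2\iu h_a$ along $P_aP_d$ for anticommuting $a$, and along $P_aP_dP_b$ with coefficient $\gamma_{ab}$ (plus the anticommutator corrections). Averaging over the random signs $s$ with the character $\prod s_i$ restricted to a subset isolates the Pauli input $P_d$. This gives, from random-product-state and random-Pauli-basis data, classical-shadow-type unbiased estimators of the ``Pauli transfer matrix'' entries $T_{cd}=2^{-n}\Tr[P_c\cL(P_d)]$. Each single-qubit outcome record gives bounded estimators simultaneously for all $(c,d)$, with variance $3^{|c|+|d|}$. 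Because the weight may be large, we avoid direct estimation of individual entries. Instead we only estimate \emph{aggregated} quantities: sums of $T_{cd}^2$ over all $(c,d)$ with a fixed prefix on the first $k$ qubits. These are computed via two-copy (pairs of independent experiments) $U$-statistics, whose variance is controlled by $\norm{\cL^\dag}_{\infty\to\infty}\le\Lambda$ rather than by weight.

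The key structural lemma, which we would prove first, is a weight bound. Roughly, $\sum_{c,d}T_{cd}^2\cdot(\text{normalisation})\le O(\Lambda^2)$ and, more to the point, $h_a$ and $\gamma_{ab}$ are recoverable linear functions of $T$ with $\ell_\infty$-stable structure. For $h_a$, we have $h_a\propto\sum_d T_{P_aP_d,d}$ restricted to anticommuting $d$. For $\gamma_{ab}$, we take a Fourier-type inversion over $d$ of $T_{P_aP_dP_b,d}$ with phases $\chi_a(d)\chi_b(d)$. This shows that any $\eta$-heavy coefficient forces some prefix-aggregated ``energy'' to be $\Omega(\eta^2)$ at every level of the prefix tree. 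A Parseval-type bound by $\Lambda^2$ shows at most $O(\Lambda^2/\eta^2)$ prefixes per level survive thresholding at $\eta^2/2$. Running the branching search over qubits $1,\dots,n$ (with four choices per qubit for each of the $c,d$ tensor factors, or equivalently over the pair $(a,b)$ via the product $P_aP_dP_b$) then yields candidate lists of size $\poly(\Lambda/\eta)$. We also need to convert from the $(c,d)$ indexing to $(a,b)$. A heavy $\gamma_{ab}$ means the pair $(a\cdot b\text{-type products})$ is heavy. We enumerate $a$ among candidate Hamiltonian-like prefixes and $b$ among pairs consistent with the candidate $c\cdot d$ products, which keeps the size polynomial.

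For costs, the whole search reuses a single nonadaptive data set of $N=\widetilde O(\Lambda^2/\eta^2)$ experiments at time $t=\Theta(1/\Lambda)$. The $\widetilde{O}$ comes from a union bound over the $n\cdot\poly(\Lambda/\eta)$ prefix queries, giving $\log(n\Lambda/\eta)$ factors, and from median-of-means. The signal is $t\cdot\eta=\Theta(\eta/\Lambda)$ against $O(1)$ shot noise, so $(\Lambda/\eta)^2$ shots suffice. Total time is therefore $N t=\widetilde O(\Lambda/\eta^2)$. The second-order remainder contributes bias $O((t\Lambda)^2)$ in the observable, which is $O(t\Lambda)$ relative to signal. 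We must ensure it stays below $\eta t$. Here we either use a Richardson combination of two times $t,2t$, which cancels the $t^2$ term and leaves $O(t^3\Lambda^3)$ with $t\sim 1/\Lambda$ times polylog factors, or we choose $t=\widetilde\Theta(\eta/\Lambda^2)$. The latter would spoil the time scaling, so we prefer the extrapolation. Classical time is $\poly(n,\Lambda/\eta)$ since each aggregated estimator over a prefix is computed from the product-structured records in $O(nN)$ time.

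The main obstacle is the aggregated-energy step. We must show that prefix sums of squared transfer-matrix entries are estimable with variance $O(\Lambda^2)$ independent of Pauli weight from random single-qubit bases, and that these sums still dominate $\eta^2$ for heavy dissipative pairs despite the cross-term cancellations in $\gamma_{ab}$. We would first try to write the prefix energy as $\mathbb E$ over a swap-type correlation between two independent runs on the suffix qubits, where the random-basis twirl produces the identity-plus-swap structure. We would then bound its second moment using $\norm{\cL^\dag}_{\infty\to\infty}\le\Lambda$ applied to products of single-qubit observables.
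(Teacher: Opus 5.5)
The step you flag as the main obstacle is a genuine gap. In the form you state it, with energies $\sum T_{cd}^2$ summed over prefixes of $(c,d)$, it is false. The coefficients are not localized in the entries $T_{cd}=2^{-n}\tr[P_c\cL(P_d)]$. Up to known phases, $\gamma_{ab}$ is a symplectic Fourier coefficient of $d\mapsto T_{d\oplus a\oplus b,\,d}$, so its contribution is spread over all $4^n$ inputs $d$. For example, take $\cL(\rho)=\gamma(Z_1\rho Z_1-\rho)$. Then $T_{dd}=-2\gamma$ for every $d$ whose first factor is $X$ or $Y$, so the normalized energy $4^{-n}\sum T_{cd}^2$ in a depth-$k$ prefix is $4^{1-k}\gamma^2$. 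Thresholding at $\eta^2/2$ already discards a term with $\gamma=\eta$ at depth two. Rescaling by $4^k$ instead lets $\Theta(4^k)$ prefixes survive. So no Parseval-type pruning works in the $(c,d)$ domain.

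A Kushilevitz--Mansour search would have to run in the $\chi$ (Bell) domain, where $\gamma_{ab}=\chi_{ab}$. There, each prefix energy is a second moment of transfer-matrix entries with a uniformly random, hence typically high-weight, suffix. Estimated from single-qubit bases, those entries have variance $3^{\wt(c)+\wt(d)}$. The natural swap-type correlation of two runs on $m$ suffix qubits has second moment exponential in $m$ even for $\cL=0$, as in purity estimation from local randomized measurements. The promise $\norm{\cL^\dag}_{\infty\to\infty}\leq\Lambda$ bounds the signal, not this shot noise. So neither completeness nor the $\widetilde O(\Lambda^2/\eta^2)$ count follows. Even granting the search, a heavy offset $a\oplus b$ does not determine $a$ and $b$ separately, and your conversion to $(a,b)$ is left unspecified.

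Your bias control also fails at the claimed time scale. A two-point Richardson combination at $t$ and $2t$ leaves a derivative error of order $\Lambda^3t^2$. This guarantees error at most $\eta$ only for $t=O(\sqrt{\eta/\Lambda^3})$, not $t=\widetilde\Theta(1/\Lambda)$, and shrinking $t$ that far inflates the experiment count. Staying at $t\sim1/\Lambda$ needs an extrapolation of order $q\sim\log(\Lambda/\eta)/\log\log(\Lambda/\eta)$ with controlled $\ell_1$ weight. The paper's Chebyshev--Lobatto rule does this, with bias $O(\Lambda/(q+1)!)$ and weight $W=O(q^2\Lambda)$.

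The missing idea, which makes any tree search unnecessary, is positivity of the Choi state $J_t$. Its Bell diagonal $\lambda_a(t)$ is a probability distribution, and $\abs{g_{ab}(t)}^2\leq\lambda_a(t)\lambda_b(t)$. Running the endpoint rule in reverse, $\abs{g'_{ab}(0)}\geq\eta$ forces $\eta/2\leq W\bigl(\sum_jp_j\abs{g_{ab}(t_j)}^2\bigr)^{1/2}$. Hence $a$ and $b$ each carry mass at least $\alpha=\eta^2/(4W^2)$ in the mixture $\mu=\sum_jp_j\lambda(t_j)$; for $h_a$, use the pair $(0,a)$. Pauli twirling at a random node implements the Pauli channel with error distribution $\mu$. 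Its $x$- and $z$-marginals are then sampled directly as displacements of computational- and $X$-basis product states. $O(\alpha^{-1}\log(1/(\alpha\delta)))$ shots catch every $\alpha$-heavy marginal value, and the Cartesian product $\hat\cS$ gives $\hat\cS_H=\hat\cS$ and $\hat\cS_D=\hat\cS\times\hat\cS$. Only sampled labels are used, so no squared or weight-dependent estimator ever arises.
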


\begin{remark}[Thresholded versus exact support recovery]
The completeness guarantee ensures that every coefficient of magnitude at least $\eta$ is included, but coefficients below $\eta$ may also enter the candidate support. Such false positives are harmless because the second stage estimates their coefficients. By contrast, without a lower bound on every nonzero coefficient, no finite-accuracy procedure can uniformly distinguish the exact support from one containing an arbitrarily small additional term. Exact support recovery therefore requires an additional coefficient-gap promise as in~\cite{IRG+26}, whereas our guarantee does not.
\end{remark}

Support learning reduces the exponentially large search space to a compact candidate support, but estimating its coefficients one at a time would make the experiment count linear in the support size. Our second result avoids this overhead by using randomized Clifford probes to estimate every coefficient in any supplied candidate support with only logarithmic dependence on its size.

\begin{theorem}[Coefficient learning; informal version of \cref{thm:shadow_full}]\label{thm:main2_informal}
Let $\cL$ be an arbitrary $n$-qubit Lindbladian with dynamical strength at most $\Lambda$, and let a candidate support of size $M$ be supplied. For any $\eps \in (0,1)$, there is an algorithm that estimates all coefficients of $\cL$ on the candidate support with the following guarantees:
\begin{enumerate}
    \item (Accuracy) With probability at least $0.99$, it satisfies $\max_a\abs{\hat h_a-h_a} \leq \eps$ and $\max_{(a,b)}\abs{\hat\gamma_{ab}-\gamma_{ab}} \leq \eps$ over the candidate support;
    \item (Cost) It performs $\widetilde O\ab\big({\Lambda^2}\log M/{\eps^2})$ experiments with a total evolution time of $\widetilde O\ab\big({\Lambda}\log M/{\eps^2})$;
    \item (Time resolution) Each experiment uses an evolution time $t=\widetilde\Theta(1/\Lambda)$;
    \item (Operation) The experiments are nonadaptive, ancilla-free, and control-free: each prepares a random stabilizer state and measures in a random Clifford basis;
    \item (Classical overhead) The classical running time is $\poly(M,n,\Lambda/\eps)$.
\end{enumerate}
\end{theorem}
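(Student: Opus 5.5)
We will prove this with a classical-shadow-type estimator built from short-time evolution. For any observable $O$ and input state $\rho$, we have $\tr(O e^{t\cL}(\rho)) = \tr(O\rho) + t\,\tr(O\cL(\rho)) + R_t$. Since $\norm{\cL^\dag}_{\infty\to\infty}\le\Lambda$, the remainder satisfies $\abs{R_t}\le \tfrac12 t^2\Lambda^2\norm{O}_\infty$. The plan is to choose $t=\Theta(\eps/\Lambda^2)$ rescaled appropriately. To keep the time resolution at $\widetilde\Theta(1/\Lambda)$, we will instead fit the first derivative from a constant number of times $t_j=c_j/\Lambda$, using a polynomial (Richardson-type) extrapolation of order $K=O(\log(\Lambda/\eps))$. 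The Taylor tail at $t\sim 1/\Lambda$ decays geometrically in $K$, while the extrapolation weights remain $\mathrm{polylog}$-bounded.

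Each coefficient is a linear functional of the Pauli transfer matrix of $\cL$. Concretely, $h_a$ and $\gamma_{ab}$ are recovered from the quantities $\tr(P_c\cL(P_d))/2^n$ for Pauli pairs $(c,d)$ determined by $(a,b)$. For a Hamiltonian term, take $(c,d)$ with $P_c\propto[P_a,P_d]$. For a dissipative term, take $P_c\propto P_aP_dP_b$ and combine a constant number of such pairs to separate $\gamma_{ab}$ from Hamiltonian and anticommutator contributions. Inverting this map leads to sums over $P_d$, so we do not pick single $P_d$. Instead we randomize: prepare a random stabilizer state $\rho=C\ket0\!\bra0C^\dag$, evolve, and measure in a random Clifford basis $C'$. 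The classical shadows of input and output give unbiased estimators of $\mathbb E_\rho[\tr(O_1 e^{t\cL}(\rho))\cdots]$ for bilinear quantities of the form $\sum_d \tr(P_c e^{t\cL}(P_d))\,\overline{(\cdot)}$. Using the twirling identity for the Clifford 2-design, with the input-side shadow $(2^n+1)\rho - I$, one computes
\[
\mathbb E\big[\tr(A\,\Phi(\sigma_{\mathrm{in}}))\big]=2^{-n}\tr(A\Phi(B))
\]
for appropriate observable pairs. Each target therefore becomes an expectation of a random variable whose variance is bounded by a Clifford-shadow norm estimate. For Pauli observables this bound is $O(1)$ up to the $\Lambda$ scaling.

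We will bound the estimator's variance by $O(\Lambda^2)$ after dividing by $t\sim1/\Lambda$. This gives the $\Lambda^2/\eps^2$ sample complexity per coefficient. Median-of-means then upgrades this to all $M$ coefficients simultaneously, with failure probability $0.01/M$ per coefficient, at a cost of a $\log M$ factor. The total evolution time is the number of experiments times $\widetilde\Theta(1/\Lambda)$, giving $\widetilde O(\Lambda\log M/\eps^2)$. Classical processing evaluates stabilizer-state Pauli expectations via Gottesman–Knill, which takes $\poly(n)$ per coefficient per sample.

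The main obstacle is the variance step. The input-side shadow carries a factor $2^n+1$, and naively it pairs with $\tr(P_c e^{t\cL}(P_d))$ summed over exponentially many $d$. The saving must come from the fact that each coefficient is a single Pauli-matrix element after choosing the input observable to be $P_d$ with $P_c$ fixed by $(a,b)$. The estimator is then $\tr(P_d\rho_{\mathrm{in}}\text{-shadow})\cdot\tr(P_c\,\text{out-shadow})$, and its second moment is computed by a 3-design (stabilizer states form a 3-design), giving an $O(1)$ bound. The remaining part of this step is to choose, for each $(a,b)$, a constant-size set of pairs $(c,d)$, for example random $P_d$ with specific commutation patterns, that linearly isolates $\gamma_{ab}$. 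This choice must also keep the contributions of all other coefficients, which are bounded only through $\Lambda$, from inflating the variance.
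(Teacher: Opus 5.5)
\textbf{There is a genuine gap.} It sits in the variance step, which is the heart of the theorem, and the route you sketch for that step cannot work.

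First, isolation fails. Write $P_e$ for your input Pauli $P_d$, to avoid a clash with the dimension. The Pauli-transfer-matrix entry $2^{-n}\tr(P_c\,\cL(P_e))$ receives a unimodular-weight contribution from every $\chi_{a'b'}$ with $a'\oplus b'=c\oplus e$. On each such fiber $s$, the map from the $4^n$ entries $\chi_{a',a'\oplus s}$ to the $4^n$ entries $2^{-n}\tr(P_{e\oplus s}\cL(P_e))$ is a phase-dressed symplectic Hadamard matrix. Its inverse is dense, with all entries of modulus $4^{-n}$. Hence no constant-size set of pairs $(c,e)$, whatever its commutation pattern, linearly isolates $\gamma_{ab}$: the other coefficients with the same $a'\oplus b'$, possibly exponentially many and controlled only through $\Lambda$, contaminate every such entry. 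You note this yourself when you say inversion leads to sums over $P_d$, but your last paragraph falls back on single pairs.

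Second, the single-pair estimator does not have constant variance. For nonidentity $P_e,P_c$, your estimator is $(2^n+1)^2\tr(P_e\psi)\tr(P_c\phi)$. The stabilizer 2-design gives $\E[\tr(P_e\psi)^2]=1/(2^n+1)$. The Born-weighted 3-design gives $\E[\tr(P_c\phi)^2\mid\psi]=1/(2^n+1)$ for every $\psi$. So the second moment is exactly $(2^n+1)^2$, not $O(1)$. This is the expected behaviour for the doubled-space observable $P_e^\top\otimes P_c$, whose squared Hilbert--Schmidt norm is $4^n$.

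\textbf{What is missing} is to keep the entire Fourier sum over $e$ and recognize it as a rank-one operator. Since $\ketbra{\Phi_0}{\Phi_0}=4^{-n}\sum_e P_e^\top\otimes P_e$, the phase-weighted sum of $P_e^\top\otimes P_{e\oplus a\oplus b}$ over all $e$ reassembles into $\ket{\Phi_b}\bra{\Phi_a}$. Its expectation in the Choi state $J_t$ is $\chi_{ab}(t)$, and its Hilbert--Schmidt norm is one. The paper then proceeds as follows:
\begin{itemize}
\item It polarizes these operators into projectors $\ketbra{\Phi_A}{\Phi_A}$ with $A\in\{P_a,(P_a+P_b)/\sqrt2,(P_a+\iu P_b)/\sqrt2\}$.
\item It proves that $((2^n+1)\psi^\top-I)\otimes((2^n+1)\phi-I)$ is an unbiased snapshot of $J_t$.
\item It uses the third moment of the Born-weighted stabilizer ensemble to show each $\tr(\ketbra{\Phi_A}{\Phi_A}\hat J_t)$ has second moment at most $160$.
\item It evaluates this scalar as $2^{-n}\tr\bigl(S(\psi)A^\dagger S(\phi)A\bigr)$, with $S(\xi)=(2^n+1)\xi-I$, directly from stabilizer tableaux, so the exponential sum is never expanded.
\end{itemize}
Without this device or an equivalent one, your estimator's variance is exponential in $n$, and the $\widetilde O(\Lambda^2\log M/\eps^2)$ cost does not follow.

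A smaller point concerns your extrapolation step. Polylogarithmic weights require Chebyshev-type nodes; the paper's Chebyshev--Lobatto weights sum to $(4q^2-1)\Lambda/3$. Equispaced Richardson nodes on $[0,1/\Lambda]$ give endpoint weights summing to order $2^K\Lambda$.
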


For a Lindbladian with known support of polynomial size, \cref{thm:main2_informal} directly yields an efficient learning algorithm. For example, for $k = O(1)$, the generic $k$-local Lindbladian itself has a candidate support of size $O(n^k)$. Without assuming the knowledge of the exact support or low interaction degree, \cref{thm:main2_informal} leads to an efficient algorithm for learning generic $k$-local Lindbladians.

\begin{corollary}[Generic $k$-local Lindbladian learning]\label{cor:main_klocal_informal}
Fix $k = O(1)$, let $\cL$ be a $k$-local $n$-qubit Lindbladian with dynamical strength at most $\Lambda$. For any $\eps\in(0,1)$, there is an algorithm that outputs estimates $\{\hat h_a\}$ and $\{\hat\gamma_{ab}\}$ such that
\[
\max_a\abs[\big]{\hat h_a-h_a}\leq\eps,
\qquad
\max_{(a,b)}\abs[\big]{\hat\gamma_{ab}-\gamma_{ab}}\leq\eps
\]
with probability at least $0.99$. The algorithm is ancilla-free and control-free, and it performs $\widetilde O\ab\big(\Lambda^2 \log n/\eps^2)$ experiments with a total evolution time of $\widetilde O\ab\big(\Lambda \log n/\eps^2)$. The classical running time is $\widetilde O(n^{k+3}\Lambda^2/\eps^2)$.
\end{corollary}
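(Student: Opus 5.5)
The plan is to apply \cref{thm:main2_informal} directly to the trivial candidate support given by all Pauli terms of weight at most $k$. A $k$-local Lindbladian has every Hamiltonian term $P_a$ of weight at most $k$. Every dissipative pair $(a,b)$ with $\gamma_{ab}\neq 0$ consists of Paulis supported inside a common set of at most $k$ qubits, since the jump operators are $k$-local and $\gamma$ is their Gram-type coefficient matrix.

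We therefore take $\hat\cS_H=\{a\neq 0:\abs{\mathrm{supp}(P_a)}\le k\}$, which has size $\sum_{j\le k}\binom{n}{j}3^j=O(n^k)$. We take $\hat\cS_D$ to be the set of pairs $(a,b)$ with $a,b\neq0$ and $\mathrm{supp}(P_a)\cup\mathrm{supp}(P_b)$ contained in some $k$-subset of qubits. This has size at most $\binom{n}{k}16^k=O(n^k)$. So $M=O(n^k)$ and $\log M=O(k\log n)=O(\log n)$ for constant $k$. Every coordinate outside these sets has coefficient exactly zero, so outputting zero there is exact. The only care needed is to fix the definition of $k$-local precisely enough that the dissipative support really lies in $\hat\cS_D$. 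If the paper's definition allows the jump operators' supports to differ within the pair, I would enlarge $\hat\cS_D$ to all pairs of weight-$\le k$ Paulis. That set has size $O(n^{2k})$, which still gives $\log M=O(\log n)$.

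Invoking \cref{thm:main2_informal} (formally \cref{thm:shadow_full}) with this support and accuracy $\eps$ gives success probability $0.99$ and entrywise error at most $\eps$ on the candidate support. It uses $\widetilde O(\Lambda^2\log n/\eps^2)$ experiments and total evolution time $\widetilde O(\Lambda\log n/\eps^2)$, and the procedure is ancilla-free and control-free. Since the error is zero off the support, the max-error guarantee holds over all coordinates.

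The step I expect to need real work is the classical running time $\widetilde O(n^{k+3}\Lambda^2/\eps^2)$. The generic bound $\poly(M,n,\Lambda/\eps)$ must be made explicit. My plan is to unpack the estimator of \cref{thm:shadow_full}. For each of the $N=\widetilde O(\Lambda^2/\eps^2)$ experiments and each of the $M=O(n^k)$ candidate coordinates, the classical-shadow-type estimator evaluates a stabilizer expectation or inner product involving a Pauli and the random Clifford data. This costs $O(n^2)$ to $O(n^3)$ via the tableau formalism (Gottesman--Knill), or $O(n^2)$ per Pauli after an $O(n^3)$ per-sample preprocessing (Gaussian elimination). Summing gives $N\cdot(n^3+Mn^2)\cdot\polylog=\widetilde O(n^{k+2}\Lambda^2/\eps^2)$, or at worst $\widetilde O(n^{k+3}\Lambda^2/\eps^2)$ with naive $O(n^3)$ per evaluation. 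A median-of-means aggregation over the $M$ coordinates adds only polylogarithmic overhead. If the dissipative estimator requires products of two Paulis, these are still single Paulis up to phase, computable in $O(n)$, so the count is unchanged. This accounting would confirm the stated runtime.
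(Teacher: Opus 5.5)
Your proposal is correct and matches the paper's argument. You apply \cref{thm:shadow_full} with $\delta=0.01$ to the fixed candidate support of all terms and pairs of joint weight $\wt(a,b)\le k$, which has size $M=O(n^k)$, so $\log M=O(\log n)$; the runtime $O(NMn^3)=\widetilde O(n^{k+3}\Lambda^2/\eps^2)$ then follows directly from the bound already stated in that theorem, with no separate accounting needed, and your $O(n^{2k})$ fallback support is not needed and would in fact worsen the runtime to $n^{2k+3}$.
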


For a completely arbitrary Lindbladian with minimal assumption, composing the support-learning and coefficient-learning procedures leads to an efficient algorithm. Specifically, we run support learning at threshold $\eta=\eps$, estimate every coefficient in the resulting candidate support to error $\eps$, and report every omitted coordinate as zero. The second stage in \cref{thm:main2_informal} controls the error on the candidate support, while the first stage in \cref{thm:main1_informal} ensures that every omitted coefficient has magnitude below $\eps$. The resulting estimate therefore achieves entrywise accuracy $\eps$ over the entire Lindbladian.

\begin{corollary}[Arbitrary Lindbladian learning; informal version of \cref{cor:learning_arbitrary_lind}]\label{cor:main_end_informal}
Let $\cL$ be an arbitrary $n$-qubit Lindbladian with dynamical strength at most $\Lambda$. For any $\eps\in(0,1)$, there is an algorithm that outputs estimates $\{\hat h_a\}$ and $\{\hat\gamma_{ab}\}$ such that
\[
\max_a{\abs[\big]{\hat h_a-h_a}} \leq \eps,
\qquad
\max_{(a,b)}{\abs[\big]{\hat\gamma_{ab}-\gamma_{ab}}} \leq \eps,
\]
with probability at least $0.99$. The algorithm is ancilla-free and control-free, and it performs $\widetilde O(\Lambda^2/\eps^2)$ experiments with a total evolution time of $\widetilde O(\Lambda/\eps^2)$. Its classical running time is $\poly(n,\Lambda/\eps)$.
\end{corollary}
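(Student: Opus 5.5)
The plan is to compose the two stages exactly as the discussion preceding the corollary suggests. First, run the support-learning algorithm of \cref{thm:main1_informal} at threshold $\eta=\eps$. We may assume $\eps\le\Lambda$: every coefficient of $\cL$ is bounded by $O(\Lambda)$ through the dynamical-strength bound, so for $\eps$ larger than this the all-zero estimate already works, and only the constant needs adjusting. With probability at least $0.99$, this stage returns sets $\hat\cS_H\supseteq\cS_H^{\geq\eps}$ and $\hat\cS_D\supseteq\cS_D^{\geq\eps}$. Both sets have size $\poly(\Lambda/\eps)$. The stage uses $\widetilde O(\Lambda^2/\eps^2)$ experiments, $\widetilde O(\Lambda/\eps^2)$ total evolution time, and $\poly(n,\Lambda/\eps)$ classical time. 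Before proceeding, I would boost the success probability to $1-0.005$ by standard means. The simplest route is to invoke the formal \cref{thm:support_learning}, which presumably allows a failure parameter $\delta$ with a $\log(1/\delta)$ overhead. Repeating the stage and taking unions is not a good alternative, since it inflates the support only by constant factors but must also preserve completeness.

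Second, feed the candidate support of size $M=|\hat\cS_H|+|\hat\cS_D|=\poly(\Lambda/\eps)$ into the coefficient-learning algorithm of \cref{thm:main2_informal} with accuracy $\eps$, again boosted to failure probability $0.005$. Since $\log M=O(\log(\Lambda/\eps))$, this stage costs $\widetilde O(\Lambda^2/\eps^2)$ experiments and $\widetilde O(\Lambda/\eps^2)$ evolution time. Its classical time is $\poly(M,n,\Lambda/\eps)=\poly(n,\Lambda/\eps)$. One subtlety is that \cref{thm:main2_informal} is stated for a supplied support, while here the support is random and depends on the first stage. I would handle this by using fresh experiments in the second stage and conditioning on the output of the first stage. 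The guarantee of \cref{thm:main2_informal} holds for every fixed support, so it also holds conditionally.

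Third, output $\hat h_a,\hat\gamma_{ab}$ from the second stage on the candidate sets and zero elsewhere. By a union bound, both stages succeed with probability at least $0.99$. On this event, every coordinate inside the candidate sets has error at most $\eps$. Every coordinate outside them lies outside the heavy support, so its true value has magnitude strictly less than $\eps$, and reporting zero incurs error less than $\eps$. Summing costs gives the stated bounds, and both stages are ancilla-free and control-free.

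The main obstacle is bookkeeping rather than substance. The key points are confirming the boost from constant success probabilities to a combined $0.99$, which requires the formal theorems to expose a $\log(1/\delta)$ dependence or a median/union amplification argument, and checking that the $\widetilde O$ hides only $\polylog(\Lambda/\eps)$ factors. If needed, the thresholds can be slightly rescaled (say $\eta=\eps$ with accuracy $\eps$) without affecting asymptotics.
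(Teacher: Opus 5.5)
Your proposal is correct and follows the paper's proof. The paper runs \cref{alg:displacement} at threshold $\eta=\eps$ with failure probability $\delta/2$, then passes $\hat\cS_H=\hat\cS$ and $\hat\cS_D=\hat\cS\times\hat\cS$ (of size $\poly(\Lambda/\eps)$, so $\log M=O(\log(\Lambda/\eps))$) to \cref{alg:shadow} with accuracy $\eps$ and failure $\delta/2$, reports zero off the candidate family, and takes a union bound, exactly as you do.

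Your extra remarks tidy up points the paper leaves implicit: one may assume $\eps\le\Lambda$, since \cref{lem:choi_derivative_bound} with $r=1$ gives $|h_a|,|\gamma_{ab}|\le\Lambda$ exactly, and fresh experiments make the conditional application of the second stage valid. Your aside against repeat-and-union amplification is mistaken, because a union of candidate sets preserves completeness, but nothing in your argument depends on it.
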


Our complete algorithm in \cref{cor:main_end_informal} is \emph{near-optimal} in both resource measures. For $\eps \leq \Lambda/16$, its $\widetilde O\ab\big(\Lambda^2/\eps^2)$ experiment count and $\widetilde O(\Lambda/\eps^2)$ total evolution time match information-theoretic lower bounds up to polylogarithmic factors~\cite{ACG+26}.

\begin{remark}[Lower bound]
The lower bounds of~\cite{ACG+26} are formulated under a bound $\Lloc$ on the local dynamical strength of any single qubit, whereas our promise bounds the global dynamical strength $\norm{\cL^\dag}_{\infty\to\infty}$. In general these two quantities can differ with the system size, but the hard family used in~\cite{ACG+26} consists of single-qubit dephasing generators $\cL_\gamma(\rho)=\gamma(Z\rho Z-\rho)$, each containing a single term on one qubit; hence, the local and global strengths coincide, $\Lloc = \Lambda$. Therefore, the lower bound also holds for the global dynamical strength $\Lambda$ in arbitrary Lindbladian learning.
\end{remark}

We next compare these guarantees with prior work on Hamiltonian and Lindbladian learning.

\subsection{Related work}
\paragraph{Hamiltonian learning.}
Hamiltonian learning has been studied using individual eigenstates and Gibbs states~\cite{BAL19,QR19,AAKS20,GCC24,HKT24,BLMT24b,CAN25,LTW26}, as well as real-time evolution~\cite{EHF19,ZYLB21,GCC24,HKT24,HTFS23,DOS24,BLMT24a,MFPT24,Car24,LTW26}, under a range of assumptions concerning locality, structure, and control.
In the known-structure setting, Haah, Kothari, and Tang used convergent cluster expansions and Newton--Raphson inversion to obtain the optimal high-temperature Gibbs-state sample complexity $O\ab(\log n/(\beta\eps)^2)$ in terms of the inverse temperature $\beta$, together with a constant-resolution algorithm using $O\ab(\log n/\eps^2)$ total evolution time~\cite{HKT24}. Bakshi et~al.~extended polynomial-time Gibbs-state learning to every constant inverse temperature~\cite{BLMT24b}. Huang et~al.~used coherent-control Hamiltonian reshaping to achieve Heisenberg scaling $O\ab(\log n/\eps)$ for low-intersection Hamiltonians~\cite{HTFS23}.
For unknown local structure, Bakshi et~al.~obtained Heisenberg-limited learning with total evolution time $O\ab(r\log n/\eps)$ and time resolution $\Theta(1/r)$, both depending on the effective sparsity parameter $r$~\cite{BLMT24a}, while Lewis, Tang, and Wright later gave simpler control-free algorithms from both dynamics and high-temperature Gibbs states~\cite{LTW26}.

The setting most relevant to our work is \emph{ansatz-free} Hamiltonian learning, where $H$ is only promised to be an $M$-sparse combination of arbitrary, possibly nonlocal, Pauli strings from an exponentially large ambient family. Zhao gave the first efficient algorithm in this regime using pseudo-Choi states and recursive residual learning. The resulting total evolution time is $\widetilde O\ab(M/\eps)$ with time reversal and $\widetilde O\ab(\norm{H}_\infty^{3}/\eps^4)$ with forward-time access alone, with ancillas and coherent control in both cases~\cite{Zha25}. Hu et~al.~removed time reversal and achieved Heisenberg scaling with discrete control, using $\widetilde O\ab(M^2/\eps)$ total evolution time with ancillas or $\widetilde O\ab(M^3/\eps)$ without them~\cite{HMG+25}. Sinha and Tong subsequently improved the sparsity dependence to $\widetilde O\ab(M/\eps)$ using Bell sampling while retaining ancillas and discrete control~\cite{ST25}. The protocol of Zhou and Gong~\cite{ZG26} considers the ansatz-free setting that is closest to ours, which is control-free and ancilla-free, and uses only Pauli product state preparation and measurement. Their algorithm achieves optimal total evolution time $\Theta\ab \big( \norm{H}_\infty \log(\norm{H}_\infty / \eps)/ \eps^{2})$ with average probe time $O\ab(1/\norm{H}_\infty)$ in such a setting.

\begin{table}[htb!]
\centering
\small
\setlength{\tabcolsep}{3pt}
\renewcommand{\arraystretch}{1.3}
\begin{adjustbox}{max width=\textwidth}
\begin{tabular}{@{}llllll@{}}
\toprule
Structure & Ancilla & Additional assumptions & $\Ttot$ & $\Nexp$ & Reference\\
\midrule
\multirow{7}{*}{$k$-local}
 & No & {\renewcommand{\arraystretch}{1}\begin{tabular}[c]{@{}l@{}}Single-qubit dissipation,\\Lieb--Robinson bound, $\Lambda_{\mathrm{loc}} = O(1)$\end{tabular}} & --- & $\widetilde O\ab\big(1/\eps^{2})$ & \cite{FMD+24} \\
 & No & {\renewcommand{\arraystretch}{1}\begin{tabular}[c]{@{}l@{}}Bounded degree,\\known support, $\Lambda_{\mathrm{loc}} = O(1)$\end{tabular}} & --- & $\widetilde O\ab\big(1/\eps^{2})$ & \cite{MBFR26} \\
 & No & Bounded degree & $\widetilde O\ab\big(\Lambda_{\mathrm{loc}}/\eps^{2})$ & $\widetilde O\ab\big(\Lambda_{\mathrm{loc}}^2/\eps^{2})$ & \cite{LTW26,ACG+26}\\
 & No & Approximate interaction degree $d$ & $\widetilde O\ab\big(d^{2}\Lambda_{\mathrm{loc}}/\eps^{2})$ & $\widetilde O\ab\big(d^{2}\Lambda_{\mathrm{loc}}^2/\eps^{2})$ & \cite{LTW26}\\
 & No & $\Lambda_{\mathrm{loc}} = O(1)$ & --- & $\widetilde O\ab\big(n^{2k}/\eps^{2})$ & \cite{MBFR26}\\
 & No & None & $\widetilde O\ab\big(n^{2k-2}\Lambda_{\mathrm{loc}}/\eps^{2})$ & $\widetilde O\ab\big(n^{2k-2}\Lambda_{\mathrm{loc}}^2/\eps^{2})$ & \cite{LTW26,ACG+26} \\
 & Arbitrary & None & $\Omega\ab\big(\Lambda_{\mathrm{loc}}/\eps^{2})$ & $\Omega\ab\big(\Lambda_{\mathrm{loc}}^2/\eps^{2})$ & \cite{ACG+26} \\
\midrule
\multirow{3}{*}{$M$-sparse} & No & Coefficient gap $\Omega(\eps)$ & --- & $\widetilde O(M^4\nu^2/\eps^4)$ & \cite{IRG+26}\\
 & $n+O(\log M)$ & None & $\widetilde O(M/\eps^2+M^2/\eps)$ & --- & \cite{RIG+26}\\
 & No & None & $\widetilde O(M_0^2\Gamma/\eps^2)$ & $\widetilde O(M_0^2\Gamma^2/\eps^4)$ & \cite{ZG26a}\\
\midrule
General & No & None & $\widetilde O\ab\big(\Lambda/\eps^{2})$ & $\widetilde O\ab\big(\Lambda^2/\eps^{2})$ & This work\\
\bottomrule
\end{tabular}
\end{adjustbox}
\caption{Comparison of prior results for learning Lindbladians from time evolution in terms of total evolution time $\Ttot$ and experiment count $\Nexp$. Polylogarithmic factors in $n,M,1/\eps,\Lambda_{\mathrm{loc}},\Lambda$ are suppressed. A dash indicates that the corresponding resource is not stated separately in the cited work. Here, $k$ denotes locality; $M$ denotes sparsity; $\Lambda_{\mathrm{loc}}$ and $\Lambda$ bound the local and global dynamical strengths, respectively; $d$ is the approximate interaction degree used in~\cite{LTW26}; $\nu$ is the instance-dependent conditioning factor for linear inversion in~\cite{IRG+26}; $\Gamma \geq \Lambda$ is the global strength bound of~\cite{ZG26a}; and $M_0 \geq M$ is the sparsity budget defined in~\cite{ZG26a}.}
\label{tab:lindblad_comparison}
\end{table}

\paragraph{Lindbladian learning.}
The reconstruction of open-system dynamics has been studied using steady states, model-fitting and estimation methods for structured noise models, experimental Lindbladian tomography, and non-Markovian noise learning~\cite{Buz98,BHPC03,BGP+20,OKC23,BW24,LSK+25,BSM+26,MECT25,FMD+24,FMRW25}. The works most relevant to ours learn from time evolution. Fran{\c c}a et~al.~combine polynomial derivative estimation with shadow process tomography for geometrically local Hamiltonians and Markovian noise obeying a Lieb--Robinson bound~\cite{FMD+24}, and later extend this framework to local time-dependent dynamics~\cite{FMRW25}.

Ivashkov et~al.~initiated ansatz-free Lindbladian learning in situ for an $M$-sparse generator using only product Pauli preparations and Pauli measurements~\cite{IRG+26}, with end-to-end experiment complexity $\widetilde O\ab(M^4\nu^2/\eps^4)$. Under the help of quantum memory, Romanov et~al.~used recursive stabilizer-code reshaping and short-time Choi observables to learn arbitrary sparse generators in total evolution time $\widetilde O\ab(M/\eps^2+M^2/\eps)$, at the cost of $n+O(\log M)$ ancillas and interleaved Clifford control~\cite{RIG+26}. In a recent work~\cite{ZG26a}, Zhou and Gong propose an in-situ algorithm that achieves $\widetilde O\ab(M_0^2\Gamma^2/\eps^4)$ experiment count with $\widetilde O\ab(M_0^2\Gamma/\eps^2)$ total evolution time, with specifically defined sparsity $M_0 \geq M$ and norm bound $\Gamma \geq \Lambda$. Their protocol tolerates calibrated Pauli-diagonal state-preparation-and-measurement (SPAM) errors~\cite{ZG26a}. SPAM-robust ansatz-free learning is also studied in~\cite{Sin26}.

For local Lindbladians with low dissipative intersection, Arad et~al.~\cite{ACG+26} proposed an algorithm that achieves total evolution time $\widetilde O\ab\big(\Lambda_{\mathrm{loc}}\log n/\eps^{2})$ with a matching lower bound on $\Lambda_{\mathrm{loc}}$ and $\eps$ up to logarithmic factors. Here $\Lambda_{\mathrm{loc}}$ is the local dynamical strength on a single qubit. In terms of the approximate degree $d$ of the interaction graph in~\cite{LTW26}, Lewis, Tang, and Wright provided a protocol that obtains total evolution time $O\ab(\Lambda_{\mathrm{loc}}d^2\log n/\eps^2)$ and also covers decaying interactions. Without degree assumptions, both works learn generic $k$-local Lindbladians with total evolution time $\widetilde O\ab\big(n^{2k-2}\Lambda_{\mathrm{loc}}/\eps^2)$. For bounded local strength $\Lambda_{\mathrm{loc}}=O(1)$, M\"obus et~al.~\cite{MBFR26} obtained an algorithm using $\widetilde O\ab(n^{2k}/\eps^{2})$ experiments for entrywise recovery, improving to $\widetilde O\ab(1/\eps^{2})$ with a supplied bounded-degree support and to $\widetilde O\ab(n^{4k}/\eps^{2})$ for projection to a valid generator in diamond norm.

In sharp contrast, our algorithm learns an arbitrary Lindbladian without assumptions of locality, sparsity, known support, or a coefficient gap, using ancilla-free and control-free experiments. Its experiment count and total evolution time are optimal up to polylogarithmic factors, matching the lower bounds of~\cite{ACG+26}. Even for learning generic $k$-local Lindbladians with $\Lloc$ as input, our algorithm in \Cref{cor:main_klocal_informal} still outperforms prior results~\cite{ACG+26, LTW26} by noting that $\Lambda = O(n\Lloc)$. A detailed comparison with previous algorithms is provided in \cref{tab:lindblad_comparison}.

\subsection{Technical overview}

In our Lindbladian-learning algorithm, support learning and coefficient learning rely on the same primitive: Bell-basis matrix elements of the time-evolution Choi state. The endpoint derivatives of these elements encode the coefficients of $\cL$. For $d=2^n$, let $\ket{\Phi_0} \coloneq \frac{1}{\sqrt d}\sum_{j=0}^{d-1}\ket{j}\ket{j}$ be the maximally entangled state, and let $\ket{\Phi_a} \coloneq (I\otimes P_a)\ket{\Phi_0}$. Denote the Choi state of the time-$t$ channel by
\[
J_t\coloneq
(\mathrm{id}\otimes e^{t\cL})\ab\big(\ketbra{\Phi_0}{\Phi_0}).
\]
The channel $\chi$-matrix, defined by $e^{t\cL}(\rho)=\sum_{a,b}\chi_{ab}(t)P_a\rho P_b$, is exactly the Bell-basis matrix of $J_t$. Its endpoint derivative is the $\chi$-matrix of the Lindbladian, defined by $\cL(\rho)=\sum_{a,b}\chi_{ab}P_a\rho P_b$:
\[
g_{ab}(t)\coloneq \bra{\Phi_a}\,J_t\,\ket{\Phi_b}=\chi_{ab}(t),
\qquad
g'_{ab}(0) \coloneq \bra{\Phi_a}\,J'_0\,\ket{\Phi_b}=\chi_{ab}.
\]
The correspondence between the coefficients and the $\chi$-matrix of the Lindbladian gives
\[
h_a = \Imag\bra{\Phi_0}\, J'_0 \, \ket{\Phi_a} = \Imag g'_{0a}(0), \qquad \gamma_{ab}=\bra{\Phi_a}\, J'_0\, \ket{\Phi_b} = g'_{ab}(0).
\]
Although $J_t$ organizes the analysis, it is \emph{never} prepared physically. The support-learning stage uses positivity of $J_t$ to convert large coefficients into observable Pauli-error events, while the coefficient-learning stage estimates selected matrix elements of $J_t$ through system-only classical shadows. We first describe the endpoint-differentiation primitive common to both stages, and then the two stages in turn.

\paragraph{Endpoint differentiation at the standard quantum limit.}
In both stages, the quantity of interest is the derivative at $t=0$ of a signal that can be sampled only at positive times: a Bell coherence $g_{ab}(t)$ of the Choi state in the support-learning algorithm, or a witness signal $f_A(t)$ in the coefficient-learning algorithm. The naive finite difference $[f(\tau)-f(0)]/\tau$ has bias $O(\Lambda^2\tau)$, which forces $\tau=O(\eps/\Lambda^2)$; dividing the statistical error by such a small $\tau$ inflates the sample complexity to $O(1/\eps^4)$~\cite{BGP+20,ZYLB21}. Following the interpolation technique of recent generator-learning algorithms~\cite{Car24,FMD+24,GCC24,HLS+26,IRG+26,ACG+26}, we instead place $q+1$ Chebyshev--Lobatto nodes on a short interval $[0,T]$ and differentiate the degree-$q$ interpolating polynomial at the endpoint $t=0$. The resulting approximation has the form $f'(0)\approx\sum_jw_jf(t_j)$, where the $w_j$ are precomputed weights. The target signals obey the uniform derivative bounds $\abs[\big]{g^{(r)}_{ab}(t)}\leq\Lambda^r$ and $\abs[\big]{f^{(r)}_A(t)}\leq2\Lambda^r$, so fixing $T=1/\Lambda$ makes the interpolation bias decay as $\Lambda/(q+1)!$, and $q=\widetilde\Theta(1)$ nodes suffice to achieve bias $\eps/2$. The positive-time weights have total absolute weight $W=\sum_{j>0}\abs{w_j}=O(q^2\Lambda)$. With a constant-variance estimator for the finite-time signal, estimating one derivative to accuracy $\eta$ costs $\widetilde O(\Lambda^2/\eta^2)$ experiments and total evolution time at most $\widetilde O(\Lambda/\eta^2)$. Support learning uses this rule in reverse to certify finite-time population, whereas coefficient learning uses it directly to estimate derivatives.

\paragraph{Support learning by displacement sampling.}
Let $\lambda_a(t)=g_{aa}(t)$. Since $J_t$ is a quantum state, the $\{\lambda_a(t)\}_a$ form a probability distribution, and positivity gives $\abs{g_{ab}(t)}^2\leq\lambda_a(t)\lambda_b(t)$. Normalize the positive-time endpoint weights as $p_j=\abs{w_j}/W$ and define the Chebyshev mixture $\mu(a)=\sum_{j=1}^qp_j\lambda_a(t_j)$. Applying the endpoint estimate contrapositively shows
\[
\abs{g'_{ab}(0)}\geq\eta
\quad\Longrightarrow\quad
\mu(a),\mu(b)\geq
\alpha\coloneq\frac{\eta^2}{4W^2}
=\widetilde\Omega\ab\Big(\frac{\eta^2}{\Lambda^2}).
\]
Thus $\abs{\gamma_{ab}}\geq\eta$ makes both $a$ and $b$ $\alpha$-heavy labels in $\mu$, and $\abs{h_a}\geq\eta$ makes $a$ an $\alpha$-heavy label in $\mu$. Because $\mu$ is a probability distribution, at most $1/\alpha=\widetilde O(\Lambda^2/\eta^2)$ labels are $\alpha$-heavy, so at most $1/\alpha^2=O(\Lambda^4/\eta^4)$ dissipative coordinates can be visible at resolution $\eta$.

The distribution $\mu$ can be sampled physically without measuring a Bell coherence. Sampling $j$ according to $p_j$ and Pauli-twirling the corresponding evolution produces the Pauli channel
\[
\closure{\cE}(\rho)
=\sum_{j=1}^q p_j\frac{1}{4^n}\sum_r
P_r e^{t_j\cL}(P_r\rho P_r)P_r
=\sum_a \mu(a)P_a\rho P_a.
\]
For $c=(z,x)$, a computational-basis experiment reveals the displacement $x$, while an $X$-basis experiment reveals $z$. Sampling these two marginals yields lists containing the $x$- and $z$-components of every heavy label. Their Cartesian product gives a candidate set $\hat\cS$, from which we return $\hat\cS_H=\hat\cS$ and $\hat\cS_D=\hat\cS\times\hat\cS$. False combinations are harmless because the next stage estimates their coefficients.

\paragraph{Coefficient witnesses on the virtual Choi state.}
Given a candidate family, polarization reduces its off-diagonal Bell entries to diagonal rank-one functionals. For a normalized operator $A$, let $\ket{\Phi_A}=(I\otimes A)\ket{\Phi_0}$ and $f_A(t)=\bra{\Phi_A}J_t\ket{\Phi_A}$. Together with the diagonal witnesses $P_a$, we use
\begin{alignat*}{2}
A_{ab}& = \frac{P_a+P_b}{\sqrt2},\quad
& f_{A_{ab}}(t)&=\frac{f_{P_a}(t)+f_{P_b}(t)}{2}+\Real g_{ab}(t),\\
A^{\iu}_{ab}& =\frac{P_a+\iu P_b}{\sqrt2},
& f_{A^{\iu}_{ab}}(t)&=\frac{f_{P_a}(t)+f_{P_b}(t)}{2}-\Imag g_{ab}(t).
\end{alignat*}
Differentiating these identities recovers $\gamma_{ab}$, whereas the imaginary witness for the boundary pair $(a,0)$ recovers $h_a$; see~\cref{lem:witness_signals,lem:witness_identities}. Each coefficient uses at most four witness derivatives, so a candidate family of size $M$ generates at most $4M+1$ witnesses.

\paragraph{Ancilla-free process shadows with constant variance.}
The witness signals appear to require the doubled-space state $J_t$, but they can be estimated on the original system alone. Each experiment prepares a uniformly random stabilizer state $\psi$, evolves it under $e^{t\cL}$, measures in an independently random Clifford basis, and records the resulting stabilizer projector $\phi$~\cite{HKP20,LLC24}. The associated classical snapshot satisfies
\[
\hat J_t =\ab\big((d+1)\psi^\top - I) \otimes \ab\big((d+1)\phi-I),
\qquad \E[\hat J_t] =J_t.
\]
A similar snapshot was used for quantum process tomography~\cite{BGM26}, but with Haar-random inputs and Haar-random output bases in place of our Clifford ensembles. We then construct the witness estimator $X_A = \tr\ab\big(\ketbra{\Phi_A}{\Phi_A}\hat J_t)$, which satisfies
\[
\E[X_A] = f_A(t),\qquad \E[X_A^2] \leq 160.
\]
The constant second moment is independent of $n$. Conditioned on the input, the outcome follows a Born-weighted stabilizer distribution, and the stabilizer third moment cancels the apparent dimension growth. Median of means controls the unbounded tails and estimates all witnesses simultaneously. The candidate size $M$ therefore enters only through $\log M$. Snapshot values are computed from stabilizer tableaux in $O(n^3)$ time without materializing $\hat J_t$~\cite{AG04}.

Finally, we run support learning at threshold $\eta=\eps$, estimate every candidate coefficient, and output zero elsewhere. The resulting costs are
\[
\Nexp = \widetilde O\ab\Big(\frac{\Lambda^2}{\eps^2}\log\frac{1}{\delta}),
\qquad
\Ttot = \widetilde O\ab\Big(\frac{\Lambda}{\eps^2}\log\frac{1}{\delta}),
\]
with polynomial classical running time.

\subsection{Discussion}
Our upper bounds match the lower bounds of~\cite{ACG+26} up to polylogarithmic factors. This coincidence has two consequences for the complexity of learning quantum dynamics. First, in the in situ setting there is no gap between learning closed systems and learning open systems. Second, ancillary systems, namely quantum memory, do not reduce the cost of in situ Lindbladian learning. We discuss the two points in turn.

\paragraph{Hamiltonian versus Lindbladian learning in situ.}
For control-free and ancilla-free Hamiltonian learning, the optimal total evolution time is $\Theta\ab\big(\Lambda\log(\Lambda/\eps)/\eps^{2})$~\cite{ZG26}. Our algorithm learns every Hamiltonian and dissipative coefficient of an arbitrary Lindbladian with total evolution time $\widetilde O(\Lambda/\eps^{2})$, which matches the lower bound of~\cite{ACG+26} up to polylogarithmic factors. Passing from closed to open dynamics therefore costs only polylogarithmic overhead, although the dissipative coordinates are quadratically more numerous than the Hamiltonian ones and each probed signal mixes many of them at once. The two problems do separate under stronger access. With coherent control, Hamiltonian learning attains the Heisenberg scaling $1/\eps$~\cite{HTFS23,BLMT24a,HMG+25,ST25}, whereas the lower bound of~\cite{ACG+26} holds against fully adaptive ancilla-assisted protocols and keeps Lindbladian learning at the standard quantum limit $1/\eps^{2}$. The gap between closed-system and open-system learning is thus a consequence of coherent control. Once the experimenter is confined to in situ access, dissipation is not an obstruction.

\paragraph{Quantum memory does not help.}
Our algorithms use no ancillary systems. The Choi state whose matrix elements they estimate is never physically prepared, and the two-sided process shadow replaces the entangled reference register by classical randomness in the form of random stabilizer inputs and random Clifford-basis measurements. The resulting cost still matches, up to polylogarithmic factors, a lower bound that holds for adaptive protocols with arbitrarily many ancillas and entangled measurements~\cite{ACG+26}. For in situ Lindbladian learning at short times, quantum memory therefore improves the cost by at most polylogarithmic factors. The situation differs in closely related tasks, such as state tomography, process tomography, and Pauli-channel learning, where entangling the system with a quantum memory gives up to exponential advantages~\cite{ACQ22,CZSJ22,CCHL22,HBC+22,CLL24,COZ+24,CG25,BGM26}. Within Lindbladian learning, the QEC-based sparse algorithm of~\cite{RIG+26} uses $n+O(\log M)$ ancillas together with Clifford control interleaved with the evolution. Our upper bound explains why no such advantage appears here. Every quantity that a quantum memory could help to extract is a Bell matrix element of the virtual Choi state, and each such element is already accessible at constant variance from single-copy, ancilla-free experiments. The bottleneck is the $O(\Lambda t)$ signal that a single short-time experiment accumulates, and a quantum memory cannot amplify this signal. Whether memory or control helps for restricted variants of the problem---for example, learning the Hamiltonian part alone under weak dissipation---is an open question.

\section{Background}\label{sec:background}

\subsection{Notation}
For $n\in\N$, we write $[n]\coloneq\{1,\dots,n\}$. We denote by $\F_2=\{0,1\}$ the binary field and by $\F_2^{2n}$ the $2n$-dimensional vector space over $\F_2$. Vector addition in $\F_2^{2n}$ (bitwise XOR) is written $\oplus$. For a finite set $S$, $\abs{S}$ denotes its cardinality. The indicator $\mathbf{1}[E]$ equals $1$ if the predicate $E$ holds and $0$ otherwise. We use the standard asymptotic notation $O(\cdot)$, $\Omega(\cdot)$, and $\Theta(\cdot)$ and write $\widetilde O(\cdot)$, $\widetilde\Omega(\cdot)$, and $\widetilde\Theta(\cdot)$ when suppressing polylogarithmic factors.

All Hilbert spaces are finite-dimensional, and all matrix norms are Schatten norms. For a linear operator $A$ on a Hilbert space $\mathscr{H}$, $\norm{A}_{p}$ denotes its Schatten-$p$ norm. In particular, $\norm{A}_{1}$ is the trace norm, $\norm{A}_2$ is the Hilbert--Schmidt norm, and $\norm{A}_{\infty}$ is the operator norm. The set $\cB(\mathscr{H})$ of linear operators on $\mathscr{H}$ is itself a Hilbert space under the Hilbert--Schmidt inner product $\langle A,B\rangle\coloneq\tr\ab(A^\dagger B)$. A Hermitian operator $A$ is positive semidefinite, written $A\succeq0$, if all its eigenvalues are nonnegative.

A linear map $\cN\colon\cB(\mathscr{H})\to\cB(\mathscr{H})$ is called a \emph{superoperator}. The induced operator norm of a superoperator is
\[
  \norm{\cN}_{\infty\to\infty}
  \coloneq \sup_{A\neq 0}\frac{\norm{\cN(A)}_{\infty}}{\norm{A}_{\infty}},
\]
and its adjoint $\cN^\dag$ is defined with respect to the Hilbert--Schmidt inner product by $\langle A,\cN(B)\rangle=\langle\cN^\dag(A),B\rangle$ for all $A,B\in\cB(\mathscr{H})$. In particular, for a Hermitian operator $A$ and a Hermiticity-preserving map $\cN$,
\[
  \tr\ab\big(A\,\cN(B))=\tr\ab\big(\cN^\dag(A)\,B),
\]
which is the usual duality between the Schr\"odinger picture (evolution of states) and the Heisenberg picture (evolution of observables).

\subsection{Pauli operators}
We work throughout in the basis of tensor products of Pauli matrices.

\begin{definition}[Pauli matrices]
    The Pauli matrices are the following $2\times2$ Hermitian matrices:
\[
    I = \begin{pmatrix} 1 & 0 \\ 0 & 1 \end{pmatrix}, \quad
    X = \begin{pmatrix} 0 & 1 \\ 1 & 0 \end{pmatrix}, \quad
    Y = \begin{pmatrix} 0 & -\iu \\ \iu & 0 \end{pmatrix}, \quad
    Z = \begin{pmatrix} 1 & 0 \\ 0 & -1 \end{pmatrix}.
\]
\end{definition}

These matrices are unitary and Hermitian, and the nonidentity Pauli matrices are traceless. An $n$-qubit Pauli operator is a tensor product $P_1\otimes\cdots\otimes P_n$ with $P_i\in\{I,X,Y,Z\}$ for all $i\in[n]$, and we denote by $\cP_n$ the set of all $4^n$ $n$-qubit Pauli operators. For ease of notation and computation, we encode each Pauli operator by a binary vector, so that operator multiplication, up to a phase, becomes addition modulo two.

\begin{definition}[Binary representation of Pauli operators]\label{def:binary_rep}
    Given a Pauli operator $P_a$, its \emph{binary representation} is the vector $a=(z,x)\in\F_2^{2n}$ such that
    \[
        P_a=\bigotimes_{i=1}^n (P_a)_i=\bigotimes_{i=1}^n (-\iu)^{z_i \cdot x_i} Z^{z_i}X^{x_i},
    \]
    where each $z_i,x_i\in \{0,1\}$.
\end{definition}
The correspondence $a\mapsto P_a$ is a bijection between $\F_2^{2n}$ and $\cP_n$, and the zero vector $0\in\F_2^{2n}$ labels the identity. We refer to elements of $\F_2^{2n}$ as \emph{Pauli labels} and identify $P_a$ with its label $a$ when no confusion can arise. We write $(P_a)_i\in\{I,X,Y,Z\}$ for the single-qubit Pauli factor of $P_a$ acting on the $i$th qubit. Two basic quantities associated with a Pauli operator are its support and weight, which record the qubits on which it acts nontrivially and their number, respectively.

\begin{definition}[Support and weight of a Pauli operator]\label{def:pauli_support}
    For a Pauli operator $P_a \in \cP_n$, its \emph{support} $\supp(a)~\subseteq~[n]$ is the subset of qubits on which $P_a$ acts nontrivially, and its \emph{weight} is the size of the support:
    \[
        \supp(a)\coloneq \{i\in[n]:(P_a)_i\neq I\},
        \qquad
        \wt(a)\coloneq \abs{\supp(a)}.
    \]
    For a pair of Pauli labels $(a,b)$, we set
    \[
      \supp(a,b)\coloneq \supp(a)\cup\supp(b),
      \qquad
      \wt(a,b)\coloneq \abs{\supp(a,b)}.
    \]
\end{definition}

The multiplication and commutation structure of $\cP_n$ is captured at the level of labels by the symplectic inner product.

\begin{definition}[Symplectic inner product]\label{def:sinprod}
    Given two Pauli labels $a=(z,x)$ and $b=(z',x')$ in $\F_2^{2n}$, their symplectic inner product is
    \[
      \sinprod{a}{b} \coloneq z\cdot x' + x\cdot z' \pmod 2.
    \]
\end{definition}

The Lindbladian generator is built from commutators and anticommutators, which we recall next.
\begin{definition}[Commutator and anticommutator]
    Given operators $A$ and $B$, the commutator of $A$ and $B$ is defined as
    \[
        [A,B] = AB - BA,
    \]
    and the anticommutator is defined as
    \[
        \{A,B\} = AB + BA.
    \]
\end{definition}

Direct computation from \cref{def:binary_rep} yields the following standard properties, which we use throughout. In particular, the symplectic inner product determines whether two Pauli operators commute. For all $a,b\in\F_2^{2n}$,
\begin{equation}\label{eq:pauli_commutation}
    P_aP_b=(-1)^{\sinprod{a}{b}}P_bP_a,
\end{equation}
so $P_a$ and $P_b$ commute if $\sinprod{a}{b}=0$ and anticommute if $\sinprod{a}{b}=1$. Since Pauli operators square to the identity, \cref{eq:pauli_commutation} also gives the conjugation rule
\begin{equation}\label{eq:pauli_conjugation}
    P_bP_aP_b=(-1)^{\sinprod{a}{b}}P_a.
\end{equation}

Pauli operators are closed under multiplication up to a phase.
\begin{fact}\label{fact:xi_ab}
Given $P_a,P_b \in \cP_n$, their product has the form
\[
  P_aP_b = \xi_{ab}P_{a\oplus b},
\]
where the phase $\xi_{ab}\in\{\pm 1,\pm \iu\}$ is computable in time $O(n)$ given $a$ and $b$. Moreover, $\xi_{aa}=1$, $\xi_{ab}\xi_{ba}=1$, and $\xi_{ab} \in \{\pm 1\}$ if $P_a$ and $P_b$ commute, while $\xi_{ab} \in \{\pm \iu\}$ if they anticommute.
\end{fact}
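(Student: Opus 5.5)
We verify the claims directly from the binary representation in \cref{def:binary_rep}, working qubit by qubit. Write $a=(z,x)$ and $b=(z',x')$, so that
\[
P_aP_b=\bigotimes_{i=1}^n (-\iu)^{z_ix_i+z'_ix'_i}\,Z^{z_i}X^{x_i}Z^{z'_i}X^{x'_i}.
\]
On each qubit, commute $X^{x_i}$ past $Z^{z'_i}$ using $XZ=-ZX$; this produces a sign $(-1)^{x_iz'_i}$. Then combine the powers, using $Z^2=X^2=I$ with exponents added mod $2$. The result is $Z^{z_i\oplus z'_i}X^{x_i\oplus x'_i}$, which equals $(P_{a\oplus b})_i$ up to the factor $(-\iu)^{(z_i\oplus z'_i)(x_i\oplus x'_i)}$. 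Multiplying the per-qubit phases gives an explicit formula
\[
\xi_{ab}=\prod_{i=1}^n (-\iu)^{z_ix_i+z'_ix'_i}\,(-1)^{x_iz'_i}\,\iu^{(z_i\oplus z'_i)(x_i\oplus x'_i)}.
\]
This is a product of fourth roots of unity, so $\xi_{ab}\in\{\pm1,\pm\iu\}$. It can be accumulated in a single pass by tracking the total exponent of $\iu$ modulo $4$, which takes $O(n)$ time.

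For the algebraic identities:
\begin{itemize}
\item $\xi_{aa}=1$ holds because $P_a$ is Hermitian and unitary, so $P_a^2=I=P_0$.
\item $\xi_{ab}\xi_{ba}=1$: since $P_a,P_b$ are Hermitian, $(P_aP_b)^\dagger=P_bP_a$. Hence $\xi_{ba}P_{a\oplus b}=\overline{\xi_{ab}}P_{a\oplus b}$, so $\xi_{ba}=\overline{\xi_{ab}}=\xi_{ab}^{-1}$, the last equality because $\abs{\xi_{ab}}=1$.
\item Commuting versus anticommuting: by \cref{eq:pauli_commutation}, $\xi_{ba}=(-1)^{\sinprod{a}{b}}\xi_{ab}$. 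Combining with $\xi_{ba}=\xi_{ab}^{-1}$ gives $\xi_{ab}^2=(-1)^{\sinprod{a}{b}}$. So $\xi_{ab}=\pm1$ when the operators commute and $\xi_{ab}=\pm\iu$ when they anticommute.
\end{itemize}

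The only step needing care is the per-qubit bookkeeping of the $(-\iu)^{z_ix_i}$ normalization in \cref{def:binary_rep}, which makes each factor Hermitian (e.g.\ $-\iu ZX=Y$). None of the identities above depends on the exact phase formula, only on Hermiticity and unitarity, so a slip in that bookkeeping affects only the explicit expression for $\xi_{ab}$ and not the claimed properties.
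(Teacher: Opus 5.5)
Your proof is correct and follows the route the paper indicates. The paper gives no written proof and presents this as a standard consequence of direct computation from the binary representation in \cref{def:binary_rep}. Your per-qubit phase formula (which checks out, e.g.\ it gives $XZ=-\iu Y$ and $YY=I$) fills in that computation, and your Hermiticity/unitarity arguments correctly give $\xi_{aa}=1$, $\xi_{ba}=\overline{\xi_{ab}}$, and $\xi_{ab}^2=\pm1$ according to commutation.
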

Distinct Pauli operators are orthogonal under the Hilbert--Schmidt inner product.
\begin{fact}[Orthogonality of Pauli operators]\label{fact:pauli_inprod}
    Given $P_a,P_b \in \cP_n$, we have
    \[
        \tr\ab(P_aP_b) = 2^n\, \mathbf{1}[a=b].
    \]
    Consequently, $\{2^{-n/2}P_a\}_{a\in\F_2^{2n}}$ is an orthonormal basis of $\cB(\mathscr{H})$: every $A\in\cB(\mathscr{H})$ has the unique expansion $A=2^{-n}\sum_{a}\tr\ab(P_aA)\,P_a$.
\end{fact}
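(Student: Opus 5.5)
The statement to prove is Fact~\ref{fact:pauli_inprod}: $\tr(P_aP_b)=2^n\mathbf{1}[a=b]$, and the rescaled Paulis form an orthonormal basis.

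The plan is to reduce to single qubits using multiplicativity of the trace over tensor products. Write $P_a=\bigotimes_i (P_a)_i$ and $P_b=\bigotimes_i (P_b)_i$. Then $P_aP_b=\bigotimes_i (P_a)_i(P_b)_i$, and therefore
\[
\tr(P_aP_b)=\prod_{i=1}^n \tr\ab\big((P_a)_i(P_b)_i).
\]
By \cref{fact:xi_ab} applied with $n=1$, each factor satisfies $(P_a)_i(P_b)_i=\xi\, Q$ for a phase $\xi$ and a single-qubit Pauli $Q$ whose label is $a_i\oplus b_i$.

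The key step is the single-qubit computation. If $a_i=b_i$, then $(P_a)_i^2=I$, since Pauli matrices are Hermitian and unitary, so the factor equals $\tr I=2$. If $a_i\neq b_i$, then $Q\neq I$. Since the nonidentity Pauli matrices are traceless, the factor is $0$.

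Taking the product over $i$, we get $2^n$ when $a=b$ coordinatewise, and $0$ as soon as any coordinate differs. This gives the trace formula.

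For the basis claim, note that $P_a^\dagger=P_a$. Hence $\langle 2^{-n/2}P_a,2^{-n/2}P_b\rangle=2^{-n}\tr(P_aP_b)=\mathbf{1}[a=b]$, so the family is orthonormal. There are $4^n$ of these operators, which equals $\dim\cB(\mathscr{H})=(2^n)^2$, so the family is an orthonormal basis.

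The expansion then follows from the standard orthonormal-basis formula:
\[
A=\sum_a\langle 2^{-n/2}P_a,A\rangle\,2^{-n/2}P_a=2^{-n}\sum_a\tr(P_aA)P_a.
\]
Uniqueness of the coefficients is immediate from linear independence.

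There is no real obstacle here. The only point needing care is checking that the phase conventions in \cref{def:binary_rep}, including the factor $(-\iu)^{z_ix_i}$, still give Hermitian single-qubit factors equal to $\pm$ the standard $I,X,Y,Z$. Indeed, $-\iu ZX=Y$, so this holds and the single-qubit argument applies.
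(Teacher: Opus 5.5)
Your proof is correct and complete. You factorize the trace over qubits, check the single-qubit case (including $-\iu ZX=Y$ under the phase convention of \cref{def:binary_rep}), and then use a dimension count to get the basis and the expansion; this is exactly the standard direct computation, which the paper leaves implicit by stating the result as a Fact without proof.
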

We record the character-sum orthogonality of the symplectic form, which underlies properties of the Walsh--Hadamard transform.
\begin{fact}[Symplectic character orthogonality]\label{fact:char_orth}
    For every $a\in\F_2^{2n}$,
    \[
        \frac{1}{4^n}\sum_{w\in\F_2^{2n}}(-1)^{\sinprod{a}{w}}=\mathbf{1}[a=0].
    \]
\end{fact}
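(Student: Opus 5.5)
The claim is the identity $\frac{1}{4^n}\sum_{w\in\F_2^{2n}}(-1)^{\sinprod{a}{w}}=\mathbf{1}[a=0]$. I would prove it by factorizing the character sum over the $2n$ binary coordinates. Write $a=(z,x)$ and $w=(z',x')$. By \cref{def:sinprod}, $\sinprod{a}{w}=\sum_{i=1}^n (z_ix'_i + x_iz'_i)\pmod 2$. Since $(-1)^{(\cdot)}$ turns sums modulo two into products, the summand splits as
\[
(-1)^{\sinprod{a}{w}}=\prod_{i=1}^n(-1)^{z_ix'_i}\prod_{i=1}^n(-1)^{x_iz'_i}.
\]
The coordinates $z'_i$ and $x'_i$ range independently over $\F_2$, so the sum over $w$ becomes a product of $2n$ one-bit sums:
\[
\sum_{w}(-1)^{\sinprod{a}{w}}=\prod_{i=1}^n\Bigl(\sum_{x'_i\in\F_2}(-1)^{z_ix'_i}\Bigr)\prod_{i=1}^n\Bigl(\sum_{z'_i\in\F_2}(-1)^{x_iz'_i}\Bigr).
\]

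Each one-bit factor equals $1+(-1)^{c}$, where $c$ is the corresponding bit of $a$. It is $2$ if $c=0$ and $0$ if $c=1$. The full product is therefore $2^{2n}=4^n$ exactly when every bit of $z$ and $x$ vanishes, that is, when $a=0$, and it is $0$ otherwise. Dividing by $4^n$ gives the claim.

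There is no real obstacle here. The only point to check is that the symplectic pairing sends $w$ to coordinates swapped relative to $a$: $z$ pairs with $x'$ and $x$ pairs with $z'$. This swap is a bijection of $\F_2^{2n}$, so the product over independent coordinates is unaffected.

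As an alternative, one could use linearity. If $a\neq0$, the map $w\mapsto\sinprod{a}{w}$ is a nonzero linear functional, because the symplectic form is nondegenerate. It is therefore balanced: exactly half of the $w$ give value $0$ and half give value $1$, so the sum vanishes.
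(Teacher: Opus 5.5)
Your proof is correct, but your main argument takes a different route from the paper. The paper's proof is essentially your closing alternative.

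You factor the character sum coordinate by coordinate into $2n$ one-bit sums $1+(-1)^{c}$, which turns the identity into a direct computation. The paper instead argues linearly. For $a=(z,x)\neq0$ it exhibits an explicit witness with $\sinprod{a}{w}=1$: namely $w=(0,e_i)$ when $z_i=1$, and symmetrically when $x_i=1$. Hence $w\mapsto\sinprod{a}{w}$ is a nonzero $\F_2$-linear functional, which takes the values $0$ and $1$ equally often.

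Your factorization is fully self-contained. In effect it proves the balancing property for this particular functional by hand, using the product structure of the coordinate formula $z\cdot x'+x\cdot z'$. Your remark about the coordinate swap being a bijection is harmless but unnecessary, since each coordinate of $w$ already appears in exactly one factor.

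The paper's route needs only one witness $w$ and the general fact that a nonzero functional over $\F_2$ has a kernel of index two. It therefore carries over unchanged to any bilinear pairing for which $w\mapsto\langle a,w\rangle$ is nonzero whenever $a\neq0$. In your alternative, the appeal to nondegeneracy of the symplectic form is exactly what the paper checks explicitly with the witness $(0,e_i)$.
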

\begin{proof}
    For $a=0$ every summand is $1$. For $a=(z,x)\neq0$, the map $w\mapsto\sinprod{a}{w}$ is a nonzero $\F_2$-linear functional on $\F_2^{2n}$: if $z_i=1$ for some $i$, the label $w=(0,e_i)$ satisfies $\sinprod{a}{w}=1$, and symmetrically if $x_i=1$ for some $i$. A nonzero linear functional takes each of the values $0$ and $1$ on exactly half of $\F_2^{2n}$, so the sum vanishes.
\end{proof}

\subsection{Markovian open quantum systems}

An $n$-qubit quantum system is described by the Hilbert space $\mathscr{H}=(\bbC^2)^{\otimes n}$ of dimension $2^n$. A quantum state is a density operator $\rho\in\cB(\mathscr{H})$; that is, $\rho\succeq0$ and $\tr(\rho)=1$. A quantum channel is a completely positive and trace-preserving (CPTP) superoperator $\cE\colon\cB(\mathscr{H})\to\cB(\mathscr{H})$.

A quantum dynamical semigroup is a norm-continuous family of channels $\{\cE_t\}_{t\geq0}$ satisfying $\cE_0=\mathrm{id}$ and $\cE_{t_1+t_2}=\cE_{t_1}\circ\cE_{t_2}$. A Markovian open quantum system is described by such a semigroup. In finite dimensions, the Gorini--Kossakowski--Sudarshan--Lindblad (GKSL) theorem~\cite{GKS76,Lin76} characterizes the general form of the generator of these dynamics.

\begin{definition}[Lindbladian]
    A \emph{Lindbladian} is the generator of a quantum dynamical semigroup $\{\cE_t\}_{t\geq0}$ that can be written in the Pauli basis as
\begin{equation}\label{eq:Lindbladian_pauli}
      \cL(\rho) = \sum_{a\neq 0} -\iu h_a[P_a,\rho] + \sum_{a,b \neq 0} \gamma_{ab} \ab\Big(P_a \rho P_b-\frac{1}{2}\{P_b P_a,\rho\}), \qquad h_a \in \bbR \text{ and }\gamma_{ab} \in \bbC,
\end{equation}
where the \emph{Kossakowski matrix} $\gamma\coloneq(\gamma_{ab})_{a,b\neq0}\in\bbC^{(4^n-1)\times(4^n-1)}$ is Hermitian and positive semidefinite: $\gamma=\gamma^\dagger$ and $\gamma\succeq0$.
\end{definition}

The channel at time $t$ is $\cE_t=e^{t\cL}$, and $\rho(t)=e^{t\cL}(\rho(0))$ is the unique solution of the master equation
\begin{align*}
    \odv{\rho(t)}{t}=\cL(\rho(t)).
\end{align*}

We call $\{h_a\}$ the \emph{Hamiltonian coefficients} and $\{\gamma_{ab}\}$ the \emph{dissipative coefficients} of $\cL$. Grouping the two sums in \cref{eq:Lindbladian_pauli}, every Lindbladian splits into a \emph{Hamiltonian part} $\cH$ and a \emph{dissipative part} (or \emph{dissipator}) $\cD$:
\begin{equation}\label{eq:lindblad_split}
\cL=\cH+\cD, \qquad
\cH(\rho)=\sum_{a\neq 0}-\iu h_a [P_a,\rho],
\qquad
\cD(\rho)=\sum_{a,b\neq0}\gamma_{ab}\ab\Big(P_a\rho P_b -\frac{1}{2}\{P_b P_a,\rho\}).
\end{equation}

The central structural object of this work is the \emph{support} of a Lindbladian, which records its nonzero Pauli coefficients.
\begin{definition}[Support of a Lindbladian]\label{def:lindblad_support}
    Let $\cL$ be a Lindbladian with coefficients $\{h_a\}$ and $\{\gamma_{ab}\}$ as in \cref{eq:Lindbladian_pauli}. The \emph{Hamiltonian support} and the \emph{dissipative support} of $\cL$ are
    \[
    \cS_H \coloneq \cbra{a\neq0:h_a\neq0}, \qquad
    \cS_D \coloneq \cbra{(a,b):a,b\neq0,\ \gamma_{ab}\neq0},
    \]
    and the \emph{support} of $\cL$ is the pair $(\cS_H,\cS_D)$. For a threshold $\eta>0$, the \emph{$\eta$-heavy supports} collect the coordinates of magnitude at least $\eta$,
    \[
    \cS_H^{\geq\eta}\coloneq\cbra{a\neq0:\abs{h_a}\geq\eta},
    \qquad
    \cS_D^{\geq\eta}\coloneq\cbra{(a,b):a,b\neq0,\ \abs{\gamma_{ab}}\geq\eta}.
    \]
\end{definition}

A natural measure of the overall strength of the dynamics is the induced operator norm of the generator in the Heisenberg picture.
\begin{definition}[Dynamical strength]\label{def:dynamical_strength}
The \emph{dynamical strength} of a Lindbladian $\cL$ is $\norm{\cL^\dag}_{\infty\to\infty}$.
\end{definition}

Throughout the paper, we assume a known upper bound $\Lambda$ on the dynamical strength; that is, $\norm{\cL^\dag}_{\infty\to\infty}\leq\Lambda$.

\subsection{Bell basis and Choi state}\label{sec:bell_choi}
Let $d\coloneq2^n$ denote the dimension of the $n$-qubit Hilbert space $\mathscr{H}$, and let
\[
\ket{\Phi_0} \coloneq \frac{1}{\sqrt{d}} \sum_{j=0}^{d-1} \ket{j} \ket{j}
\]
be the maximally entangled state on $\mathscr{H}\otimes\mathscr{H}$. For an operator $A\in\cB(\mathscr{H})$, write
\[
\ket{\Phi_A} \coloneq (I \otimes A)\ket{\Phi_0}.
\]
Specifically, for a Pauli label $a$, define
\[
\ket{\Phi_a} \coloneq (I\otimes P_a) \ket{\Phi_0}.
\]
By the orthogonality of Pauli operators in \cref{fact:pauli_inprod}, we have
\[
\braket{\Phi_a}{\Phi_b} = \frac{1}{d} \tr(P_aP_b) = \mathbf{1}[a=b].
\]
Thus $\{\ket{\Phi_a}\}_{a\in\F_2^{2n}}$ is an orthonormal basis of $\mathscr{H}\otimes\mathscr{H}$, called the \emph{Bell basis}. We also use the standard identity
\begin{equation}\label{eq:mes_identity}
\bra{\Phi_0}(A\otimes B)\ket{\Phi_0} = \frac{1}{d}\tr(A^\top B)
\qquad\text{for all } A,B\in\cB(\mathscr{H}),
\end{equation}
which follows by expanding both sides in the computational basis. Every quantum channel is faithfully represented by a bipartite state via the Choi--Jamio{\l}kowski isomorphism~\cite{Jam72,Cho75}.

\begin{definition}[Choi state]\label{def:choi_state}
    The \emph{Choi state} of a quantum channel $\cN\colon\cB(\mathscr{H})\to\cB(\mathscr{H})$ is
    \[
        J(\cN)\coloneq(\mathrm{id}\otimes \cN)\ab\big(\ketbra{\Phi_0}{\Phi_0}).
    \]
\end{definition}

Because $\cN$ is completely positive and trace preserving, $J(\cN)$ is a quantum state on $\mathscr{H}\otimes\mathscr{H}$. Moreover, the map $\cN\mapsto J(\cN)$ is linear and injective, so the Choi state determines the channel completely.

\subsection{Derivative estimation by Chebyshev interpolation}\label{sec:chebyshev_interpolation}
For $q\in\N_{>0}$, let $f$ be a real-valued function that is $(q+1)$ times differentiable on a closed interval $\cI$. Given $q+1$ distinct nodes $x_0,x_1,\dots,x_q\in\cI$, let $p_q$ be the unique polynomial of degree at most $q$ that interpolates $f$ at these nodes; that is, $p_q(x_j)=f(x_j)$ for $j=0,1,\dots,q$.

We use Chebyshev--Lobatto interpolation, whose nodes cluster near the endpoints of $[-1,1]$.
\begin{definition}[Chebyshev--Lobatto nodes]\label{def:cheb_nodes}
    For $q\in\N_{>0}$, the $q+1$ Chebyshev--Lobatto nodes are
    \[
        x_j=\cos\ab\Big(\frac{j\pi}{q}),\quad j=0,1,\dots,q.
    \]
\end{definition}

Throughout this paper, we estimate endpoint derivatives of the Lindbladian dynamics at $t=0$ using real-time evolution. For $q\in\N_{>0}$ and $T>0$, map the $q+1$ nodes in \cref{def:cheb_nodes} to $[0,T]$ by
\[
t_j=\frac{T}{2}\ab\Big(1-\cos\ab\Big(\frac{j\pi}{q})),\qquad j=0,1,\dots,q.
\]
Thus $t_0=0$ and $t_q=T$. Given the values of $f$ at these nodes, the endpoint derivative of the degree-$q$ interpolant $p_q$ is
\[
p_q'(0)=\sum_{j=0}^q\ell_j'(0)\,f(t_j),
\qquad
\ell_j(t)=\prod_{\substack{0\leq i\leq q\\i\neq j}}\frac{t-t_i}{t_j-t_i},
\]
where $\ell_j$ is the $j$th Lagrange basis polynomial.
The next lemma bounds the interpolation error in this rule.

\begin{lemma}[Endpoint interpolation error]\label{lem:interpolation_error}
Let $f\colon\bbR_{\geq0}\to\bbR$ be $(q+1)$ times differentiable and satisfy $\abs{f^{(q+1)}(x)}\leq K^{q+1}$ for some $K>0$. Mapping $q+1$ Chebyshev--Lobatto nodes to $[0,T]$ with $T=1/K$ then gives
\[
\abs[\big]{f'(0)-p_q'(0)}\leq\frac{K}{(q+1)!}.
\]
For a complex-valued function $f\colon\bbR_{\geq0}\to\bbC$ satisfying the same derivative bound, applying the argument to its real and imaginary parts gives the bound $\sqrt{2}K/(q+1)!$.
\end{lemma}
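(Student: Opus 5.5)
Writing $e(t)\coloneq f(t)-p_q(t)$, the plan is to bound $e'(0)$ by an exact error formula followed by an explicit estimate on the nodal polynomial. The standard interpolation remainder gives, for every $t\in[0,T]$,
\[
e(t)=\frac{f^{(q+1)}(\xi_t)}{(q+1)!}\,\omega(t),\qquad \omega(t)\coloneq\prod_{i=0}^q(t-t_i).
\]
Since $t_0=0$, we have $\omega(0)=0$ and $e(0)=0$. Hence $e'(0)=\lim_{t\to0}e(t)/t$. Because $\omega(t)/t\to\omega'(0)=\prod_{i=1}^q(0-t_i)$, and $\xi_t$ remains in $[0,T]$ where $\abs{f^{(q+1)}}\le K^{q+1}$, this yields
\[
\abs{e'(0)}\le\frac{K^{q+1}}{(q+1)!}\prod_{i=1}^q t_i .
\]
The limit requires some care. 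Instead of assuming $\xi_t$ depends continuously on $t$, I would apply the divided-difference form: $e(t)/t = f[t_0,\dots,t_q,t]\cdot\omega(t)/t$. The divided difference equals $f^{(q+1)}(\zeta)/(q+1)!$ for some $\zeta\in[0,T]$ by the mean value theorem for divided differences. The bound therefore holds uniformly in $t$, and taking the limit is legitimate.

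It remains to bound $\prod_{i=1}^q t_i$. Every node lies in $[0,T]$, so $t_i\le T$ and the product is at most $T^q$. With $T=1/K$ this gives
\[
\abs{e'(0)}\le\frac{K^{q+1}T^q}{(q+1)!}=\frac{K}{(q+1)!},
\]
which is exactly the claimed bound. The Chebyshev clustering is not needed for this bound. Only $t_0=0$ and $t_i\le T$ are used.

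For complex-valued $f$, I would apply the argument separately to $\Real f$ and $\Imag f$. Each satisfies $\abs{(\Real f)^{(q+1)}}\le\abs{f^{(q+1)}}\le K^{q+1}$, and likewise for the imaginary part. Since interpolation is linear, $p_q=p_q^{\Real f}+\iu\,p_q^{\Imag f}$. Each error is therefore at most $K/(q+1)!$, and combining them in modulus gives $\sqrt2 K/(q+1)!$.

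The only delicate step is justifying the limit, that is, differentiating the error formula at a node. The divided-difference representation handles it cleanly, using only that $f$ is $(q+1)$ times differentiable.
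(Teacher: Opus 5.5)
Your proof is correct. It differs from the paper's only in how the derivative of the error at the node $t_0=0$ is extracted.

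The paper avoids any limit. It sets $c\coloneq r'(0)/\omega'(0)$, which is legitimate because $\omega'(0)=\prod_{j\geq1}(-t_j)\neq0$, and considers $F\coloneq r-c\,\omega$. This function vanishes at $t_0,\dots,t_q$ and also satisfies $F'(0)=0$. Counting the zero at $0$ twice, repeated Rolle gives a single $\xi\in(0,T)$ with $F^{(q+1)}(\xi)=0$. Hence the exact identity $f'(0)-p_q'(0)=\frac{f^{(q+1)}(\xi)}{(q+1)!}\prod_{j=1}^q(0-t_j)$.

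You instead take the textbook pointwise remainder and let $t\to0^+$. This yields only an inequality, but that is all the lemma needs. The paper's route is self-contained and gives an exact confluent (Hermite-type) error formula. Yours reuses a standard theorem at the price of a limiting step.

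Your divided-difference detour is not actually needed. The Lagrange form already gives $\abs{e(t)/t}\leq\frac{K^{q+1}}{(q+1)!}\abs{\omega(t)/t}$ for every $t\in(0,t_1)$, whatever $\xi_t$ is. An inequality that holds uniformly in $t$ passes to the limit without any continuity of $\xi_t$. The mean value theorem for divided differences supplies exactly the same kind of $t$-dependent intermediate point.

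Your final estimate $\prod_{j\geq1}t_j\leq T^q$ and your treatment of the complex case coincide with the paper's.
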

\begin{proof}
Set
\[
r(t) \coloneq f(t)-p_q(t),
\qquad \omega(t) \coloneq\prod_{j=0}^q(t-t_j).
\]
Since
\[
\omega'(0)=\prod_{j=1}^q(0-t_j)\neq0,
\]
we may define $c\coloneq r'(0)/ \omega'(0)$ and
$F(t)\coloneq r(t) - c\omega(t)$. Then $F(t_j) = 0$ for
$j=0,\ldots,q$ and $F'(0) = 0$. Counting the zero at $0$ twice,
repeated application of Rolle's theorem~\cite[Section~1.1]{BF11} gives a point
$\xi \in (0,T)$ such that $F^{(q+1)}(\xi)=0$. Since
$p_q^{(q+1)} = 0$ and $\omega^{(q+1)} = (q+1)!$, it follows that
\[
0 = F^{(q+1)}(\xi) = f^{(q+1)}(\xi) - c(q+1)!,
\qquad \text{so}\qquad c = \frac{f^{(q+1)}(\xi)}{(q+1)!}.
\]
Therefore
\[
  f'(0) - p_q'(0) = \frac{f^{(q+1)}(\xi)}{(q+1)!} \prod_{j=1}^q(0-t_j).
\]
Using the hypothesis and $t_j\leq T$,
\[
  \abs{f'(0)-p_q'(0)} \leq \frac{K^{q+1}}{(q+1)!}\, T^q = \frac{K}{(q+1)!}.\qedhere
\]
\end{proof}

\begin{lemma}[Endpoint weight bound]\label{lem:weight_bound}
For the Chebyshev--Lobatto nodes on $[0,T]$ with $T=1/K$, the endpoint weights satisfy
\begin{equation}\label{eq:lagrange_polynomial}
    W \coloneq \sum_{j=1}^q \abs{\ell'_j(0)} = \frac{1}{T}\ab\Big(\frac{4q^2 - 1}{3})=O(q^2K).
\end{equation}
\end{lemma}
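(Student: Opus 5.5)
The plan is to reduce the statement to the standard Chebyshev spectral differentiation matrix on $[-1,1]$ and then evaluate the resulting sum with a cosecant identity. Under the affine map $t=\frac{T}{2}(1-x)$, the node $t_0=0$ corresponds to $x_0=1$, and the Lagrange basis is preserved: $\ell_j(t)=L_j(x)$, where $L_j$ is the Lagrange basis polynomial for the nodes $x_j=\cos(j\pi/q)$. The chain rule then gives
\[
\ell_j'(0)=-\frac{2}{T}L_j'(1).
\]
So it suffices to show
\[
\sum_{j=1}^q\abs{L_j'(1)}=\frac{4q^2-1}{6}.
\]

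The first and main step is a closed form for $L_j'(1)$ when $j\geq1$. I would use the node polynomial $\omega(x)=\prod_i(x-x_i)$, which is proportional to $(x^2-1)U_{q-1}(x)$, i.e.\ to $(x^2-1)T_q'(x)$. The barycentric formula gives
\[
L_j'(x_0)=\frac{\omega'(x_0)}{\omega'(x_j)(x_0-x_j)} \qquad (j\neq0).
\]
Evaluating $\omega'$ at the nodes uses $T_q'(\cos\theta)=q\sin(q\theta)/\sin\theta$ and $T_q''$ at interior nodes. This should recover the classical first row of the Chebyshev differentiation matrix,
\[
L_j'(1)=\frac{c_0}{c_j}\,\frac{(-1)^j}{1-x_j},
\qquad c_0=c_q=2,\quad c_j=1 \text{ for } 0<j<q.
\]
If the computation is messy, I would simply cite this standard formula (Trefethen, \emph{Spectral Methods in MATLAB}). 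This is the step I expect to need the most care, mainly in tracking the endpoint factors $c_j$.

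Taking absolute values removes the alternating signs. Using $1-\cos\theta=2\sin^2(\theta/2)$,
\[
\sum_{j=1}^q\abs{L_j'(1)}=\sum_{j=1}^{q-1}\frac{2}{1-\cos(j\pi/q)}+\frac{1}{2}=\sum_{j=1}^{q-1}\csc^2\ab\Big(\frac{j\pi}{2q})+\frac12.
\]
For the remaining sum I would apply the classical identity $\sum_{j=1}^{m-1}\csc^2(j\pi/m)=(m^2-1)/3$ with $m=2q$. That sum is symmetric under $j\mapsto 2q-j$, and its middle term $j=q$ equals $1$. Hence
\[
\sum_{j=1}^{q-1}\csc^2\ab\Big(\frac{j\pi}{2q})=\frac12\ab\Big(\frac{4q^2-1}{3}-1)=\frac{2(q^2-1)}{3}.
\]
The total is $\frac{2(q^2-1)}{3}+\frac12=\frac{4q^2-1}{6}$. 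Multiplying by $2/T$ gives $W=\frac{4q^2-1}{3T}$, and since $T=1/K$ this is $O(q^2K)$.

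As a sanity check, for $q=1$ the nodes are $0$ and $T$, so $W=1/T$, matching $(4-1)/(3T)$.
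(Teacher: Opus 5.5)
Your proposal is correct and follows essentially the same route as the paper: the paper also takes the standard closed form for the first row of the Chebyshev differentiation matrix (citing CHQZ06), getting $|\ell'_j(0)|=\frac{2}{T}\csc^2\bigl(\frac{j\pi}{2q}\bigr)$ for $1\le j<q$ and $|\ell'_q(0)|=\frac{1}{T}$, which is what your chain-rule reduction gives, and then sums with the identity $\sum_{j=1}^{q-1}\csc^2\bigl(\frac{j\pi}{2q}\bigr)=\frac{2(q^2-1)}{3}$. The only difference is that you make the affine change of variables explicit and derive that partial cosecant sum from the full identity $\sum_{j=1}^{m-1}\csc^2(j\pi/m)=(m^2-1)/3$ by symmetry, whereas the paper simply asserts it.
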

\begin{proof}
For the Chebyshev--Lobatto nodes mapped to $[0,T]$, the endpoint weights satisfy
\[
\abs{\ell'_j(0)} = \frac{2}{T} \csc^2 \ab\Big(\frac{j\pi}{2q}) \quad (1\leq j<q),\qquad \abs{\ell'_q(0)} = \frac{1}{T}.
\]
This formula is standard; see, e.g.,~\cite[Section~3]{CHQZ06}.
Using
\[
\sum_{j=1}^{q-1}\csc^2\ab\Big(\frac{j\pi}{2q}) = \frac{2(q^2-1)}{3},
\]
we obtain
\[
W = \frac{2}{T} \ab(\frac{2(q^2-1)}{3}) + \frac{1}{T} = \frac{4q^2-1}{3T}.
\]
Substituting $T=1/K$ gives $W=O(q^2K)$.
\end{proof}

\subsection{Concentration inequalities}\label{sec:concentration}
Statistical errors in this paper are controlled by the following standard concentration bounds.

\begin{fact}[Hoeffding's inequality~\cite{Hoe63}]\label{fact:hoeffding}
    Let $X_1,\dots,X_N$ be independent real random variables with $X_i\in[\alpha_i,\beta_i]$ almost surely, and let $S=\sum_{i=1}^NX_i$. Then for every $s>0$,
    \begin{equation}
        \Pr\sbra[\Big]{\abs[\big]{S-\E[S]}\geq s}\leq2\exp\rbra[\Bigg]{-\frac{2s^2}{\sum_{i=1}^N(\beta_i-\alpha_i)^2}}.
    \end{equation}
\end{fact}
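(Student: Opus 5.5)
The statement is Hoeffding's inequality. I would prove it by the standard Chernoff exponential-moment method, using Hoeffding's lemma as the key ingredient.

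By symmetry it suffices to bound the upper tail $\Pr[S-\E[S]\geq s]$. Applying the same argument to $-X_i\in[-\beta_i,-\alpha_i]$, whose ranges have the same lengths, gives the lower tail. A union bound then produces the factor $2$.

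\textbf{Step 1 (Chernoff bound).} For any $\lambda>0$, Markov's inequality applied to $e^{\lambda(S-\E S)}$ gives
\[
\Pr\sbra{S-\E S\geq s}\leq e^{-\lambda s}\,\E\sbra{e^{\lambda(S-\E S)}}=e^{-\lambda s}\prod_{i=1}^N\E\sbra{e^{\lambda(X_i-\E X_i)}}.
\]
The factorization uses independence of the $X_i$.

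\textbf{Step 2 (Hoeffding's lemma).} This is the main obstacle. The claim is that a centered random variable $Y$ supported in an interval of length $L=\beta-\alpha$ satisfies $\E e^{\lambda Y}\leq e^{\lambda^2L^2/8}$. I would prove it in three parts.
\begin{itemize}
\item By convexity of $y\mapsto e^{\lambda y}$ on $[\alpha',\beta']$ (the shifted interval, which contains $0$), bound $e^{\lambda y}$ by the chord through the endpoints. Taking expectations and using $\E Y=0$ gives $\E e^{\lambda Y}\leq e^{\phi(u)}$, where $u=\lambda L$, $p=-\alpha'/L\in[0,1]$, and $\phi(u)=-pu+\log(1-p+pe^{u})$.
\item Check directly that $\phi(0)=0$ and $\phi'(0)=0$. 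Then write $\phi''(u)=\theta(1-\theta)$ with $\theta=pe^u/(1-p+pe^u)\in[0,1]$, so $\phi''(u)\leq 1/4$.
\item Taylor's theorem with remainder then yields $\phi(u)\leq u^2/8$.
\end{itemize}

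\textbf{Step 3 (optimize).} Combining the two steps gives
\[
\Pr[S-\E S\geq s]\leq \exp\rbra{-\lambda s+\tfrac{\lambda^2}{8}\sum_i(\beta_i-\alpha_i)^2}.
\]
Choosing $\lambda=4s/\sum_i(\beta_i-\alpha_i)^2$ makes the exponent equal to $-2s^2/\sum_i(\beta_i-\alpha_i)^2$. Combining this with the lower tail as described above completes the proof.
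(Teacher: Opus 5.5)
Your proof is correct. It is the standard Chernoff-bound-plus-Hoeffding's-lemma argument from the cited source~\cite{Hoe63}; the paper states the inequality as a known fact and gives no proof of its own, so there is nothing further to compare. The one detail to add is the degenerate case $\beta_i=\alpha_i$: there $p=-\alpha'/L$ is undefined, but $X_i$ is almost surely constant and contributes a factor $1$, and if every range is degenerate then $S$ equals its mean almost surely, so the bound holds with the right-hand side read as $0$.
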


\begin{fact}[Median-of-means bound; see, e.g.,~\cite{JVV86}]\label{fact:mom}
    Let $X_1,\dots,X_N$ be i.i.d.\ real random variables with mean $\mu$ and $\E[X_1^2]\leq\sigma^2$. Partition $[N]$ into $K\coloneq\lceil8\log(1/\delta)\rceil$ batches of equal size, and let $\hat\mu$ be the median of the $K$ batch means. If $N\geq K\,\lceil4\sigma^2/\eta^2\rceil$, then
    \[
    \Pr\ab[\abs{\hat\mu-\mu}>\eta]\leq\delta.
    \]
    In particular, $N=O\ab\big(\sigma^2\eta^{-2}\log(1/\delta))$ samples suffice to achieve accuracy $\eta$ with failure probability $\delta$.
\end{fact}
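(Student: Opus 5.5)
The plan is the standard two-level argument: Chebyshev's inequality controls each batch mean, and Hoeffding's inequality (\cref{fact:hoeffding}) amplifies the constant per-batch success probability to $1-\delta$ for the median. Write $B\coloneq N/K$ for the common batch size. The hypothesis $N\geq K\lceil4\sigma^2/\eta^2\rceil$ guarantees $B\geq 4\sigma^2/\eta^2$. If $N$ is not divisible by $K$, I would discard the excess samples; this keeps $B\geq\lceil4\sigma^2/\eta^2\rceil$.

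\emph{Step 1 (one batch).} Let $\bar X_k$ be the mean of the $k$th batch. Since $\operatorname{Var}(X_1)\leq\E[X_1^2]\leq\sigma^2$ and the samples are independent, $\operatorname{Var}(\bar X_k)\leq\sigma^2/B$. Chebyshev's inequality then gives
\[
\Pr\ab[\abs{\bar X_k-\mu}>\eta]\leq\frac{\sigma^2}{B\eta^2}\leq\frac14 .
\]

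\emph{Step 2 (median).} Let $Y_k\coloneq\mathbf{1}[\abs{\bar X_k-\mu}>\eta]$. The batches are disjoint, so the $Y_k$ are independent Bernoulli variables with $\E[Y_k]\leq 1/4$. If $\abs{\hat\mu-\mu}>\eta$, then the median lies outside $[\mu-\eta,\mu+\eta]$, which forces at least $K/2$ of the batch means to lie outside that interval as well. Hence
\[
\Pr\ab[\abs{\hat\mu-\mu}>\eta]\leq\Pr\ab\Big[\sum_k Y_k\geq K/2\Big]\leq\Pr\ab\Big[\sum_k Y_k-\E\sum_kY_k\geq K/4\Big].
\]
Applying \cref{fact:hoeffding} with ranges $[0,1]$ bounds this by $\exp(-2(K/4)^2/K)=\exp(-K/8)$. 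The one-sided form holds with the same exponent, and the factor $2$ in the two-sided statement is harmless; if one insists on using the two-sided bound verbatim, enlarge $K$ by an additive constant. Since $K\geq8\log(1/\delta)$, we get $\exp(-K/8)\leq\delta$.

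\emph{Step 3 (sample count).} The final claim follows directly: $N=K\lceil4\sigma^2/\eta^2\rceil=O(\sigma^2\eta^{-2}\log(1/\delta))$. There is no real obstacle in this argument. The only points needing care are the median-counting step, which uses a weak inequality at exactly $K/2$ bad batches, and the bookkeeping of constants in Hoeffding's bound.
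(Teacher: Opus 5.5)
Your proof is correct and is the standard argument behind this fact; the paper gives no proof of its own and only cites~\cite{JVV86}. In that argument, Chebyshev makes each batch bad with probability at most $1/4$, and the one-sided Hoeffding bound $\exp(-K/8)\le\delta$ on the number of bad batches fits the stated $K=\lceil 8\log(1/\delta)\rceil$ exactly, so you were right that the two-sided form in \cref{fact:hoeffding}, with its extra factor $2$, would not match the constants as written.
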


\section{Threshold support learning for Lindbladians}\label{sec:support_learning}
In this section, we solve the support-learning problem, which requires outputting a candidate support that contains every Hamiltonian and dissipative coordinate of $\cL$ whose magnitude is at least a chosen threshold $\eta$.

The section is organized as follows. In \cref{sec:bell_coherence}, we express the coefficients of $\cL$ as endpoint derivatives of Bell coherences. In \cref{sec:endpoint_to_heavy_labels}, we control the finite-time interpolation error and use positivity of the Choi state to show that every heavy coefficient induces heavy labels in a classical probability distribution. In \cref{sec:heavy_label_marginals}, we realize this distribution through Pauli-twirled evolutions and sample its two label marginals from displacements observed in complementary measurement bases. We then combine the recovered marginals into candidate Hamiltonian and dissipative supports, state the complete algorithm, and prove its correctness and resource bounds.

\subsection{Bell coherences and Lindbladian coefficients}\label{sec:bell_coherence}

We use an alternative linear representation of quantum dynamics. Every linear map on $n$-qubit operators admits a unique expansion in the left--right Pauli basis, called the $\chi$-matrix (or process-matrix) representation. The time-evolution channel $e^{t\cL}$ can therefore be written as
\[
 e^{t\cL}(\rho) = \sum_{a,b} \chi_{ab}(t) P_a \rho P_b.
\]
Similarly, we can represent the Lindbladian generator $\cL$ as
\begin{equation}\label{eq:chi}
  \cL(\rho)=\sum_{a,b}\chi_{ab}\,P_a \rho P_b.
\end{equation}
The physical coefficients of $\cL$ can be read off from its $\chi$-matrix as follows.

\begin{lemma}[Coefficients from the $\chi$-matrix]\label{lem:chi_to_coeff}
For every nonidentity Pauli label $a$,
\[
  h_a = \frac{\iu}{2}\ab\big(\chi_{a0}-\chi_{0a}).
\]
For every pair of nonidentity Pauli labels $a,b$,
\[
  \gamma_{ab}=\chi_{ab}.
\]
\end{lemma}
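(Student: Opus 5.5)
The plan is to expand the Lindbladian in \cref{eq:Lindbladian_pauli} term by term in the left--right Pauli basis $\{\rho\mapsto P_a\rho P_b\}$. We then read off the coefficients of the basis maps $P_a\rho P_b$ with $a,b\neq0$, and of $P_a\rho I$ and $I\rho P_a$. Uniqueness of the $\chi$-matrix expansion then identifies the entries. Uniqueness holds because the maps $\rho\mapsto P_a\rho P_b$ correspond, under the Choi isomorphism, to the rank-one operators $\ketbra{\Phi_a}{\Phi_b}$ on the orthonormal Bell basis of \cref{sec:bell_choi}, so they are linearly independent. I take this as the standard fact already asserted in the text.

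\textbf{Dissipative terms.} For $a,b\neq0$, the term $\gamma_{ab}P_a\rho P_b$ is already in the basis, with both labels nonidentity. The anticommutator part $-\tfrac12\gamma_{ab}(P_bP_a\rho+\rho P_bP_a)$ contributes only basis maps with an identity on one side, since $P_bP_a=\xi_{ba}P_{a\oplus b}$ by \cref{fact:xi_ab}. Thus it only affects entries $\chi_{c0}$ and $\chi_{0c}$, including $c=0$ when $a=b$.

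\textbf{Hamiltonian terms.} The Hamiltonian part $-\iu h_a(P_a\rho-\rho P_a)$ also contributes only to entries with one index zero. It gives $-\iu h_a$ to $\chi_{a0}$ and $+\iu h_a$ to $\chi_{0a}$. Hence $\chi_{ab}=\gamma_{ab}$ for $a,b\neq0$ immediately.

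\textbf{Isolating $h_a$.} Collecting the terms with one index zero, set $G\coloneq\sum_{b,c\neq0}\gamma_{bc}P_cP_b$ (with the same index convention as the dissipator). Then
\[
\chi_{a0}=-\iu h_a-\tfrac12 G_a,\qquad \chi_{0a}=\iu h_a-\tfrac12 G_a,
\]
where $G_a\coloneq 2^{-n}\tr(P_aG)$ is the Pauli coefficient of $G$, using \cref{fact:pauli_inprod}. Subtracting gives $\chi_{a0}-\chi_{0a}=-2\iu h_a$, hence $h_a=\tfrac{\iu}{2}(\chi_{a0}-\chi_{0a})$.

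The only point needing care is that the $G$-contribution is identical on both sides, i.e.\ $P_cP_b\rho$ and $\rho P_cP_b$ carry the same Pauli coefficient, so it cancels in the difference. It enters symmetrically because the anticommutator is $\{P_bP_a,\rho\}$ with the same operator on the left and on the right. No Hermiticity of $\gamma$ is even needed for this cancellation.
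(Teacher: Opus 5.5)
Your proposal is correct and follows essentially the same route as the paper. Both expand term by term in the left--right Pauli basis, note that only the jump terms reach entries with both indices nonzero, and observe that the anticommutator contributes the same amount to $\chi_{c0}$ and $\chi_{0c}$ so that it cancels in the difference; your $-\tfrac12 G_c$ is the paper's $-\tfrac12\sum_{a\oplus b=c}\xi_{ba}\gamma_{ab}$. Your explicit justification of uniqueness, via the Choi images $\ketbra{\Phi_a}{\Phi_b}$ forming a basis, is a small addition that the paper takes for granted.
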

\begin{proof}
Expand each term of the Lindbladian in the left--right Pauli basis of \cref{eq:chi}. The two-sided jump term $\gamma_{ab}\,P_a\rho P_b$ contributes $\gamma_{ab}$ to the entry $\chi_{ab}$. Since the dissipative sum ranges over $a,b\neq0$, these are the only contributions to entries with both indices nonzero, and hence $\chi_{ab}=\gamma_{ab}$ for every pair of nonidentity labels $a,b$. The remaining terms contribute only to entries with a zero index. The Hamiltonian term $-\iu h_c[P_c,\rho]=-\iu h_c(P_c\rho P_0-P_0\rho P_c)$ contributes $-\iu h_c$ to $\chi_{c0}$ and $+\iu h_c$ to $\chi_{0c}$. The anticommutator term, using $P_bP_a=\xi_{ba}P_{a\oplus b}$ from \cref{fact:xi_ab}, is
\[
-\frac{\gamma_{ab}}{2} \{P_bP_a,\rho\}
=-\frac{\xi_{ba}\gamma_{ab}}{2} \ab\big(P_{a\oplus b} \rho P_0+P_0\rho P_{a\oplus b}),
\]
which contributes the same amount $-\tfrac{1}{2}\xi_{ba}\gamma_{ab}$ to both $\chi_{a\oplus b,0}$ and $\chi_{0,a\oplus b}$. After summing all contributions, we obtain, for every $c\neq0$,
\[
\chi_{c0}=-\iu h_c-\frac{1}{2}\sum_{\substack{(a,b)\in\cS_D\\ a\oplus b=c}}\xi_{ba}\gamma_{ab},
\qquad
\chi_{0c}=\iu h_c-\frac{1}{2}\sum_{\substack{(a,b)\in\cS_D\\ a\oplus b=c}}\xi_{ba}\gamma_{ab},
\]
so the anticommutator sums cancel in the difference, and $\frac{\iu}{2}(\chi_{c0}-\chi_{0c})=\frac{\iu}{2}(-2\iu h_c)=h_c$.
\end{proof}

For $t\geq0$, we apply the Choi--Jamio{\l}kowski isomorphism of \cref{def:choi_state} to the dynamical semigroup $\{e^{t\cL}\}_{t\geq0}$ and define its Choi state by
\[
J_t\coloneq J\ab\big(e^{t\cL})=(\mathrm{id}\otimes e^{t\cL})\ab\big(\ketbra{\Phi_0}{\Phi_0}).
\]
For Pauli labels $a,b$, define the \emph{Bell coherences} and the \emph{Pauli error rates} of the evolution as
\[
g_{ab}(t) \coloneq \bra{\Phi_a} \,J_t\,\ket{\Phi_b},
\qquad \lambda_a(t) \coloneq g_{aa}(t).
\]
Since $J_t$ is a quantum state and the Bell basis is orthonormal, the Pauli error rates $\{\lambda_a(t)\}_{a\in \F_2^{2n}}$ form a probability distribution. At $t=0$, we have $J_0 = \ketbra{\Phi_0}{\Phi_0}$, and the endpoint derivative is
\[
J'_0 \coloneq \odv[delims-eval=.|]{J_t}{t}_{t=0}=(\mathrm{id} \otimes\cL)(\ketbra{\Phi_0}{\Phi_0}).
\]
In the Bell basis, the endpoint derivative is exactly the $\chi$-matrix of $\cL$.

\begin{lemma}[Bell matrices of the dynamics]\label{lem:bell_chi}
For all Pauli labels $a,b\in\F_2^{2n}$ and all $t\geq0$,
\[
g_{ab}(t)=\bra{\Phi_a}\,J_t\,\ket{\Phi_b}=\chi_{ab}(t),
\qquad
g'_{ab}(0)=\bra{\Phi_a}\,J'_0\,\ket{\Phi_b}=\chi_{ab}.
\]
Consequently, for all nonidentity Pauli labels $a,b$,
\[
h_a = \Imag\bra{\Phi_0}\, J'_0 \, \ket{\Phi_a} = \Imag g'_{0a}(0), \qquad \gamma_{ab}=\bra{\Phi_a}\, J'_0\, \ket{\Phi_b} = g'_{ab}(0).
\]
\end{lemma}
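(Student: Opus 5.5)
The plan is to compute the Bell matrix element of $J_t$ directly, by substituting the $\chi$-expansion of $e^{t\cL}$ into the Choi state and simplifying with \cref{eq:mes_identity}. Write $e^{t\cL}(\rho)=\sum_{c,c'}\chi_{cc'}(t)P_c\rho P_{c'}$. By linearity,
\[
J_t=\sum_{c,c'}\chi_{cc'}(t)\,(I\otimes P_c)\ketbra{\Phi_0}{\Phi_0}(I\otimes P_{c'})=\sum_{c,c'}\chi_{cc'}(t)\,\ketbra{\Phi_c}{\Phi_{c'}},
\]
where we used that Pauli operators are Hermitian, so that $\bra{\Phi_0}(I\otimes P_{c'})=\bra{\Phi_{c'}}$.

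The Bell basis is orthonormal, since $\braket{\Phi_a}{\Phi_b}=\mathbf{1}[a=b]$ by \cref{fact:pauli_inprod}. Sandwiching between $\bra{\Phi_a}$ and $\ket{\Phi_b}$ therefore gives $g_{ab}(t)=\chi_{ab}(t)$ immediately. This identity shows that $J_t$ is exactly the $\chi$-matrix written in the Bell basis. Uniqueness of the $\chi$-expansion is not even needed here, because any representation yields the same $J_t$, and $J_t$ determines its own coefficients.

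For the derivative, we apply the same computation to $J'_0=(\mathrm{id}\otimes\cL)(\ketbra{\Phi_0}{\Phi_0})$ using \cref{eq:chi}, which gives $\bra{\Phi_a}J'_0\ket{\Phi_b}=\chi_{ab}$. We then need $g'_{ab}(0)=\bra{\Phi_a}J'_0\ket{\Phi_b}$. This holds because $t\mapsto J_t$ is differentiable in finite dimensions with derivative $(\mathrm{id}\otimes\cL e^{t\cL})(\ketbra{\Phi_0}{\Phi_0})$, and taking a fixed matrix element commutes with differentiation. As a consistency check, this is the same as saying that $\chi_{ab}(t)$ is differentiable with $\chi'_{ab}(0)=\chi_{ab}$.

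The consequences then follow from \cref{lem:chi_to_coeff}. For $a,b\neq0$ we get $\gamma_{ab}=\chi_{ab}=g'_{ab}(0)$ directly. For $h_a$ we need one more observation: $\cL$ is Hermiticity-preserving, so its $\chi$-matrix is Hermitian, $\chi_{a0}=\overline{\chi_{0a}}$. This follows by taking the adjoint of $\cL(\rho)$ for Hermitian $\rho$ and using uniqueness of the expansion; alternatively, it follows from Hermiticity of $J'_0$, which is the derivative of the Hermitian family $J_t$. With this, $\chi_{a0}-\chi_{0a}=-2\iu\,\Imag\chi_{0a}$, and therefore $h_a=\tfrac{\iu}{2}(-2\iu\,\Imag\chi_{0a})=\Imag\chi_{0a}=\Imag g'_{0a}(0)$.

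The only step that needs any care is this Hermiticity of the $\chi$-matrix, which is what justifies replacing the difference by an imaginary part. The rest is bookkeeping.
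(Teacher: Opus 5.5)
Your proposal is correct and follows essentially the same route as the paper. You substitute the left--right Pauli expansion into the Choi state to get $J_t=\sum_{a,b}\chi_{ab}(t)\,\ketbra{\Phi_a}{\Phi_b}$ (and likewise for $J'_0$), read off entries by Bell-basis orthonormality, and use Hermiticity of $J'_0$ to turn $\frac{\iu}{2}(\chi_{a0}-\chi_{0a})$ from \cref{lem:chi_to_coeff} into $\Imag\chi_{0a}$; your explicit remark that differentiation commutes with taking a fixed matrix element is only a detail the paper leaves implicit.
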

\begin{proof}
Expanding $e^{t\cL}$ in the left--right Pauli basis and using the Hermiticity of every Pauli operator, we obtain
\[
J_t = (\mathrm{id} \otimes e^{t\cL})(\ketbra{\Phi_0}{\Phi_0})
= \sum_{a,b}\chi_{ab}(t) \,(I\otimes P_a) \ketbra{\Phi_0}{\Phi_0}(I\otimes P_b)
=\sum_{a,b} \chi_{ab}(t) \ketbra{\Phi_a}{\Phi_b}.
\]
The orthonormality of the Bell basis gives $g_{ab}(t)=\chi_{ab}(t)$. Applying the same calculation to the generator expansion yields
\[
J'_0 = (\mathrm{id} \otimes \cL)(\ketbra{\Phi_0}{\Phi_0})
= \sum_{a,b}\chi_{ab} \,(I\otimes P_a) \ketbra{\Phi_0}{\Phi_0}(I\otimes P_b)
=\sum_{a,b} \chi_{ab}\ketbra{\Phi_a}{\Phi_b}.
\]
Since $J_t$ is Hermitian for every $t$, $J'_0$ is also Hermitian; hence
\[
\chi_{c0} = \overline{\chi_{0c}}, \qquad h_c = \frac{\iu}{2}(\chi_{c0} - \chi_{0c}) = \frac{\iu}{2} \ab\big(-2\iu\Imag\chi_{0c})=\Imag\chi_{0c}. \qedhere
\]
\end{proof}
\subsection{From endpoint derivatives to heavy labels}\label{sec:endpoint_to_heavy_labels}

Endpoint-derivative interpolation underlies both learning stages: support learning uses it to relate large coefficients to finite-time Bell populations, whereas coefficient learning uses it to estimate the coefficients directly. The following uniform derivative bound controls both uses.

\begin{lemma}[Derivative bound for Choi matrix elements]\label{lem:choi_derivative_bound}
Let $\cL$ be a Lindbladian with dynamical strength at most $\Lambda$. For $A,B\in\cB(\mathscr{H})$, define the signal $f_{A,B}(t)\coloneq\bra{\Phi_A}\,J_t\,\ket{\Phi_B}$. Then for every $r \in \N$ and $t \geq 0$,
\[
\abs[\big]{f_{A,B}^{(r)}(t)} \leq \norm{A}_\infty\norm{B}_\infty\,\Lambda^r.
\]
\end{lemma}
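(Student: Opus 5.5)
The plan is to move everything into the Heisenberg picture, where the bound $\norm{\cL^\dag}_{\infty\to\infty}\leq\Lambda$ applies directly.

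\textbf{Step 1: differentiate.} Since $J_t=(\mathrm{id}\otimes e^{t\cL})(\ketbra{\Phi_0}{\Phi_0})$, differentiating $r$ times gives
\[
f^{(r)}_{A,B}(t)=\bra{\Phi_0}(I\otimes A^\dagger)\,(\mathrm{id}\otimes \cL^r e^{t\cL})(\ketbra{\Phi_0}{\Phi_0})\,(I\otimes B)\ket{\Phi_0}.
\]

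\textbf{Step 2: rewrite as a trace.} Write $\ket{\Phi_B}\bra{\Phi_A}=(I\otimes B)\ketbra{\Phi_0}{\Phi_0}(I\otimes A^\dagger)$. Then
\[
f^{(r)}_{A,B}(t)=\tr\big[(\mathrm{id}\otimes\cN_t)(\ketbra{\Phi_0}{\Phi_0})\,\ket{\Phi_B}\bra{\Phi_A}\big],\qquad \cN_t\coloneq\cL^r e^{t\cL}.
\]
Use the identity $(\mathrm{id}\otimes\cN)(\ketbra{\Phi_0}{\Phi_0})=(\cN^\top\otimes\mathrm{id})(\ketbra{\Phi_0}{\Phi_0})$, or equivalently expand in the computational basis. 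This should reduce the quantity to the form
\[
f^{(r)}_{A,B}(t)=\frac1d\tr\big(A^\dagger\,\cN_t(B)\big)\quad\text{or}\quad f^{(r)}_{A,B}(t)=\frac1d\tr\big(\cN_t^\dagger(A^\dagger)\,B\big).
\]
For the identity case $r=0$, $\cN_t=\mathrm{id}$, this reads $\tfrac1d\tr(A^\dagger B)=\braket{\Phi_A}{\Phi_B}$, which is consistent with \cref{eq:mes_identity}.

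This step is the main obstacle. Applying the channel on the second tensor factor, sandwiched between $A^\dagger$ and $B$ acting on that same factor, does not simply give $\tr(A^\dagger\cN(B))$. The correct general expression is
\[
\frac1d\sum_{ij}\bra{i}A^\dagger\,\cN(\ketbra{i}{j})\,B\ket{j}.
\]
To handle this, I would work in the Kraus or $\chi$-representation instead: expand $\cN_t(\rho)=\sum_{c,e}\chi^{(r)}_{ce}(t)P_c\rho P_e$. Then
\[
f^{(r)}_{A,B}(t)=\frac1d\sum_{c,e}\chi^{(r)}_{ce}(t)\,\tr(A^\dagger P_c)\,\tr(P_eB),
\]
which is a bilinear pairing of the Pauli vectors of $A$ and $B$ against $\chi^{(r)}(t)$.

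\textbf{Step 3: a cleaner route for the bound.} Define the superoperator $\cM(X)\coloneq\frac1d\sum_{ij}\ketbra{i}{j}\,\tr\big(\ketbra{j}{i}\,\cdot\,\big)$, i.e.\ the trace-weighted realignment. A cleaner way is to note that $\bra{\Phi_A}(\mathrm{id}\otimes\cN)(\Omega)\ket{\Phi_B}$ equals $\frac1d\tr\big(\cN(\ketbra{\bar\psi}{\bar\phi})\ldots\big)$ after a Schmidt decomposition. I expect the intended bound, however, to come from
\[
\abs{f^{(r)}_{A,B}(t)}=\abs[\big]{\tr\big(X\,(\mathrm{id}\otimes \cN_t)(\ketbra{\Phi_0}{\Phi_0})\big)},\qquad X=\ket{\Phi_B}\bra{\Phi_A},
\]
together with duality, which moves the map onto the observable:
\[
\abs{f^{(r)}_{A,B}(t)}=\abs[\big]{\bra{\Phi_0}(\mathrm{id}\otimes\cN_t^\dagger)(X)\ket{\Phi_0}}\leq\norm{(\mathrm{id}\otimes\cN_t^\dagger)(X)}_\infty.
\]
Since $\ket{\Phi_0}$ is a unit vector, the inequality is immediate.

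\textbf{Step 4: the norm estimate.} Using $e^{t\cL^\dag}$ unital and completely positive, with norm $1$, and $\cL^\dag$ bounded by $\Lambda$, we would need
\[
\norm{(\mathrm{id}\otimes\cL^\dagger)^r}_{\infty\to\infty}\leq\Lambda^r .
\]
This is the genuine difficulty, since $\norm{\cdot}_{\infty\to\infty}$ is not automatically stable under tensoring with $\mathrm{id}$. To avoid it, I would instead take $X=(I\otimes B)\ketbra{\Phi_0}{\Phi_0}(I\otimes A^\dagger)$ and use the transpose trick $(I\otimes B)\ket{\Phi_0}=(B^\top\otimes I)\ket{\Phi_0}$. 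This lets $A$ and $B$ act on the first factor, so that
\[
f_{A,B}^{(r)}(t)=\bra{\Phi_0}(\bar A\otimes I)\,\big(\mathrm{id}\otimes\cN_t\big)(\ketbra{\Phi_0}{\Phi_0})\,(B^\top\otimes I)\ket{\Phi_0}.
\]
The partial trace over the first factor then yields
\[
f_{A,B}^{(r)}(t)=\frac1d\tr\big(\cN_t(B^{\top\top}\cdots)\big).
\]
Concretely, I expect to arrive at
\[
f_{A,B}^{(r)}(t)=\frac1d\tr\big(A^\dagger\cdot\cN_t\text{-type pairing}\big)=\frac1d\tr\big((e^{t\cL^\dagger}(\cL^\dagger)^r)(Y)\big)
\]
for an operator $Y$ built from $A^\dagger$ and $B$, with $\tfrac1d\abs{\tr(Z\,\cdot)}\leq\norm{Z}_\infty\norm{\cdot}_1/d$. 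The final step would be Hölder's inequality:
\[
\tfrac1d\abs{\tr(Z\,W)}\leq\tfrac1d\norm{Z}_\infty\norm{W}_1\leq\norm{A}_\infty\norm{B}_\infty\Lambda^r,
\]
using $\norm{(\cL^\dag)^r e^{t\cL^\dag}}_{\infty\to\infty}\leq\Lambda^r$ and $\norm{\cdot}_1\leq d\norm{\cdot}_\infty$. Getting the right single-system reduction with this structure is where I would spend the effort.
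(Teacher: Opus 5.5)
Your proposal has a genuine gap: it never reaches a valid bound, and the single-system form you aim for in Step 4 cannot exist in general. The quantity $f^{(r)}_{A,B}(t)=\tr\bigl(\ket{\Phi_B}\bra{\Phi_A}\,J(\cN_t)\bigr)$ is a pairing of the Choi operator with $\ket{\Phi_B}\bra{\Phi_A}$, which is entangled (operator Schmidt rank $d^2$ for the Pauli witnesses). Any single pairing satisfies $\tr(Y\,\cN(W))=d\,\tr\bigl((W^\top\otimes Y)J(\cN)\bigr)$, which is a product operator. Since $\cN\mapsto J(\cN)$ is bijective, no fixed $Y,W$ reproduce $f^{(r)}_{A,B}$ for every $\cN$. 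Your final expression $\frac1d\tr\bigl(\cN_t^\dagger(Y)\bigr)=\frac1d\tr\bigl(Y\,\cN_t(I)\bigr)$ sees only $\cN_t(I)$. The transpose trick merely moves $A$ and $B$ to the first factor; the $\ket{\Phi_0}$ sandwich remains. Step 3 fails for the reason you give: it needs a completely bounded norm, which $\Lambda$ does not control.

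There are also normalization errors. Since $J_0=\ketbra{\Phi_0}{\Phi_0}$, we have $f_{A,B}(0)=\braket{\Phi_A}{\Phi_0}\braket{\Phi_0}{\Phi_B}=\frac{1}{d^2}\overline{\tr A}\,\tr B$, not $\braket{\Phi_A}{\Phi_B}$. For example, $A=B=P_a$ with $a\neq0$ gives $0$, not $1$. Both your matrix-unit expansion and your $\chi$-formula carry the prefactor $1/d^2$, not $1/d$.

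The missing step is to apply your H\"older idea term by term to the $d^2$ product pieces, instead of to a single reduced pairing. With the correct normalization,
\[
f^{(r)}_{A,B}(t)=\frac{1}{d^2}\sum_{j,l=0}^{d-1}\bra{j}A^\dagger\,\cN_t(\ketbra{j}{l})\,B\ket{l},\qquad \cN_t=\cL^r e^{t\cL}.
\]
Each term is bounded as follows:
\[
\abs{\bra{j}A^\dagger\,\cN_t(\ketbra{j}{l})\,B\ket{l}}\leq\norm{A}_\infty\norm{B}_\infty\norm{\cN_t(\ketbra{j}{l})}_1\leq\norm{A}_\infty\norm{B}_\infty\Lambda^r.
\]
This uses $\norm{\cdot}_\infty\leq\norm{\cdot}_1$ and $\norm{\cL}_{1\to1}=\norm{\cL^\dagger}_{\infty\to\infty}\leq\Lambda$. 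It also uses $\norm{e^{t\cL}}_{1\to1}=\norm{e^{t\cL^\dagger}}_{\infty\to\infty}=1$ on all operators, including non-Hermitian matrix units, because $e^{t\cL^\dagger}$ is unital and completely positive. Averaging the $d^2$ terms gives the lemma, and this is exactly the paper's proof. The crude step $\norm{\cdot}_\infty\leq\norm{\cdot}_1$ is affordable only because of the $1/d^2$ weight. With your $1/d$ prefactor, the same termwise bound would be off by a factor of $d$.
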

\begin{proof}
Differentiating the semigroup gives
\[
\odv[r]{J_t}{t} = (\mathrm{id} \otimes \cL^r e^{t\cL})(\ketbra{\Phi_0}{\Phi_0}).
\]
Expanding $\ketbra{\Phi_0}{\Phi_0}$ in the computational basis as
\[
\ketbra{\Phi_0}{\Phi_0} = \frac{1}{d} \sum^{d-1}_{j,l=0} \ketbra{j}{l} \otimes \ketbra{j}{l}
\]
and applying the identity in \cref{eq:mes_identity} to each term,
\[
f_{A,B}^{(r)}(t) = \frac{1}{d} \sum^{d-1}_{j,l=0} \bra{\Phi_0}\left(\ketbra{j}{l} \otimes \ab\big(A^\dagger \cL^r e^{t\cL}\ab(\ketbra{j}{l} ) B)\right)\ket{\Phi_0}
=\frac{1}{d^2} \sum_{j,l=0}^{d-1} \bra{j}A^\dagger \cL^r e^{t\cL}\ab(\ketbra{j}{l}) B\ket{l}.
\]
To justify trace-norm contractivity on the generally non-Hermitian matrix unit $\ketbra{j}{l}$, write $\cE_t\coloneq e^{t\cL}$. Its adjoint $\cE_t^\dagger$ is unital and completely positive. The Kadison--Schwarz inequality gives, for every operator $X$,
\[
  \cE_t^\dagger(X)^\dagger \cE_t^\dagger(X) \preceq \cE_t^\dagger(X^\dagger X) \preceq \norm{X}_\infty^2 I,
\]
where the second inequality uses positivity and unitality. On the other hand, unitality gives $\cE_t^\dagger(I) = I$. Hence $\norm{\cE_t^\dagger}_{\infty\to\infty}=1$, and duality of induced Schatten norms implies
\[
  \norm{\cE_t}_{1\to1}=1
\]
on the full operator space, not only on Hermitian inputs. Therefore each summand is bounded by
\begin{align*}
\abs[\big]{\bra{j} A^\dagger \cL^r \cE_t\ab(\ketbra{j}{l}) B\ket{l}}
&\leq \norm{A}_\infty \norm{B}_\infty
      \norm[\big]{\cL^r\cE_t\ab(\ketbra{j}{l})}_\infty\\
&\leq \norm{A}_\infty \norm{B}_\infty
      \norm{\cL}_{1\to1}^{r}
      \norm[\big]{\cE_t\ab(\ketbra{j}{l})}_1\\
&\leq \norm{A}_\infty \norm{B}_\infty
      \Lambda^r\norm{\ketbra{j}{l}}_1\\
&=\norm{A}_\infty \norm{B}_\infty\Lambda^r,
\end{align*}
where we also used $\norm{Y}_\infty\leq\norm{Y}_1$, $\norm{\cL}_{1\to1}=\norm{\cL^\dagger}_{\infty\to\infty}\leq\Lambda$, and $\norm{\ketbra{j}{l}}_1=1$. Averaging the $d^2$ terms with weight $1/d^2$ proves the claim.
\end{proof}

By \cref{lem:choi_derivative_bound}, the Bell coherence $g_{ab}(t)$ for Pauli labels $a$ and $b$ satisfies
\begin{equation}\label{eq:coherence_derivative}
\abs[\big]{g^{(r)}_{ab}(t)} \leq \norm{P_a}_\infty\norm{P_b}_\infty\,\Lambda^r = \Lambda^r.
\end{equation}

Fix a threshold $\eta \in (0, \Lambda]$. We choose the Chebyshev interpolation parameters as
\[
T = \frac{1}{\Lambda}, \qquad
q = \min\cbra[\big]{q'\in\N_{>0}:(q'+1)!\geq 4\Lambda/\eta}.
\]
For $1 \leq j \leq q$, let $t_j$ be the Chebyshev--Lobatto nodes and let $\ell'_j(0)$ be the associated endpoint weights. For ease of notation, define
\[
W \coloneq \sum_{j=1}^q \abs{\ell'_j(0)},
\qquad
p_j \coloneq \frac{\abs{\ell'_j(0)}}{W},
\qquad
\alpha \coloneq \frac{\eta^2}{4W^2}.
\]
Define the \emph{Chebyshev mixture} by
\begin{equation}\label{eq:mixture}
 \mu(a) \coloneq \sum_{j=1}^q p_j \,\lambda_a(t_j),
 \qquad \text{for } a\in\F_2^{2n}.
\end{equation}
Since $\{p_j\}_{j=1}^q$ is a probability distribution and each $\{\lambda_a(t_j)\}_a$ is a probability distribution, $\{\mu(a)\}_{a\in \F_2^{2n}}$ is also a probability distribution over Pauli labels. The next lemma shows that heavy coefficients lead to heavy entries in $\mu$.

\begin{lemma}[Heavy coherences imply heavy populations]\label{lem:coh_to_pop}
For every nonidentity Pauli label $a$, if $\abs{h_a}\geq\eta$, then $\mu(a)\geq\alpha$. For every pair of nonidentity Pauli labels $a,b$, if $\abs{\gamma_{ab}}\geq\eta$, then $\mu(a)\geq\alpha$ and $\mu(b)\geq\alpha$.
\end{lemma}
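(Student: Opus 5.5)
We combine the endpoint interpolation bound of \cref{lem:interpolation_error} with positivity of the Choi state $J_t$. Fix labels $a,b$ and consider the Bell coherence $g_{ab}(t)$. By \cref{eq:coherence_derivative}, $\abs{g_{ab}^{(q+1)}(t)}\leq\Lambda^{q+1}$. We therefore apply \cref{lem:interpolation_error} with $K=\Lambda$ and $T=1/\Lambda$, in its complex version. This gives
\[
\abs[\big]{g'_{ab}(0)-\textstyle\sum_{j=0}^q\ell'_j(0)g_{ab}(t_j)}\leq\frac{\sqrt2\Lambda}{(q+1)!}\leq\frac{\sqrt2\,\eta}{4}<\frac{\eta}{2},
\]
using the choice $(q+1)!\geq4\Lambda/\eta$.

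Next we handle the $j=0$ term. Since $J_0=\ketbra{\Phi_0}{\Phi_0}$, we have $g_{ab}(0)=\mathbf{1}[a=b=0]$. So the $j=0$ term vanishes whenever at least one of $a,b$ is nonzero, and this holds in both cases of the lemma: the pair $(0,a)$ for $h_a$, and $(a,b)$ with $a,b\neq0$ for $\gamma_{ab}$. Hence, if $\abs{g'_{ab}(0)}\geq\eta$, the triangle inequality gives
\[
\frac{\eta}{2}<\sum_{j=1}^q\abs{\ell'_j(0)}\,\abs{g_{ab}(t_j)}=W\sum_{j=1}^q p_j\abs{g_{ab}(t_j)}.
\]

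Now we use positivity. Because $J_{t_j}\succeq0$, each $2\times2$ principal submatrix in the Bell basis is positive semidefinite, so $\abs{g_{ab}(t_j)}\leq\sqrt{\lambda_a(t_j)\lambda_b(t_j)}$. We also have $\lambda_b(t_j)\leq1$, since the $\lambda$'s form a probability distribution. Together these give $\abs{g_{ab}(t_j)}\leq\sqrt{\lambda_a(t_j)}$. Jensen's inequality (concavity of the square root) under the weights $p_j$ then yields
\[
\sum_j p_j\sqrt{\lambda_a(t_j)}\leq\sqrt{\mu(a)}.
\]
Combining, $\eta/(2W)<\sqrt{\mu(a)}$, so $\mu(a)>\eta^2/(4W^2)=\alpha$. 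By the symmetric argument, $\mu(b)>\alpha$ as well.

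Finally we connect this to the two cases of the lemma via \cref{lem:bell_chi}. For dissipative coefficients, $\gamma_{ab}=g'_{ab}(0)$, so $\abs{\gamma_{ab}}\geq\eta$ directly gives $\mu(a),\mu(b)\geq\alpha$. For Hamiltonian coefficients, $h_a=\Imag g'_{0a}(0)$, so $\abs{g'_{0a}(0)}\geq\abs{h_a}\geq\eta$. Applying the argument to the pair $(0,a)$ and keeping only the conclusion for the label $a$ gives $\mu(a)\geq\alpha$.

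The main point to check carefully is the constant bookkeeping. The complex interpolation error $\sqrt2\eta/4$ must stay strictly below $\eta/2$, and the $j=0$ term must vanish exactly, which is what lets the normalized positive-time weights $p_j$ appear.
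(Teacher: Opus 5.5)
Your proposal is correct and follows the paper's proof essentially step for step. It uses the same complex endpoint-interpolation bound with the vanishing $t_0$ term, the same triangle inequality against the normalized weights $p_j$, and the same $2\times 2$ positivity bound on $J_{t_j}$. The only cosmetic difference is that you bound $|g_{ab}(t_j)|\le\sqrt{\lambda_a(t_j)}$ pointwise and then apply Jensen to the square root, whereas the paper applies Cauchy--Schwarz first and then uses $|g_{ab}(t_j)|^2\le\lambda_a(t_j)$; these are equivalent.
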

\begin{proof}
Fix a pair of labels $(a,b)\neq(0,0)$. By \cref{lem:choi_derivative_bound} and our choices of $T$ and $q$, the interpolation-error bound of \cref{lem:interpolation_error} gives
\[
\abs[\Big]{g'_{ab}(0)-\sum_{j=1}^q \ell'_j(0) \,g_{ab}(t_j)} \leq \frac{\sqrt{2} \Lambda}{(q+1)!} \leq\frac{\eta}{2}.
\]
Because $g_{ab}(0) = 0$, the $t_0 = 0$ term vanishes. Suppose that $\abs{g'_{ab}(0)}\geq\eta$. The triangle inequality followed by the Cauchy--Schwarz inequality for the probability weights $p_j$ gives
\begin{equation}\label{eq:weighted_coherence}
\frac{\eta}{2} \leq \abs[\Big]{\sum_{j=1}^q \ell'_j(0)\,g_{ab}(t_j)}
\leq W \sum_{j=1}^q p_j\, \abs{g_{ab}(t_j)}
\leq W \sqrt{\sum_{j=1}^q p_j\, \abs{g_{ab}(t_j)}^2}.
\end{equation}
Since $J_t\succeq0$ has unit trace, if $a\neq b$, the $2\times2$ principal submatrix of $J_t$ on the orthonormal Bell vectors $\ket{\Phi_a}$ and $\ket{\Phi_b}$ is positive semidefinite, and every diagonal entry is at most one. If $a=b$, then $g_{aa}(t)=\lambda_a(t)$ and $0\leq\lambda_a(t)\leq1$. Thus
\begin{equation}\label{eq:coherence_to_error_rates}
\abs{g_{ab}(t)}^2 \leq \lambda_a(t)\lambda_b(t) \leq \min\cbra{\lambda_a(t), \lambda_b(t)}.
\end{equation}
Combining \cref{eq:weighted_coherence,eq:coherence_to_error_rates} gives
\[
\mu(a) =  \sum_{j=1}^q p_j\,\lambda_a(t_j) \geq  \sum_{j=1}^q p_j\,\abs{g_{ab}(t_j)}^2 \geq \frac{\eta^2}{4 W^2} = \alpha,
\]
and the same holds for $\mu(b)$.

For a dissipative pair with $\abs{\gamma_{ab}}\geq\eta$, \cref{lem:bell_chi,lem:chi_to_coeff} give $g'_{ab}(0)=\chi_{ab}=\gamma_{ab}$, so the preceding argument applies to $(a,b)$ and yields $\mu(a),\mu(b)\geq\alpha$; when $a=b$, this conclusion reduces to $\mu(a)\geq\alpha$. For a Hamiltonian label with $\abs{h_a}\geq\eta$, we have $\abs{g'_{0a}(0)}\geq\abs{\Imag\chi_{0a}}=\abs{h_a}\geq\eta$, and the preceding argument applied to $(0,a)$ yields $\mu(a)\geq\alpha$.
\end{proof}

\subsection{Sampling heavy-label marginals}\label{sec:heavy_label_marginals}
The next step is to sample the $x$- and $z$-marginals of $\mu$, which suffice to construct a candidate set containing every heavy label. These marginals can be sampled by implementing the Pauli channel whose error distribution is $\mu$.
\begin{lemma}[Twirled channel realizing the mixture]\label{lem:twirled_channel_mixture}
    Define the following channel by twirling and mixing the semigroup $\{e^{t\cL}\}_{t\geq0}$:
    \[
    \closure{\cE}(\rho) \coloneq \sum_{j=1}^q \frac{p_j}{4^n} \sum_{c\in \F_2^{2n}} P_c e^{t_j \cL}\ab(P_c \rho P_c) P_c.
    \]
    Then
    \[
    \closure{\cE}(\rho) = \sum_{a \in \F_2^{2n}} \mu(a) P_a \rho P_a.
    \]
\end{lemma}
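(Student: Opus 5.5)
By linearity in the mixture weights $p_j$, it suffices to prove that for each fixed time $t\geq0$ the Pauli twirl of $e^{t\cL}$ is
\[
\frac{1}{4^n}\sum_{c} P_c\, e^{t\cL}(P_c\rho P_c)\,P_c=\sum_a \lambda_a(t)\,P_a\rho P_a .
\]
Averaging this identity against $p_j$ at the nodes $t_j$ then gives $\mu(a)=\sum_j p_j\lambda_a(t_j)$ as the coefficient of $P_a\rho P_a$, by the definition in \cref{eq:mixture}.

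For fixed $t$, expand the channel in its $\chi$-matrix, $e^{t\cL}(\sigma)=\sum_{a,b}\chi_{ab}(t)P_a\sigma P_b$. Substituting $\sigma=P_c\rho P_c$ and conjugating by $P_c$, each term becomes $P_cP_aP_c\,\rho\,P_cP_bP_c$. Applying the conjugation rule \cref{eq:pauli_conjugation} to both outer factors gives
\[
(-1)^{\sinprod{a}{c}+\sinprod{b}{c}}\,P_a\rho P_b=(-1)^{\sinprod{a\oplus b}{c}}\,P_a\rho P_b,
\]
where the last step uses bilinearity of the symplectic form over $\F_2$.

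Next, average over $c$. By symplectic character orthogonality (\cref{fact:char_orth}), $\frac{1}{4^n}\sum_c(-1)^{\sinprod{a\oplus b}{c}}=\mathbf{1}[a=b]$, so every off-diagonal term is killed and the twirl reduces to $\sum_a\chi_{aa}(t)P_a\rho P_a$. Finally, \cref{lem:bell_chi} gives $\chi_{aa}(t)=g_{aa}(t)=\lambda_a(t)$, which is the claimed identity.

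No step here is a real obstacle; this is the standard Pauli-twirl computation. The only point needing care is the sign bookkeeping: the conjugation phases must combine into a single character evaluated at $a\oplus b$. I will also note that interchanging the finite sums over $j$, $c$, $a$, $b$ is harmless, since everything is finite-dimensional.
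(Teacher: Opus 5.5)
Your proposal is correct and follows the paper's proof essentially step for step: expand $e^{t_j\cL}$ in its $\chi$-matrix, combine the conjugation signs into $(-1)^{\sinprod{a\oplus b}{c}}$, use symplectic character orthogonality (\cref{fact:char_orth}) to eliminate the off-diagonal terms, identify $\chi_{aa}(t_j)=\lambda_a(t_j)$ via \cref{lem:bell_chi}, and average over $j$ with weights $p_j$. Reducing first to a single fixed time $t$ and then mixing is only a cosmetic reordering of the same argument.
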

\begin{proof}
Write $e^{t_j \cL}(\rho) = \sum_{a,b}\chi_{ab}(t_j) P_a \rho P_b$. Applying \cref{lem:bell_chi} to the finite-time evolution $e^{t_j \cL}$ gives $\chi_{ab}(t_j) = g_{ab}(t_j)$ and $\chi_{aa}(t_j) = \lambda_{a}(t_j)$. Conjugation by a Pauli operator $P_c$ yields
\begin{align*}
P_c e^{t_j \cL}\ab(P_c \rho P_c) P_c &= \sum_{a,b}\chi_{ab}(t_j) P_c P_a P_c \rho P_c P_b P_c \\
&= \sum_{a,b}\chi_{ab}(t_j) \ab\big((-1)^{\sinprod{a}{c}}P_a) \rho \ab\big((-1)^{\sinprod{b}{c}} P_b)\\
&= \sum_{a,b} (-1)^{\sinprod{a\oplus b}{c}} \chi_{ab}(t_j) P_a \rho P_b.
\end{align*}
By the symplectic character orthogonality in \cref{fact:char_orth}, averaging over the uniformly random twirling label $c \in \F_2^{2n}$ gives
\begin{align*}
\frac{1}{4^n} \sum_{c \in \F_2^{2n}} P_c e^{t_j \cL}\ab(P_c \rho P_c) P_c &= \sum_{a,b} \chi_{ab}(t_j) \ab\Big(\frac{1}{4^n} \sum_{c\in \F_2^{2n}} (-1)^{\sinprod{a\oplus b}{c}}) P_a \rho P_b\\
&= \sum_{a,b} \chi_{ab}(t_j) \ab\big(\mathbf{1}[a\oplus b = 0]) P_a \rho P_b\\
&= \sum_{a \in \F_2^{2n}} \lambda_{a}(t_j) P_a \rho P_a.
\end{align*}
Averaging these twirled channels over $j$ with weights $p_j$ gives
\[
\closure{\cE}(\rho)
=\sum_{j=1}^q p_j \sum_{a \in \F_2^{2n}} \lambda_a(t_j) P_a\rho P_a
=\sum_{a}\ab\Big(\sum_{j=1}^q p_j \lambda_a(t_j)) P_a\rho P_a
=\sum_{a}\mu(a)P_a\rho P_a. \qedhere
\]
\end{proof}

\begin{algorithm}[ht!]
\caption{Support learning by displacement sampling}\label{alg:displacement}
\begin{algorithmic}[1]
\Require Access to $\{e^{t\cL}\}_{t\geq0}$; threshold $\eta$; failure probability $\delta$; dynamical strength bound $\Lambda$.
\Ensure Candidate supports $\hat\cS_H$ and $\hat\cS_D$.
\State Set $T\gets1/\Lambda$ and $q\gets\min\cbra{q'\in\N_{>0}:(q'+1)!\geq4\Lambda/\eta}$; compute the nodes $t_1,\ldots,t_q$ and weights $\ell'_1(0),\ldots,\ell'_q(0)$.
\State Set $W \gets \sum_{j=1}^q \abs{\ell'_j(0)}$, $p_j \gets  \abs{\ell'_j(0)}/W$, and $\alpha \gets \eta^2/(4 W^2)$.
\State Set $N_{\mathrm{supp}}\gets\ceil[\big]{\alpha^{-1}\log(2/(\alpha\delta))}$ and $\hat\cX\gets\emptyset$, $\hat\cZ\gets\emptyset$.
\For{$m=1,\ldots,N_{\mathrm{supp}}$}
  \State Sample $j\in[q]$ with probability $p_j$, a uniformly random label $c\in\F_2^{2n}$, and a uniformly random bit string $u\in\{0,1\}^n$.
  \State Prepare $\ket{u}$, apply $P_c$, evolve under $e^{t_j\cL}$, and apply $P_c$ again.
  \State Measure each qubit in the computational basis, obtaining outcome $v$.
  \State Add $u\oplus v$ to $\hat\cX$.
\EndFor
\For{$m=1,\ldots,N_{\mathrm{supp}}$}
  \State Sample $j\in[q]$ with probability $p_j$, a uniformly random label $c\in\F_2^{2n}$, and a uniformly random bit string $u\in\{0,1\}^n$.
  \State Prepare $\ket{u}_X$, apply $P_c$, evolve under $e^{t_j\cL}$, and apply $P_c$ again.
  \State Measure each qubit in the $X$ basis, obtaining outcome $v$.
  \State Add $u\oplus v$ to $\hat\cZ$.
\EndFor
\State Set $\hat\cS\gets\cbra{(z,x)\in\F_2^{2n}:z\in\hat\cZ,\ x\in\hat\cX}\setminus\{0\}$.
\State \Return $\hat\cS_H \gets \hat\cS$ and $\hat\cS_D \gets \hat\cS \times \hat\cS$.
\end{algorithmic}
\end{algorithm}

The two marginals of $\mu$ can be sampled directly: a Pauli operator produces observable displacements in both the computational and $X$ bases.

\begin{lemma}[Displacement identities]\label{lem:displacement_identity}
Write a Pauli label as $a=(z,x)\in\F_2^{2n}$, and for $u\in\{0,1\}^n$ let $\ket{u}_X\coloneq H^{\otimes n}\ket{u}$ denote the $X$-basis states, where $H$ is the Hadamard gate. Then
\begin{equation*}
P_a\ket{u} = \omega_{a,u}\ket{u\oplus x},
\qquad
P_a\ket{u}_X = \widetilde\omega_{a,u}\ket{u\oplus z}_X
\end{equation*}
for phases $\omega_{a,u},\widetilde\omega_{a,u}\in\bbC$ satisfying $\abs{\omega_{a,u}}=\abs{\widetilde\omega_{a,u}}=1$.
\end{lemma}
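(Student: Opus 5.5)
The plan is a direct computation that exploits the tensor-product structure. By \cref{def:binary_rep}, $P_a=\bigotimes_{i=1}^n(-\iu)^{z_ix_i}Z^{z_i}X^{x_i}$, and the computational basis state factorizes as $\ket{u}=\bigotimes_i\ket{u_i}$. The $X$-basis state factorizes as well, since $\ket{u}_X=H^{\otimes n}\ket{u}=\bigotimes_i H\ket{u_i}$. It therefore suffices to prove the single-qubit identities and then take products of the phases. A product of unit-modulus scalars has unit modulus, and the label $u\oplus x$ (respectively $u\oplus z$) is assembled coordinatewise.

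For the computational basis, I will use $X^{x_i}\ket{u_i}=\ket{u_i\oplus x_i}$ and $Z^{z_i}\ket{w}=(-1)^{z_iw}\ket{w}$. This gives
\[
(-\iu)^{z_ix_i}Z^{z_i}X^{x_i}\ket{u_i}=(-\iu)^{z_ix_i}(-1)^{z_i(u_i\oplus x_i)}\ket{u_i\oplus x_i},
\]
so $\omega_{a,u}=\prod_i(-\iu)^{z_ix_i}(-1)^{z_i(u_i\oplus x_i)}$, which clearly has modulus one.

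For the $X$ basis, I will conjugate by Hadamard: $HXH=Z$ and $HZH=X$. Then
\[
Z^{z_i}X^{x_i}H\ket{u_i}=H\,(HZ^{z_i}H)(HX^{x_i}H)\ket{u_i}=H X^{z_i}Z^{x_i}\ket{u_i}=(-1)^{x_iu_i}H\ket{u_i\oplus z_i}.
\]
Together with the prefactor $(-\iu)^{z_ix_i}$, this yields the phase $\widetilde\omega_{a,u}=\prod_i(-\iu)^{z_ix_i}(-1)^{x_iu_i}$, again of modulus one.

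There is no real obstacle here. The only point requiring care is the ordering of $Z$ and $X$ in the binary representation, and that ordering affects only the phase, never the displaced label. That is all the subsequent algorithm needs: it uses only $\abs{\omega}=1$ and the displaced labels.
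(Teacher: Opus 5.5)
Your proof is correct and is essentially the paper's argument: both are a direct computation in which $X^x$ acts as a bit-flip and $Z^z$ as a phase on the computational basis, and both use $HXH=Z$ and $HZH=X$ to move the Hadamards past the Paulis. This yields the same phases $\omega_{a,u}=\prod_i(-\iu)^{z_ix_i}(-1)^{z\cdot(u\oplus x)}$ and $\widetilde\omega_{a,u}=\prod_i(-\iu)^{z_ix_i}(-1)^{x\cdot u}$; the only difference, which is cosmetic, is that you factor into single-qubit identities first, whereas the paper works with $Z^z$, $X^x$, and $H^{\otimes n}$ directly.
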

\begin{proof}
By the binary representation in \cref{def:binary_rep},
\[
P_a=\vartheta(z,x)Z^zX^x,\qquad \vartheta(z,x)\coloneq\prod_{i=1}^n(-\iu)^{z_ix_i}.
\]
Here, $Z^{z}\coloneq\bigotimes_{i=1}^nZ^{z_i}$ and $X^{x}\coloneq\bigotimes_{i=1}^nX^{x_i}$. On the computational basis, $X^{x}\ket{u}=\ket{u\oplus x}$ and $Z^{z}\ket{v}=(-1)^{z\cdot v}\ket{v}$ for every $v\in\{0,1\}^n$, so
\[
P_a\ket{u} = \vartheta(z,x)(-1)^{z\cdot(u\oplus x)}\ket{u\oplus x},
\]
and hence $\omega_{a,u}=\vartheta(z,x)(-1)^{z\cdot(u\oplus x)}$.
For the $X$ basis, the conjugation rules $HXH=Z$ and $HZH=X$ give $X^{x}H^{\otimes n} = H^{\otimes n}Z^{x}$ and $Z^{z}H^{\otimes n} = H^{\otimes n}X^{z}$. Hence
\begin{align*}
P_a\ket{u}_X
&=\vartheta(z,x)Z^{z}X^{x}H^{\otimes n}\ket{u}\\
&=\vartheta(z,x)(-1)^{x\cdot u}Z^{z}H^{\otimes n}\ket{u}\\
&=\vartheta(z,x)(-1)^{x\cdot u}H^{\otimes n}\ket{u\oplus z}\\
&=\vartheta(z,x)(-1)^{x\cdot u}\ket{u\oplus z}_X.
\end{align*}
Thus $\widetilde\omega_{a,u}=\vartheta(z,x)(-1)^{x\cdot u}$.
\end{proof}

Ignoring the phases, the displacement $\ket{u}\mapsto\ket{u\oplus x}$ reveals the $x$-component of $a$, while $\ket{u}_X\mapsto\ket{u\oplus z}_X$ reveals its $z$-component. Thus computational-basis experiments sample the $x$-marginal of $\mu$, whereas $X$-basis experiments sample its $z$-marginal.

Using this observation, we sample the two marginals of $\mu$. Consider the computational-basis experiment:
\begin{enumerate}
    \item Draw a node index $j \in [q]$ with probability $p_j$ and a uniformly random Pauli label $c \in \F_2^{2n}$.
    \item Prepare a uniformly random computational-basis state $\ket{u}$.
    \item Apply $P_c$, evolve under $e^{t_j\cL}$, and apply $P_c$ again.
    \item Measure in the computational basis, obtaining outcome $v$, and record the displacement $u \oplus v$.
\end{enumerate}
By \cref{lem:twirled_channel_mixture}, averaging over $j$ and $c$ makes these operations equivalent to applying a Pauli operator $P_a$ with probability $\mu(a)$. The displacement $u\oplus v$ is therefore the $x$-component of $a$. The $X$-basis experiment similarly samples its $z$-component. With sufficiently many experiments, the $x$- and $z$-components of every label satisfying $\mu(a)\geq\alpha$ appear in their respective recorded sets with high probability. Hence, the Cartesian product of the two recorded sets contains every $\alpha$-heavy Pauli label. The complete procedure is given in \cref{alg:displacement}.

\begin{theorem}[Threshold support learning]\label{thm:support_learning}
Let $\cL$ be an arbitrary $n$-qubit Lindbladian satisfying $\norm{\cL^\dag}_{\infty\to\infty}\leq\Lambda$, and let $\cS_H^{\geq\eta}$ and $\cS_D^{\geq\eta}$ be the $\eta$-heavy supports defined in \cref{def:lindblad_support}. For every $\eta\in(0,\Lambda]$ and $\delta\in(0,1)$, \cref{alg:displacement} outputs $\hat\cS_H$ and $\hat\cS_D$ such that, with probability at least $1-\delta$,
\[
\cS_{H}^{\geq \eta} \subseteq \hat\cS_{H}, \qquad \cS_{D}^{\geq \eta} \subseteq \hat\cS_{D}.
\]
Moreover, the output size deterministically satisfies $\abs{\hat\cS_H} + \abs{\hat\cS_D} = \widetilde O\ab\big(\Lambda^{8} \eta^{-8}\log^4(1/\delta))$. The number of experiments is
\[
\widetilde O\ab\Big(\frac{\Lambda^2}{\eta^2}\log\frac{1}{\delta}).
\]
The total evolution time is $\Ttot=\widetilde O\ab\big(\Lambda\eta^{-2}\log(1/\delta))$. The classical running time is $\widetilde O\ab\big(n\Lambda^{8}\eta^{-8}\log^4(1/\delta))$.
\end{theorem}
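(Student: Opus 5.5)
We prove correctness first, then the deterministic size bound, then the resource and running-time bounds.

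\textbf{Correctness.} By \cref{lem:coh_to_pop}, every $a\in\cS_H^{\geq\eta}$, and every component $a$ or $b$ of a pair $(a,b)\in\cS_D^{\geq\eta}$, is $\alpha$-heavy for $\mu$, that is, $\mu(a)\geq\alpha$. So it suffices to show that, with probability at least $1-\delta$, every $\alpha$-heavy label $a=(z,x)$ has $x\in\hat\cX$ and $z\in\hat\cZ$. Then $a\in\hat\cS$, because the heavy labels here are nonidentity. This gives $\cS_H^{\geq\eta}\subseteq\hat\cS$ and $\cS_D^{\geq\eta}\subseteq\hat\cS\times\hat\cS$.

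To see this, identify the law of a single recorded displacement. Condition on $j$ and $c$ and average over $u$. By \cref{lem:twirled_channel_mixture}, the overall operation prepares $\ket{u}$ and applies the Pauli channel $\sum_a\mu(a)P_a\rho P_a$. Because the channel is a Pauli mixture, the output state is $\sum_a\mu(a)\ketbra{u\oplus x_a}{u\oplus x_a}$, with all phases cancelling by \cref{lem:displacement_identity}. Hence the displacement $u\oplus v$ is exactly distributed as the $x$-marginal of $\mu$. The same argument in the $X$ basis shows that $X$-basis displacements follow the $z$-marginal.

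Each marginal dominates $\mu$ pointwise, so for an $\alpha$-heavy label the probability that its $x$-component is missed in $N_{\mathrm{supp}}$ independent draws is at most $(1-\alpha)^{N_{\mathrm{supp}}}\leq e^{-\alpha N_{\mathrm{supp}}}\leq \alpha\delta/2$. The same bound holds for its $z$-component. Since $\mu$ is a probability distribution, there are at most $1/\alpha$ heavy labels. A union bound over these labels and the two components gives total failure probability at most $\delta$.

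\textbf{Size bound.} Deterministically, $\abs{\hat\cX},\abs{\hat\cZ}\leq N_{\mathrm{supp}}$, so $\abs{\hat\cS}\leq N_{\mathrm{supp}}^2$ and $\abs{\hat\cS_H}+\abs{\hat\cS_D}\leq N_{\mathrm{supp}}^2+N_{\mathrm{supp}}^4$. By \cref{lem:weight_bound}, $W=O(q^2\Lambda)$. The condition $(q+1)!\geq4\Lambda/\eta$ gives $q=O(\log(\Lambda/\eta)/\log\log(\Lambda/\eta))$, so $q=\widetilde O(1)$ and therefore $1/\alpha=4W^2/\eta^2=\widetilde O(\Lambda^2/\eta^2)$. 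It follows that
\[
N_{\mathrm{supp}}=\widetilde O\bigl(\Lambda^2\eta^{-2}\log(1/\delta)\bigr).
\]
Raising this to the fourth power gives the stated size bound, with $\log(1/\alpha)$ absorbed into the tilde.

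\textbf{Resources.} The algorithm runs $2N_{\mathrm{supp}}$ experiments, each with evolution time $t_j\leq T=1/\Lambda$. The total evolution time is therefore at most $2N_{\mathrm{supp}}/\Lambda=\widetilde O(\Lambda\eta^{-2}\log(1/\delta))$.

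The classical work consists of sampling, $O(n)$ bit operations per experiment, and forming the Cartesian products. Each product entry is a $2n$-bit label, so this costs $O(n\,N_{\mathrm{supp}}^4)$, which gives $\widetilde O(n\Lambda^8\eta^{-8}\log^4(1/\delta))$. The nodes and weights are computed in $\poly(q)$ time.

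\textbf{Main obstacle.} The step needing most care is the exact identification of the displacement law. The averaging over $c$, $j$, and $u$ must reduce precisely to the Pauli channel of \cref{lem:twirled_channel_mixture}, and one must check that off-diagonal terms vanish so that the measured distribution carries no interference. The remaining steps are routine bookkeeping on $q$ and $W$.
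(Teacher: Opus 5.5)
Your proposal is correct and follows essentially the same route as the paper: \cref{lem:coh_to_pop} reduces everything to $\alpha$-heavy labels of $\mu$; the twirled mixture and the displacement identities show that the two batches sample $\mu_X$ and $\mu_Z$; a per-event miss probability of $\alpha\delta/2$ is union-bounded over $2/\alpha$ events (you index these by heavy labels of $\mu$ and their two components, the paper by heavy labels of $\mu_X$ and $\mu_Z$, which is equivalent); and the same $N_{\mathrm{supp}}^4$ counting gives the size, cost, and running-time bounds. One wording slip: \cref{lem:twirled_channel_mixture} applies when you fix $u$ and average over $j$ and $c$, not when you condition on $j,c$ and average over $u$ (although averaging over a uniform $u$ alone happens to also remove the interference terms, yielding the $x$-marginal of $\lambda(t_j)$).
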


\begin{proof}
We first prove correctness. Because measurement probabilities are linear in the channel, averaging a single experiment over the random node $j$ and twirling label $c$ yields exactly the displacement distribution of the mixture channel $\closure{\cE}$ in \cref{lem:twirled_channel_mixture}. By the displacement identities in \cref{lem:displacement_identity}, a computational-basis round records $u\oplus v=x$ with probability $\mu_X(x)\coloneq\sum_{z\in\F_2^{n}}\mu(z,x)$, and an $X$-basis round records $u\oplus v=z$ with probability $\mu_Z(z)\coloneq\sum_{x\in\F_2^{n}}\mu(z,x)$.

Since $\mu_X$ and $\mu_Z$ are probability distributions, each has at most $1/\alpha$ labels of mass at least $\alpha$. A fixed such label is missed by all $N_{\mathrm{supp}}=\ceil{\alpha^{-1}\log(2/(\alpha\delta))}$ shots in the corresponding batch with probability at most
\[
(1-\alpha)^{N_{\mathrm{supp}}} \leq e^{-\alpha N_{\mathrm{supp}}} \leq \frac{\alpha\delta}{2}.
\]
A union bound over the at most $2/\alpha$ heavy labels across the two marginal distributions shows that, with probability at least $1-\delta$, every $x$ with $\mu_X(x)\geq\alpha$ lies in $\hat\cX$ and every $z$ with $\mu_Z(z)\geq\alpha$ lies in $\hat\cZ$. On this event, every label $r=(z,x)$ with $\mu(r)\geq\alpha$ satisfies $\mu_X(x)\geq\mu(r)\geq\alpha$ and $\mu_Z(z)\geq\alpha$, hence $r\in\hat\cS$.

By \cref{lem:coh_to_pop}, every Hamiltonian label $a$ with $\abs{h_a}\geq\eta$ satisfies $\mu(a)\geq\alpha$ and hence lies in $\hat\cS$. Similarly, if $\abs{\gamma_{ab}}\geq\eta$, then both $a$ and $b$ satisfy $\mu(a),\mu(b)\geq\alpha$ and hence lie in $\hat\cS$.

We next analyze the cost of the algorithm. By \cref{lem:weight_bound}, $W=O(q^2\Lambda)$ with $q=\widetilde\Theta(1)$, and $\alpha^{-1}=4W^2/\eta^2$, so the number of experiments is
\[
2N_{\mathrm{supp}} = 2\ceil[\Big]{\frac{1}{\alpha}\log\frac{2}{\alpha\delta}} = \widetilde O\ab\Big(\frac{\Lambda^2}{\eta^{2}} \log\frac{1}{\delta}).
\]
Every experiment evolves for time at most $T=1/\Lambda$, giving $\Ttot\leq 2N_{\mathrm{supp}}/\Lambda$. Because each batch records at most $N_{\mathrm{supp}}$ distinct displacements, $\abs{\hat\cZ},\abs{\hat\cX}\leq N_{\mathrm{supp}}$. Consequently, $\abs{\hat\cS}=O(N_{\mathrm{supp}}^2)$, and
\[
\abs{\hat\cS_H} + \abs{\hat\cS_D} = O(N_{\mathrm{supp}}^4)= \widetilde O\ab\Big(\frac{\Lambda^{8}}{\eta^{8}}\log^4\ab\big(\frac{1}{\delta})).
\]
Recording the displacements costs $O(nN_{\mathrm{supp}})$, and constructing the candidate sets costs $O\ab\big(n(\abs{\hat\cS_D}+\abs{\hat\cS_H}))=\widetilde O\ab\big(n\Lambda^{8}\eta^{-8}\log^4(1/\delta))$.
\end{proof}

\section{Learning Lindbladian coefficients by Clifford probing}\label{sec:shadow}

In this section, we learn the coefficients of an arbitrary Lindbladian given candidate supports. The algorithm operates in the Clifford-probing model and remains ancilla-free and control-free, while allowing Clifford circuits before and after each dynamical evolution. The idea is to express the coefficients of $\cL$ in terms of Bell-basis matrix elements of the Choi state of the evolution (\cref{sec:choi_witness}) and to estimate all such matrix elements simultaneously from a single collection of classical-shadow snapshots obtained by system-only experiments (\cref{sec:process_shadow}). The Choi state itself is a purely mathematical device: no entangled reference register or Bell pair is ever prepared, and no controlled application of the channel is ever performed. \Cref{sec:shadow_alg} assembles the algorithm and its guarantee.

\subsection{The Choi state and coefficient witnesses}\label{sec:choi_witness}

Recall from \cref{lem:bell_chi} that every coefficient of $\cL$ can be expressed as a linear combination of endpoint derivatives of Bell-basis matrix elements of the Choi state $J_t$. Real and imaginary parts of off-diagonal matrix elements are obtained from diagonal expectations by polarization, so all observables remain rank-one projectors.

We call an operator $A\in\cB(\mathscr{H})$ \emph{normalized} if $\tfrac{1}{d}\tr(A^\dagger A)=1$. Then
\[
\ket{\Phi_A} \coloneq (I\otimes A)\ket{\Phi_0}
\]
is a normalized state. Define the projector and its finite-time signal
\[
\Pi_A \coloneq \ketbra{\Phi_A}{\Phi_A},
\qquad
f_A(t) \coloneq \bra{\Phi_A}\, J_t \,\ket{\Phi_A} = \tr\ab\big(\Pi_A J_t).
\]
We call the operators collected below \emph{witnesses} because their signals are used to recover specific coefficients of $\cL$.

\begin{definition}[Witness family]\label{def:witness}
The \emph{witness family} $\cW$ consists of the following normalized operators:
\begin{itemize}
    \item the \emph{diagonal witnesses}: $P_a$ for every label $a\in\F_2^{2n}$, including $P_0=I$;
    \item the \emph{pair witnesses}: for every ordered pair $(a,b)$ of distinct labels $a \neq b$,
    \[
    A_{ab} \coloneq \frac{P_a+P_b}{\sqrt{2}},
    \qquad
    A^{\iu}_{ab} \coloneq \frac{P_a+\iu P_b}{\sqrt{2}}.
    \]
\end{itemize}
\end{definition}

Every $A\in\cW$ is normalized and satisfies $1\leq\norm{A}_\infty\leq\sqrt{2}$, and $\abs{\cW}\leq2\cdot16^n$. Normalization holds because distinct Pauli operators are orthogonal; for example,
\[
\tr\ab\big((A^{\iu}_{ab})^\dagger A^{\iu}_{ab})=\frac{1}{2}\tr(2I+\iu[P_a,P_b])=d.
\]
The operator-norm bounds follow from the triangle inequality and $\norm{A}_\infty^2\geq\frac{1}{d}\tr(A^\dagger A)=1$.

\begin{lemma}[Witness signals]\label{lem:witness_signals}
For all distinct labels $a \neq b$ and all $t\geq0$,
\[
f_{A_{ab}}(t)=\frac{f_{P_a}(t)+f_{P_b}(t)}{2}+\Real\bra{\Phi_a}\, J_t \,\ket{\Phi_b},
\qquad
f_{A^{\iu}_{ab}}(t)=\frac{f_{P_a}(t)+f_{P_b}(t)}{2}-\Imag\bra{\Phi_a}\, J_t\, \ket{\Phi_b}.
\]
Moreover, the initial signal $f_A(0)=\abs{\tfrac{1}{d}\tr A}^2$ is known for every witness:
\[
f_I(0)=1,\qquad
f_{A_{a0}}(0)=f_{A_{0a}}(0)=f_{A^{\iu}_{a0}}(0)=f_{A^{\iu}_{0a}}(0)=\frac{1}{2}
\quad(a\neq0),
\]
and $f_A(0)=0$ for every other witness $A\in\cW$.

\end{lemma}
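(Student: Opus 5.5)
The plan is to expand each pair witness state in the Bell basis and use sesquilinearity of $\ket{\Phi_A}$ in $A$. Since $A\mapsto\ket{\Phi_A}=(I\otimes A)\ket{\Phi_0}$ is linear, we have
\[
\ket{\Phi_{A_{ab}}}=\tfrac{1}{\sqrt2}\ab\big(\ket{\Phi_a}+\ket{\Phi_b}),
\qquad
\ket{\Phi_{A^{\iu}_{ab}}}=\tfrac{1}{\sqrt2}\ab\big(\ket{\Phi_a}+\iu\ket{\Phi_b}).
\]
Substituting into $f_A(t)=\bra{\Phi_A}J_t\ket{\Phi_A}$ and expanding yields four terms. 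These are the two diagonal terms $\frac12 g_{aa}(t)+\frac12 g_{bb}(t)=\frac12(f_{P_a}(t)+f_{P_b}(t))$ and two cross terms. For $A_{ab}$ the cross terms are $\frac12(g_{ab}+g_{ba})$, and Hermiticity of $J_t$ gives $g_{ba}=\overline{g_{ab}}$, so they sum to $\Real g_{ab}(t)$. For $A^{\iu}_{ab}$ the bra carries $-\iu$ on $\bra{\Phi_b}$, so the cross terms are $\frac12(\iu g_{ab}-\iu g_{ba})=\frac{\iu}{2}(g_{ab}-\overline{g_{ab}})=\frac{\iu}{2}\cdot 2\iu\Imag g_{ab}=-\Imag g_{ab}(t)$. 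This gives both identities.

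For the initial values, $J_0=\ketbra{\Phi_0}{\Phi_0}$, so $f_A(0)=\abs{\braket{\Phi_0}{\Phi_A}}^2$. By \cref{eq:mes_identity} with the first tensor factor $I$, $\braket{\Phi_0}{\Phi_A}=\bra{\Phi_0}(I\otimes A)\ket{\Phi_0}=\frac1d\tr A$, which gives $f_A(0)=\abs{\frac1d\tr A}^2$. It remains to evaluate $\frac1d\tr A$ on each witness using $\tr P_c=d\,\mathbf 1[c=0]$ from \cref{fact:pauli_inprod}:
\begin{itemize}
\item For $A=I$ the value is $1$.
\item For any other diagonal witness $P_a$ it is $0$.
\item For pair witnesses with one label equal to $0$ (and the other nonzero, since labels are distinct), $\frac1d\tr A$ is $\frac{1}{\sqrt2}$ or $\frac{\iu}{\sqrt2}$. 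Either way the squared modulus is $\frac12$; this covers all four orderings $A_{a0},A_{0a},A^{\iu}_{a0},A^{\iu}_{0a}$.
\item For pair witnesses with both labels nonzero, the trace vanishes.
\end{itemize}

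I expect no real obstacle. The only points needing care are the convention that the bra conjugates the coefficient $\iu$, which fixes the sign of the imaginary part, and invoking Hermiticity of $J_t$ for the cross terms.
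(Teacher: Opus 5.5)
Your proposal is correct and follows essentially the same route as the paper. Both arguments use linearity of $A\mapsto\ket{\Phi_A}$ to expand the pair-witness vectors in the Bell basis, use Hermiticity of $J_t$ to turn the cross terms into $\Real$ and $-\Imag$ of $\bra{\Phi_a}J_t\ket{\Phi_b}$, and use $f_A(0)=\abs{\braket{\Phi_0}{\Phi_A}}^2=\abs{\tfrac1d\tr A}^2$ with tracelessness of nonidentity Paulis for the initial values; your case-by-case evaluation of $\tfrac1d\tr A$, including the phase $\iu/\sqrt2$ for $A^{\iu}_{a0}$, simply spells out what the paper states in one line.
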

\begin{proof}
Because the map $A\mapsto(I\otimes A)\ket{\Phi_0}=\ket{\Phi_A}$ is linear, the Bell vectors associated with the two pair witnesses are
\[
\ket{\Phi_{A_{ab}}} = \frac{\ket{\Phi_a} + \ket{\Phi_b}}{\sqrt{2}},
\qquad
\ket{\Phi_{A^{\iu}_{ab}}} = \frac{\ket{\Phi_a} + \iu\ket{\Phi_b}}{\sqrt{2}}.
\]
For convenience, write $z\coloneq\bra{\Phi_a}\,J_t\,\ket{\Phi_b}$. Since $J_t$ is Hermitian, $\bra{\Phi_b}\,J_t\,\ket{\Phi_a}=\overline z$. Expanding the first pair witness by definition gives
\begin{align*}
f_{A_{ab}}(t) &= \frac{1}{2} \ab\big(\bra{\Phi_a} + \bra{\Phi_b})\,J_t\,\ab\big(\ket{\Phi_a} + \ket{\Phi_b})\\
&= \frac{1}{2} \ab\Big(\bra{\Phi_a}\, J_t\, \ket{\Phi_a} + \bra{\Phi_b}\, J_t\, \ket{\Phi_b} + z + \overline z)\\
&=\frac{f_{P_a}(t)+f_{P_b}(t)}{2}+\Real z,
\end{align*}
using $z+\overline z=2\Real z$. Similarly, for the second pair witness, we have
\begin{align*}
f_{A^{\iu}_{ab}}(t) &= \frac{1}{2} \ab\big(\bra{\Phi_a}-\iu\bra{\Phi_b})\,J_t\,\ab\big(\ket{\Phi_a}+\iu\ket{\Phi_b})\\
&= \frac{1}{2} \ab\Big(\bra{\Phi_a}\, J_t\, \ket{\Phi_a} + \iu z -\iu\overline z - \iu^2 \bra{\Phi_b}\, J_t\, \ket{\Phi_b})\\
&= \frac{f_{P_a}(t) + f_{P_b}(t)}{2} - \Imag z,
\end{align*}
using $\iu z-\iu\overline z=\iu\ab\big(z-\overline z)=\iu\cdot2\iu\,\Imag z=-2\Imag z$.

For the initial signal at $t=0$, $J_0 = \ketbra{\Phi_0}{\Phi_0}$ gives
\[
f_A(0) = \braket{\Phi_A}{\Phi_0} \braket{\Phi_0}{\Phi_A} = \abs{\braket{\Phi_0}{\Phi_A}}^2,
\qquad
\braket{\Phi_0}{\Phi_A} = \bra{\Phi_0}(I\otimes A)\ket{\Phi_0} = \frac{1}{d} \tr A.
\]
Hence $f_A(0) = \abs{\tfrac{1}{d} \tr A}^2$. The initial signal for every witness follows directly from the fact that nonidentity Pauli operators are traceless and $\tr I = d$.
\end{proof}

Differentiating the witness signals at $t=0$ and substituting the Bell-basis representation of the generator yields the Lindbladian coefficients.

\begin{lemma}[Coefficient identities]\label{lem:witness_identities}
For every nonidentity Pauli label $a$, the first two identities below hold; for every pair of distinct nonidentity Pauli labels $a,b$, the third holds:
\begin{align}
h_a &= f'_{A^{\iu}_{a0}}(0) - \frac{1}{2}\ab\big(f'_{P_a}(0) + f'_{I}(0)), \label{eq:witness_ham}\\
\gamma_{aa} &= f'_{P_a}(0),\label{eq:witness_diag}\\
\gamma_{ab} &= \sbra[\Big]{f'_{A_{ab}}(0) - \frac{1}{2}\ab\big(f'_{P_a}(0) + f'_{P_b}(0))}
+\iu\sbra[\Big]{\frac{1}{2}\ab\big(f'_{P_a}(0) + f'_{P_b}(0))-f'_{A^{\iu}_{ab}}(0)}.\label{eq:witness_offdiag}
\end{align}
\end{lemma}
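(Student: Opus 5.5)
The identities follow from differentiating the two polarization formulas of \cref{lem:witness_signals} at $t=0$ and reading off the resulting Bell-basis derivatives with \cref{lem:bell_chi}. Since $J_t$ is smooth in $t$, each witness signal is differentiable, and $\Real$ and $\Imag$ commute with differentiation. Write $g_{ab}(t)=\bra{\Phi_a}J_t\ket{\Phi_b}$, so that $f_{P_a}=g_{aa}$. Differentiating \cref{lem:witness_signals} gives, for distinct $a\neq b$,
\[
f'_{A_{ab}}(0)-\tfrac12\ab\big(f'_{P_a}(0)+f'_{P_b}(0)\big)=\Real g'_{ab}(0),
\qquad
\tfrac12\ab\big(f'_{P_a}(0)+f'_{P_b}(0)\big)-f'_{A^{\iu}_{ab}}(0)=\Imag g'_{ab}(0).
\]

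For \cref{eq:witness_diag}, note that $f_{P_a}(t)=g_{aa}(t)$. By \cref{lem:bell_chi}, $g'_{aa}(0)=\chi_{aa}$, and by \cref{lem:chi_to_coeff}, $\chi_{aa}=\gamma_{aa}$ for $a\neq 0$.

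For \cref{eq:witness_offdiag}, take distinct nonidentity $a,b$. The two displayed combinations are $\Real g'_{ab}(0)$ and $\Imag g'_{ab}(0)$. Adding the first to $\iu$ times the second gives $g'_{ab}(0)$, which equals $\gamma_{ab}$ by \cref{lem:bell_chi}.

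For \cref{eq:witness_ham}, apply the imaginary polarization identity to the ordered pair $(a,0)$. This is allowed because $a\neq 0$ and $A^{\iu}_{a0}\in\cW$. It gives
\[
f'_{A^{\iu}_{a0}}(0)-\tfrac12\ab\big(f'_{P_a}(0)+f'_{I}(0)\big)=-\Imag g'_{a0}(0).
\]
Hermiticity of $J'_0$ gives $g'_{a0}(0)=\overline{g'_{0a}(0)}$, so $-\Imag g'_{a0}(0)=\Imag g'_{0a}(0)$. This equals $h_a$ by \cref{lem:bell_chi}.

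The main point to watch is the order of the indices and the resulting sign in the Hamiltonian case. The witness $A^{\iu}_{a0}=(P_a+\iu I)/\sqrt2$ has $a$ first, so it naturally produces $-\Imag g_{a0}$ rather than $\Imag g_{0a}$. The conjugate-symmetry step is what reconciles the two, and it needs to be checked explicitly.

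The remaining steps are direct substitutions. No bound on the dynamics is needed, because the identities are exact and hold for the derivatives at $t=0$ themselves.
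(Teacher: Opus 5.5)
Your proposal is correct and follows essentially the same route as the paper. Both arguments differentiate the polarization identities of \cref{lem:witness_signals} at $t=0$, read off $\Real$ and $\Imag$ of $g'_{ab}(0)=\chi_{ab}=\gamma_{ab}$ via \cref{lem:bell_chi,lem:chi_to_coeff}, and obtain $h_a$ from the boundary pair $(a,0)$ using $g'_{a0}(0)=\overline{g'_{0a}(0)}$, with the same sign bookkeeping as the paper.
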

\begin{proof}
Since $f_A(t) = \tr\ab\big(\Pi_A J_t)$ is a fixed linear functional of $J_t$, its derivative at $t=0$ is the same functional evaluated on $J'_0$:
\begin{equation}\label{eq:deriv_functional}
f'_A(0) = \tr\ab\big(\Pi_A J'_0) = \bra{\Phi_A} \,J'_0\, \ket{\Phi_A}.
\end{equation}

\emph{Diagonal identity.} For $A=P_a$ with $a\neq0$, combining \cref{eq:deriv_functional} with \cref{lem:bell_chi} and \cref{lem:chi_to_coeff} gives
\[
f'_{P_a}(0) = \bra{\Phi_a} \, J'_0 \, \ket{\Phi_a} = \chi_{aa} = \gamma_{aa}.
\]

\emph{Off-diagonal identity.} Fix distinct labels $a,b$, allowing $b=0$. The polarization identities of \cref{lem:witness_signals} hold for every $t\geq0$, and every term in them is differentiable. Since taking real and imaginary parts commutes with differentiation, differentiating the two identities at $t=0$ and rearranging yields
\begin{align}
\Real\bra{\Phi_a} \, J'_0 \,\ket{\Phi_b} &= f'_{A_{ab}}(0) - \frac{1}{2} \ab\big(f'_{P_a}(0) + f'_{P_b}(0))\label{eq:re_extract},\\
\Imag\bra{\Phi_a}\, J'_0 \, \ket{\Phi_b} &= \frac{1}{2} \ab\big(f'_{P_a}(0) + f'_{P_b}(0)) - f'_{A^{\iu}_{ab}}(0)\label{eq:im_extract}.
\end{align}
For $a,b \neq 0$, \cref{lem:bell_chi} and \cref{lem:chi_to_coeff} give $\bra{\Phi_a} \, J'_0 \, \ket{\Phi_b} = \chi_{ab} = \gamma_{ab}$. Substituting into \cref{eq:re_extract,eq:im_extract} and recombining the two parts,
\[
\gamma_{ab} = \Real\gamma_{ab} + \iu \Imag\gamma_{ab} =
\sbra[\Big]{f'_{A_{ab}}(0) - \frac{1}{2}\ab\big(f'_{P_a}(0) + f'_{P_b}(0))} + \iu\sbra[\Big]{\frac{1}{2}\ab\big(f'_{P_a}(0) + f'_{P_b}(0)) - f'_{A^{\iu}_{ab}}(0)}.
\]

\emph{Hamiltonian identity.} Apply \cref{eq:im_extract} to the boundary pair $(a,0)$, where $a\neq 0$, $A^{\iu}_{a0} = (P_a + \iu I)/\sqrt{2}$, and $f_{P_0} = f_I$. We obtain
\[
\Imag\bra{\Phi_a} \, J'_0 \, \ket{\Phi_0} = \frac{1}{2} \ab\big(f'_{P_a}(0) + f'_{I}(0)) - f'_{A^{\iu}_{a0}}(0).
\]
By \cref{lem:bell_chi}, $\bra{\Phi_a}\,J'_0\,\ket{\Phi_0}=\chi_{a0}$. Hermiticity of $J'_0$ gives $\chi_{a0}=\overline{\chi_{0a}}$, while the same lemma gives $\Imag\chi_{0a}=h_a$. Hence $\Imag\bra{\Phi_a}\,J'_0\,\ket{\Phi_0}=-h_a$. Substituting and solving for $h_a$,
\[
h_a = f'_{A^{\iu}_{a0}}(0) - \frac{1}{2}\ab\big(f'_{P_a}(0) + f'_{I}(0)).\qedhere
\]
\end{proof}

As in the support-learning procedure of \cref{sec:support_learning}, the Hamiltonian contribution appears only in the imaginary part of a boundary element. Support learning uses its magnitude, whereas the witnesses read it off directly. To estimate the derivatives $f'_A(0)$, we again use Chebyshev interpolation as described in \cref{sec:chebyshev_interpolation}. This requires a uniform bound on the time derivatives of the signals. Since every witness $A \in \cW$ satisfies $\norm{A}_\infty\leq\sqrt{2}$, \cref{lem:choi_derivative_bound} gives
\begin{equation}\label{eq:witness_derivative}
\abs[\big]{f_A^{(r)}(t)} \leq \norm{A}_\infty^2\,\Lambda^r\leq 2\Lambda^r
\qquad\text{for every } r \in \N \text{ and } t \geq 0.
\end{equation}

\subsection{Ancilla-free process shadows}\label{sec:process_shadow}

We now show that the Bell-projector signals $f_A(t)$ can be estimated simultaneously by an ancilla-free experiment: a two-sided classical-shadow scheme for the process~\cite{HKP20,LLC24}. We denote by $\Cl_n$ the $n$-qubit Clifford group, whose elements map Pauli operators to Pauli operators up to phase under conjugation, and by $\Stab_n$ the set of $n$-qubit pure stabilizer-state projectors $C\ketbra{0^n}{0^n}C^\dagger$ with $C\in\Cl_n$. We use the exact low-order moments of the uniform stabilizer ensemble.

\begin{fact}[Stabilizer moments~\cite{KG15,Web16,Zhu17}]\label{fact:stab_design}
Let $\varphi$ be a uniformly random element of $\Stab_n$ and $d=2^n$. Then
\[
\E[\varphi]=\frac{I}{d},
\qquad
\E[\varphi\otimes\varphi]=\frac{I\otimes I+\SWAP}{d(d+1)},
\qquad
\E[\varphi\otimes\varphi\otimes\varphi] = \frac{\sum_{\pi\in S_3}W_\pi}{d(d+1)(d+2)},
\]
where $\SWAP$ exchanges the two factors and $W_\pi$ permutes the three factors according to $\pi$.
\end{fact}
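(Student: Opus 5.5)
The plan is to reduce all three identities to the statement that the uniform stabilizer ensemble is a complex projective $3$-design, and to certify the design property by a frame-potential computation. The first moment comes straight from Clifford invariance. The ensemble is invariant under conjugation by every $C\in\Cl_n$, so $\E[\varphi]$ commutes with all Paulis. By \cref{fact:pauli_inprod} it is therefore proportional to $I$, and the unit trace fixes it to $I/d$.

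For $k=2,3$, write $\Pi^{(k)}_{\mathrm{sym}}$ for the projector onto the symmetric subspace of $\mathscr{H}^{\otimes k}$, and $D_k\coloneq\binom{d+k-1}{k}$ for its dimension. Since $\varphi^{\otimes k}$ is supported on that subspace, so is $\rho_k\coloneq\E[\varphi^{\otimes k}]$. The Haar average is $\Pi^{(k)}_{\mathrm{sym}}/D_k=\sum_{\pi\in S_k}W_\pi/\prod_{i<k}(d+i)$, which is exactly the claimed right-hand side for $k=2,3$. Set $\Delta_k\coloneq\rho_k-\Pi^{(k)}_{\mathrm{sym}}/D_k$. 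Because $\tr\rho_k=1$ and $\rho_k$ lives on the symmetric subspace, the cross term collapses to $\tr(\rho_k\Pi^{(k)}_{\mathrm{sym}})=1$. Hence
\[
\norm{\Delta_k}_2^2=\tr(\rho_k^2)-\frac1{D_k}=\E_{\varphi,\varphi'}\abs{\braket{\varphi}{\varphi'}}^{2k}-\frac1{D_k},
\]
and it suffices to show that the stabilizer frame potential $F_k\coloneq\E\abs{\braket{\varphi}{\varphi'}}^{2k}$ equals $1/D_k$ for $k=3$. The case $k=2$ follows by a partial trace of the $k=3$ identity, using that $\tr_3\Pi^{(3)}_{\mathrm{sym}}$ is proportional to $\Pi^{(2)}_{\mathrm{sym}}$.

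To compute $F_3$, fix $\varphi'=\ketbra{0^n}{0^n}$; this is allowed by transitivity of $\Cl_n$ on $\Stab_n$. I will use the standard structure of stabilizer overlaps: for two stabilizer states, $\abs{\braket{\varphi}{\varphi'}}^2\in\{0\}\cup\{2^{-s}:0\leq s\leq n\}$, with value $2^{-s}$ exactly when the stabilizer groups intersect, without sign conflict, in a subgroup of index $2^s$. Let $N_s$ be the number of stabilizer states with overlap $2^{-s}$ with $\ket{0^n}$, and let $\abs{\Stab_n}=2^n\prod_{i=1}^n(2^i+1)$. Then
\[
F_3=\frac{1}{\abs{\Stab_n}}\sum_{s=0}^n N_s\,2^{-3s}.
\]
The target is $F_3=\frac{6}{d(d+1)(d+2)}=1/D_3$.

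The main obstacle is the count $N_s$ and the resulting sum. My first attempt would count $N_s$ as follows: choose the shared $(n-s)$-dimensional isotropic subspace inside the $Z$-type Lagrangian, together with its compatible signs. Then complete it to a Lagrangian that meets the $Z$-type space in exactly that subspace, which gives $2^{\binom{s+1}{2}}$ choices. Finally choose signs for the $s$ new generators, giving $2^s$ choices. Together this yields $N_s=\binom{n}{s}_2\,2^{s}\,2^{s(s+1)/2}$, where $\binom{n}{s}_2$ is the Gaussian binomial coefficient. I would then sum $\sum_s N_s2^{-3s}$ by induction on $n$, using the $q$-Pascal recursion for Gaussian binomials, and check the answer against $n=1$: there $F_3=\frac16(1+4\cdot\frac18)=\frac14=\frac{6}{2\cdot3\cdot4}$.

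If the direct summation proves messy, the fallback is the Pauli-expansion route. Expand $\E[\varphi^{\otimes3}]=d^{-3}\sum_{a,b,c}\E[\tr(P_a\varphi)\tr(P_b\varphi)\tr(P_c\varphi)]P_a\otimes P_b\otimes P_c$. The expectation vanishes unless $P_a,P_b,P_c$ pairwise commute and $P_c\propto P_aP_b$, and Clifford invariance reduces it to a few orbit cases whose values are computed by counting. The result is then matched term by term with the Pauli expansion of $\sum_\pi W_\pi$.
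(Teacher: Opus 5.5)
Your proof is correct. It necessarily differs from the paper, which gives no argument for this Fact and simply imports it from \cite{KG15,Web16,Zhu17}.

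Your route has three steps. Clifford invariance handles $k=1$. The partial trace $\tr_3$ gets $k=2$ from $k=3$. The identity $\norm{\E[\varphi^{\otimes3}]-\Pi^{(3)}_{\mathrm{sym}}/D_3}_2^2=F_3-1/D_3$ then reduces everything to one frame potential, computed by sorting stabilizer states according to their overlap with $\ket{0^n}$. This is essentially the Kueng--Gross argument: elementary, and using only the overlap structure of stabilizer states plus a count of Lagrangian subspaces.

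Webb and Zhu instead prove the stronger statement that $\Cl_n$ itself is a unitary $3$-design, using the representation theory of the Clifford group on $\mathscr{H}^{\otimes3}$. The state version then follows because $\Stab_n$ is the Clifford orbit of $\ket{0^n}$. That route gives moments of random Clifford unitaries directly, at the price of heavier machinery.

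Your count $N_s=\binom{n}{s}_2\,2^{s}\,2^{s(s+1)/2}$ is right. To justify it, pass to $U^{\perp}/U\cong\F_2^{2s}$, where Lagrangians transverse to the image of the $Z$-type space are graphs of symmetric $s\times s$ matrices. The signs on the shared subspace are forced to agree with $\ket{0^n}$, which leaves $2^s$ sign extensions.

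The sum you flag as the main obstacle closes in one line with the Cauchy $q$-binomial theorem $\prod_{i=0}^{n-1}(1+2^{i}t)=\sum_{s=0}^{n}2^{s(s-1)/2}\binom{n}{s}_2t^{s}$. Since $N_s2^{-3s}=\binom{n}{s}_2\,2^{s(s-1)/2}\,2^{-s}$, setting $t=1/2$ gives
\[
F_3=\frac{\prod_{i=0}^{n-1}(1+2^{i-1})}{2^n\prod_{i=1}^{n}(2^i+1)}=\frac{3}{2^n(2^{n-1}+1)(2^n+1)}=\frac{6}{d(d+1)(d+2)}
\]
for every $n\geq1$. The middle step uses $\prod_{i=0}^{n-1}(1+2^{i-1})=\tfrac32\prod_{i=0}^{n-2}(2^i+1)$ and $2^0+1=2$, so no induction is needed.

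Setting $t=2$ instead gives $\sum_sN_s2^{-s}=\prod_{i=1}^{n}(2^i+1)=\abs{\Stab_n}/d$. This is a quick check of your $N_s$ against the first moment. Your Pauli-expansion fallback is also valid but unnecessary.
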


That is, the stabilizer states form an exact complex projective $3$-design, which is precisely the moment order needed below. Fix a time $t\geq0$ and consider the following experiment with random Clifford probing:
\begin{enumerate}
\item Sample a uniformly random stabilizer state $\psi\in\Stab_n$;
\item Evolve under $e^{t\cL}$;
\item Apply an independent uniformly random Clifford $C\in\Cl_n$, measure in the computational basis to obtain an outcome $b\in\{0,1\}^n$, and set $\phi\coloneq C^\dagger\ketbra{b}{b}C$.
\end{enumerate}
Record the classical pair $(\psi,\phi)$ and define the \emph{snapshot}
\begin{equation}\label{eq:snapshot}
\hat J_t\coloneq\ab\big((d+1)\psi^{\top}-I)\otimes\ab\big((d+1)\phi-I).
\end{equation}
Here $\psi^\top=\overline\psi$ is again a stabilizer state, and $\hat J_t$ is a classical object that is never physically prepared and is used only in post-processing.

\begin{lemma}[Two-sided process shadow]\label{lem:shadow_unbiased}
$\E\ab[\hat J_t]=J_t$.
\end{lemma}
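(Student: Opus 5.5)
The plan is to compute the expectation in two stages: first over the measurement (the Clifford $C$ and the outcome $b$) with the input $\psi$ held fixed, and then over $\psi$. Since $\hat J_t$ is a tensor product whose first factor depends only on $\psi$, this splitting is natural.

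\emph{Inner expectation.} Fix $\psi$ and let $\sigma\coloneq e^{t\cL}(\psi)$. The post-measurement projector $\phi$ is produced by the standard random-Clifford measurement of~\cite{HKP20}. For each $C$, the outcome $b$ occurs with Born probability $\bra{b}C\sigma C^\dagger\ket{b}$. This gives
\[
\E[\phi\mid\psi]=\E_C\sum_b \bra{b}C\sigma C^\dagger\ket{b}\,C^\dagger\ketbra{b}{b}C.
\]
I will rewrite this as a partial trace $\tr_1[(\sigma\otimes I)\,\E(\varphi\otimes\varphi)]$, summed over the $d$ stabilizer states in each Clifford basis. Because $C^\dagger\ket{b}$ ranges uniformly over $\Stab_n$ with total multiplicity $d$, the second-moment formula of \cref{fact:stab_design} yields
\[
\E[\phi\mid\psi]=d\cdot\frac{\tr(\sigma)I+\sigma}{d(d+1)}=\frac{I+\sigma}{d+1}.
\]
It follows that $\E[(d+1)\phi-I\mid\psi]=\sigma=e^{t\cL}(\psi)$. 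This is the usual inversion of the measurement channel.

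\emph{Outer expectation.} By linearity,
\[
\E[\hat J_t]=\E_\psi\big[((d+1)\psi^\top-I)\otimes e^{t\cL}(\psi)\big]=(\mathrm{id}\otimes e^{t\cL})\big(\E_\psi[((d+1)\psi^\top-I)\otimes\psi]\big).
\]
To evaluate the bracket, I will apply the partial transpose on the first factor to the second-moment identity of \cref{fact:stab_design}. Note that transposition maps $\Stab_n$ to itself, although the identity is only needed for the distribution of $\psi$. Using $\SWAP^{\top_1}=d\,\ketbra{\Phi_0}{\Phi_0}$, we get
\[
\E[\psi^\top\otimes\psi]=\frac{I\otimes I+d\ketbra{\Phi_0}{\Phi_0}}{d(d+1)}.
\]
Since $\E[I\otimes\psi]=I\otimes I/d$, the bracket equals
\[
\frac{I\otimes I+d\ketbra{\Phi_0}{\Phi_0}}{d}-\frac{I\otimes I}{d}=\ketbra{\Phi_0}{\Phi_0}.
\]
Applying $\mathrm{id}\otimes e^{t\cL}$ to this gives exactly $J_t$.

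The main points requiring care are the partial-transpose identity $\SWAP^{\top_1}=d\ketbra{\Phi_0}{\Phi_0}$, which follows from a computational-basis expansion, and the justification of the conditional measurement average. For the latter, the Born weighting is what turns a first-moment average into a second-moment one. Independence of $C$ and $\psi$ ensures that the inner expectation can be taken with $\psi$ fixed.
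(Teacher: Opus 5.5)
Your proposal is correct and follows the paper's proof essentially step for step. It conditions on $\psi$, uses the Born-weighted stabilizer second moment to obtain $\E[(d+1)\phi-I\mid\psi]=e^{t\cL}(\psi)$, and then partially transposes the second-moment identity via $\SWAP^{\top_1}=d\ketbra{\Phi_0}{\Phi_0}$ to show that the input-side average $\E_\psi[((d+1)\psi^\top-I)\otimes\psi]$ equals $\ketbra{\Phi_0}{\Phi_0}$.
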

\begin{proof}
Condition on $\psi$ and let $\sigma\coloneq e^{t\cL}(\psi)$. For every fixed outcome $b$, the projector $C^\dagger\ketbra{b}{b}C$ is uniform over $\Stab_n$ if $C$ is uniform over $\Cl_n$. Thus for every function $g$,
\begin{equation}\label{eq:born_weighted}
\E\ab[g(\phi) \mid \psi]
=\E_{C} \sum_{b} \bra{b}C\sigma C^\dagger\ket{b} \cdot g\ab\big(C^\dagger\ketbra{b}{b}C)
= d \E_{\varphi}\ab[ \tr(\sigma\varphi) g(\varphi)],
\end{equation}
with $\varphi$ uniform over $\Stab_n$. Taking $g(\varphi)=\varphi$ in \cref{eq:born_weighted} and using
\[
\tr(\sigma\varphi)\,\varphi=\tr_1\ab((\sigma\otimes I)(\varphi\otimes\varphi)),
\]
where $\tr_1$ denotes the partial trace over the first subsystem, gives
\begin{align*}
    \E[\phi \mid \psi] = d \E_{\varphi} \ab[\tr(\sigma\varphi) \,\varphi] &= d \E_{\varphi}\ab[ \tr_1\ab((\sigma \otimes I)(\varphi \otimes \varphi))]\\
    &= d \tr_1\ab\Big((\sigma\otimes I)\, \E_\varphi[\varphi \otimes \varphi])\\
    &=  d \tr_1\ab\Big((\sigma\otimes I)\, \frac{I\otimes I + \SWAP}{d(d+1)}) \qquad \text{(by the second moment in \cref{fact:stab_design})}\\
    &=\frac{\tr(\sigma) \,I + \sigma}{d + 1} = \frac{I + \sigma}{d + 1}.
\end{align*}
By linearity of expectation, we have
\[
\E[(d+1) \phi - I \mid \psi] = \sigma = e^{t\cL}(\psi),
\]
which is the classical-shadow identity for random Clifford measurements~\cite{HKP20}. Hence
\begin{align}
\E\ab[\hat J_t] &= \E_\psi \ab[\ab\big((d+1) \psi^\top - I) \otimes \E \ab[(d+1)\phi-I\mid\psi]]\notag\\
&= \E_\psi \ab\big[\ab\big((d+1)\psi^\top - I)\otimes e^{t\cL}(\psi)]\notag\\
&= (\mathrm{id} \otimes e^{t\cL}) \ab\Big(\E_\psi \ab[\ab\big((d+1)\psi^\top-I) \otimes \psi])\label{eq:hat_Jt}.
\end{align}
For an operator $X$, let $X^{\top_1}$ denote its partial transpose on the first subsystem. Then
\[
\SWAP^{\top_1} = \sum_{j,l=0}^{d-1} \ab(\ketbra{j}{l})^{\top} \otimes \ketbra{l}{j} = \sum_{j,l=0}^{d-1}\ketbra{l}{j}\otimes\ketbra{l}{j} = d \ketbra{\Phi_0}{\Phi_0}.
\]
Combining this identity with the second moment in \cref{fact:stab_design} gives
\[
\E_\psi \ab[ \psi^\top \otimes \psi]
=\ab\Big(\E_\psi[\psi \otimes \psi])^{\top_1}
=\frac{(I\otimes I)^{\top_1} + \SWAP^{\top_1}}{d(d+1)}
=\frac{I \otimes I + d \ketbra{\Phi_0}{\Phi_0}}{d(d+1)}.
\]
The first moment in \cref{fact:stab_design} then gives
\begin{align*}
\E_\psi \ab\big[\ab\big((d+1) \psi^\top-I) \otimes \psi]
&= (d+1)\E_\psi \ab[\psi^\top \otimes \psi] - I \otimes \E_\psi[\psi]\\
&= \frac{I \otimes I + d\ketbra{\Phi_0}{\Phi_0}}{d} - \frac{I \otimes I}{d}\\
&= \ketbra{\Phi_0}{\Phi_0}.
\end{align*}
Substituting this identity into \cref{eq:hat_Jt} completes the proof:
\[
\E\ab[\hat J_t] = (\mathrm{id} \otimes e^{t\cL}) \ab\Big(\E_\psi \ab[\ab\big((d+1) \psi^\top - I) \otimes \psi]) = (\mathrm{id} \otimes e^{t\cL}) (\ketbra{\Phi_0}{\Phi_0}) = J_t.\qedhere
\]
\end{proof}

The key property of the witness family is that, although each $\Pi_A$ is a projector onto a normalized vector in the doubled Hilbert space, its snapshot estimator has second moment bounded by an absolute constant: the $(d+1)$-scale factors of the shadow map are exactly canceled by the third moment of the Born-weighted stabilizer ensemble.

\begin{lemma}[Unbiased shadow estimator with constant second moment]\label{lem:shadow_variance}
Let $A$ be a normalized operator with $\norm{A}_\infty \leq \sqrt{2}$, and let $X_A \coloneq \tr\ab\big(\Pi_A \hat J_t)$ be the single-snapshot estimator. Then
\[
\E[X_A] = f_A(t), \qquad \E\ab[X_A^2] \leq 160.
\]
\end{lemma}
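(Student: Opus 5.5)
The plan is to write $X_A$ explicitly as a scalar function of the classical pair $(\psi,\phi)$, split it into one dominant nonnegative term and a bounded remainder, and control the dominant term using the Born-weighted conditional law of $\phi$ together with the stabilizer $3$-design property (\cref{fact:stab_design}).

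\emph{Unbiasedness.} The map $\hat J\mapsto\tr(\Pi_A\hat J)$ is linear, so \cref{lem:shadow_unbiased} gives $\E[X_A]=\tr(\Pi_A J_t)=f_A(t)$ directly.

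\emph{Explicit form of $X_A$.} By definition,
\[
X_A=\bra{\Phi_0}\ab\big(\ab((d+1)\psi^\top-I)\otimes A^\dagger((d+1)\phi-I)A)\ket{\Phi_0}.
\]
Applying \cref{eq:mes_identity} and using $(\psi^\top)^\top=\psi$ gives
\[
X_A=\frac{1}{d}\tr\ab\big(((d+1)\psi-I)\,A^\dagger((d+1)\phi-I)A).
\]
Expanding the product, I will write $X_A=Y-R$, where
\[
Y\coloneq\frac{(d+1)^2}{d}\abs{\bra{\phi}A\ket{\psi}}^2\geq0
\]
(abusing notation so that $\ket{\psi},\ket{\phi}$ denote the pure stabilizer vectors). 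The remainder is
\[
R\coloneq\frac{d+1}{d}\ab\big(\tr(\psi A^\dagger A)+\tr(\phi AA^\dagger))-\frac{1}{d}\tr(A^\dagger A).
\]
Normalization gives $\tfrac1d\tr(A^\dagger A)=1$. Since $\psi$ and $\phi$ are rank-one projectors and $\norm{A}_\infty^2\leq2$, each trace in the first bracket lies in $[0,2]$. Hence $\abs{R}\leq 1+\tfrac{3}{2}\cdot 4=7$ for $d\geq2$, and it is enough to show $\E[Y^2]\leq 30$ or so, after which $\E[X_A^2]\leq2\E[Y^2]+2\cdot 49\leq160$. I will tighten the constant on $R$ at the end if needed.

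\emph{Fourth moment of the overlap.} This is the main step. Condition on $\psi$, and let $\sigma=e^{t\cL}(\psi)$ and $v\coloneq A\ket{\psi}$, so that $\norm{v}^2\leq2$. By the Born-weighted identity \cref{eq:born_weighted} from the proof of \cref{lem:shadow_unbiased},
\[
\E\ab[\abs{\bra{\phi}v}^4\mid\psi]=d\,\E_\varphi\ab[\tr(\sigma\varphi)\,\tr(\varphi vv^\dagger)^2]
= d\,\tr\ab\big((\sigma\otimes vv^\dagger\otimes vv^\dagger)\,\E[\varphi^{\otimes3}]).
\]
The third-moment formula in \cref{fact:stab_design} reduces this to $\frac{1}{(d+1)(d+2)}\sum_{\pi\in S_3}\tr\ab(W_\pi(\sigma\otimes vv^\dagger\otimes vv^\dagger))$. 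Each permutation term is a trace of a product or a product of traces of $\sigma$, $vv^\dagger$, $vv^\dagger$ in some cyclic order. Since $\sigma\succeq0$ with $\tr\sigma=1$, every such term is bounded in absolute value by $\norm{v}^4\leq4$. This yields
\[
\E\ab[\abs{\bra{\phi}A\ket{\psi}}^4\mid\psi]\leq\frac{24}{(d+1)(d+2)}.
\]
Consequently
\[
\E[Y^2]\leq\frac{(d+1)^4}{d^2}\cdot\frac{24}{(d+1)(d+2)}=\frac{24(d+1)^3}{d^2(d+2)}\leq 24\cdot\frac{27}{16}<41
\]
for $d\geq2$. The dimension factors cancel exactly as advertised.

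Combining gives $\E[X_A^2]\leq2\E[Y^2]+2\E[R^2]$. If the crude constants overshoot $160$, I will refine them. One option is to use $R\geq-1$ together with $Y\geq0$ and bound via $(Y-R)^2\leq Y^2+2Y\abs{R}+R^2$, with $\E[Y]=O(1)$ obtained from the second-moment analogue of the computation above. Another is simply to track the sharper bound $\abs{R}\leq 5$ that follows when $\tr(\psi A^\dagger A)$ and $\tr(\phi AA^\dagger)$ are bounded jointly. I expect the only delicate point to be the permutation-term bound in the third-moment step: cross terms such as $\tr(\sigma vv^\dagger vv^\dagger)$ involve the mixed state $\sigma$, so I will bound them via $\abs{\tr(\sigma M)}\leq\norm{M}_\infty$ applied to the relevant product $M$.
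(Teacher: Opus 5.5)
Your proposal is correct and follows essentially the paper's proof. Both use the same trace expansion $X_A=\frac{(d+1)^2}{d}\abs{\bra{\phi}A\ket{\psi}}^2-R$, the same Born-weighted third-moment computation bounding each of the six permutation terms by $\norm{A\ket{\psi}}^4\leq4$, and a pointwise bound on the remainder. The refinement you anticipate is needed, since $\abs{R}\leq7$ gives $2\cdot40.5+2\cdot49=179>160$. It is immediate, though: $R+1=\frac{d+1}{d}\bigl(\tr(\psi A^\dagger A)+\tr(\phi AA^\dagger)\bigr)\in[0,6]$, so $\abs{R}\leq5$ and $\E[X_A^2]\leq81+50=131\leq160$; the paper instead uses $\abs{R}\leq6$ and gets $2\cdot44+2\cdot36=160$.
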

\begin{proof}
Unbiasedness follows directly from \cref{lem:shadow_unbiased} and linearity of expectation:
\[
\E[X_A] = \E[\tr\ab\big(\Pi_A \hat J_t)] = \tr\ab\big(\Pi_A J_t) = f_A(t).
\]

We next prove the second-moment bound. For convenience, define $S(\psi) \coloneq(d+1) \psi - I$, so that $\hat J_t=S(\psi)^\top\otimes S(\phi)$. Also define
$C \coloneq A\psi A^\dagger$,
$\alpha\coloneq\tr\ab\big(\psi A^\dagger A)$, and
$\beta\coloneq\tr\ab\big(AA^\dagger\phi)$. Expanding $X_A$ gives
\begin{align}
X_A &= \bra{\Phi_0}(I \otimes A^\dagger) \ab\big(S(\psi)^\top \otimes S(\phi))(I \otimes A) \ket{\Phi_0} = \frac{1}{d}\tr\ab\big(S(\psi) A^\dagger S(\phi)A) \notag \\
&= \frac{1}{d}\ab((d+1)^2 \tr\ab\big(\psi A^\dagger \phi A)
- (d+1) \tr\ab\big(\psi A^\dagger A)
- (d+1) \tr\ab\big(A^\dagger \phi A) + \tr\ab\big(A^\dagger A))\notag \\
&= \frac{1}{d}\ab((d+1)^2 \tr(C \phi) - (d+1) \alpha -(d+1)\beta + d) \notag\\
&= \frac{(d+1)^2}{d} \tr(C\phi) - \frac{d+1}{d} \ab\big(\alpha + \beta) + 1.\label{eq:X_A}
\end{align}
Since $\psi$ is pure, $C$ is positive semidefinite of rank at most one with $\tr C=\alpha$, $\norm{C}_\infty=\alpha$, and $C^2=\alpha C$. Moreover, $0\leq\alpha,\beta\leq\norm{A}_\infty^2\leq2$.

Conditioned on $\psi$, the outcome $\phi$ follows the Born-weighted stabilizer ensemble in \cref{eq:born_weighted} with $\sigma = e^{t\cL}(\psi)$. Taking $g(\varphi)=\tr(C\varphi)^2$,
\begin{align*}
\E\ab[\tr(C\phi)^2 \mid \psi] &= d \E_\varphi \ab[\tr(\sigma \varphi) \tr(C \varphi)^2]\\
&= d \tr\sbra[\Big]{(\sigma\otimes C \otimes C)\,\E_\varphi\ab[\varphi\otimes\varphi\otimes\varphi]}\\
&= \frac{\sum_{\pi\in S_3}\tr\ab[W_\pi\ab\big(\sigma\otimes C\otimes C)]}{(d+1)(d+2)} \qquad \text{(by the third moment of \cref{fact:stab_design}).}
\end{align*}
In standard cycle notation, the six permutations are
\[
S_3 = \cbra[\big]{\mathrm{id},\ (1\,2),\ (1\,3),\ (2\,3),\ (1\,2\,3),\ (1\,3\,2)},
\]
and we evaluate the trace associated with each permutation.
\begin{itemize}
    \item For $\pi=\mathrm{id}$, $W_{\mathrm{id}} = I \otimes I \otimes I$, and the trace is
    \[
    \tr\ab\big(\sigma\otimes C\otimes C)=\tr(\sigma) \tr(C)^2 = \alpha^2.
    \]
    \item For a transposition, $W_\pi$ acts as $\SWAP$ on the two exchanged registers and as the identity on the third. Expanding $\SWAP=\sum_{j,l=0}^{d-1}\ketbra{j}{l}\otimes\ketbra{l}{j}$ gives the trace identity
    \[
    \tr\ab[\SWAP(X\otimes Y)]
    =\sum_{j,l=0}^{d-1}\bra{l}X\ket{j}\bra{j}Y\ket{l}
    =\tr(XY).
    \]
    For the transposition $\pi=(2\,3)$, $W_{(2\,3)}=I\otimes\SWAP$ and
    \[
    \tr\ab[W_{(2\,3)}\ab\big(\sigma\otimes C\otimes C)]
    =\tr(\sigma)\cdot\tr\ab[\SWAP(C\otimes C)]
    =\tr\ab\big(C^2)=\alpha\,\tr C=\alpha^2.
    \]
    For $\pi=(1\,2)$ and $\pi=(1\,3)$, we have
    \[
    \tr\ab[W_{(1\,2)}\ab\big(\sigma\otimes C\otimes C)]
    =\tr\ab[W_{(1\,3)}\ab\big(\sigma\otimes C\otimes C)]
    =\tr\ab[\SWAP(\sigma\otimes C)]\cdot\tr(C)
    =\alpha\,\tr(\sigma C).
    \]
    \item For the two $3$-cycles $\pi=(1\,2\,3)$ and $\pi=(1\,3\,2)$, we have $W_{(1\,2\,3)} \ab\big(\ket{a}\otimes\ket{b}\otimes\ket{c})=\ket{c}\otimes\ket{a}\otimes\ket{b}$ and $W_{(1\,3\,2)}\ab\big(\ket{a}\otimes\ket{b}\otimes\ket{c})=\ket{b}\otimes\ket{c}\otimes\ket{a}$. Expanding the traces in the computational basis gives
    \begin{align*}
    \tr\ab[W_{(1\,2\,3)} \ab\big(\sigma \otimes C \otimes C)] &= \sum_{a,b,c}\bra{b} \sigma \ket{a} \bra{a} C \ket{c} \bra{c} C\ket{b} =\tr\ab\big(\sigma C^2) = \alpha\tr(\sigma C),\\
    \tr\ab[W_{(1\,3\,2)}\ab\big(\sigma\otimes C\otimes C)] &= \sum_{a,b,c} \bra{c} \sigma \ket{a} \bra{a} C\ket{b} \bra{b} C\ket{c} = \tr\ab\big(\sigma C^2) = \alpha\tr(\sigma C).
    \end{align*}
\end{itemize}
Summing the six traces gives
\[
\sum_{\pi\in S_3} \tr\ab[W_\pi\ab\big(\sigma\otimes C\otimes C)]
= 2 \alpha^2 + 4\alpha \tr(\sigma C).
\]
H\"older's inequality gives $\tr(\sigma C) \leq \norm{\sigma}_1 \norm{C}_\infty = \alpha$. Therefore,
\[
\E\ab[\tr(C\phi)^2 \mid \psi] = \frac{2\alpha^2 + 4\alpha \tr(\sigma C)}{(d+1)(d+2)} \leq \frac{6\alpha^2}{(d+1)(d+2)}.
\]
Multiplying by the squared prefactor in \cref{eq:X_A} and using $d \geq 2$ gives
\[
\ab\Big(\frac{(d+1)^2}{d})^{2} \E\ab[\tr(C\phi)^2\mid\psi]
\leq\frac{6\alpha^2 (d+1)^3}{d^2(d+2)}
\leq 6\alpha^2 \cdot \frac{27}{16} \leq 11\alpha^2.
\]
The remaining term satisfies the pointwise bound
\[
\abs[\Big]{1-\frac{d+1}{d}(\alpha + \beta)} \leq 6,
\]
since $0\leq \alpha, \beta\leq 2$ and $0\leq\frac{d+1}{d}(\alpha+\beta)\leq 6$ for $d\geq 2$. Combining the two parts with $(x+y)^2\leq2x^2+2y^2$,
\begin{align*}
\E\ab[X_A^2 \mid \psi ] &\leq 2 \ab\Big(\frac{(d+1)^2}{d})^{2} \E\ab[\tr(C\phi)^2\mid\psi]
+ 2 \E\ab[\abs[\Big]{1-\frac{d+1}{d}(\alpha + \beta)}^2 \mid \psi]\\
&\leq 2\cdot11\alpha^2+2\cdot6^2\leq160,
\end{align*}
which completes the proof by averaging over $\psi$.
\end{proof}

The estimator $X_A$ is real but not uniformly bounded: a single snapshot can be as large as $O(d)$. We therefore construct the estimator using median of means, as in \cref{fact:mom}. Since a witness enters only through the classical evaluation of $X_A$, the experiment itself does not depend on $A$. A single collection of snapshots $(\psi_m,\phi_m)$ therefore serves all witnesses in the list simultaneously.

\subsection{The algorithm and guarantee}\label{sec:shadow_alg}

The full algorithm collects shadow snapshots at the Chebyshev--Lobatto nodes, estimates the witness signals using median of means, computes their endpoint derivatives, and assembles the coefficients using \cref{lem:witness_identities}.

\begin{algorithm}[htbp!]
\caption{Learning Lindbladian coefficients by Clifford probing}\label{alg:shadow}
\small
\begin{algorithmic}[1]
\Require accuracy $\eps$; failure probability $\delta$; dynamical strength bound $\Lambda$; candidate supports $\hat\cS_H$ and $\hat\cS_D$.
\Ensure estimates $\{\hat h_a\}_{a\in\hat\cS_H}$ and $\{\hat\gamma_{ab}\}_{(a,b)\in\hat\cS_D}$.
\State Collect the participating labels $\hat\cS\gets\hat\cS_H\cup\cbra{a:\exists b,\ (a,b)\in\hat\cS_D}\cup\cbra{b:\exists a,\ (a,b)\in\hat\cS_D}$ and form the witness list
\NoNumState $\cW \gets\{I\}\cup\cbra{P_a: a\in\hat\cS}\cup\cbra{A^{\iu}_{a0}: a\in\hat\cS_H}\cup\cbra{A_{ab},\,A^{\iu}_{ab}: (a,b)\in\hat\cS_D,\ a\neq b}$.
\State Set $T\gets1/\Lambda$ and $q\gets\min\cbra[\big]{q'\in\N_{>0}:(q'+1)!\geq12\Lambda/\eps}$.
\State Compute the nodes $t_j$ and the endpoint weights $\ell'_j(0)$ for $j=0,1,\ldots,q$, and set $W\gets\sum_{j=1}^q\abs{\ell'_j(0)}$.
\State Set the node accuracy $\eta_0\gets\eps/(6W)$, the batch count $K\gets\ceil{8\log(2q\abs{\cW}/\delta)}$, and the shot count $N_{\mathrm{node}}\gets K\cdot\lceil640/\eta_0^2\rceil$.
\LineComment{Data collection:}
\For{$j=1,\ldots,q$}
  \For{$m=1,\ldots,N_{\mathrm{node}}$}
    \State Prepare a uniformly random stabilizer state $\psi_{j,m} \in \Stab_n$;
    \State Evolve under $e^{t_j\cL}$;
    \State Apply a uniformly random Clifford circuit $C\in\Cl_n$;
    \State Measure in the computational basis to obtain an outcome $b \in\{0,1\}^n$, and set $\phi_{j,m} \gets C^\dagger\ketbra{b}{b}C$.
  \EndFor
\EndFor
\LineComment{Classical post-processing:}
\For{each witness $A \in \cW$}
  \For{$j=1,\ldots,q$}
    \For{$m=1,\ldots,N_{\mathrm{node}}$}
      \State $\displaystyle X_A^{(j,m)} \gets \frac{1}{d} \tr\ab\Big(\ab\big((d+1)\psi_{j,m}-I) A^\dagger \ab\big((d+1)\phi_{j,m}-I) A)$.
    \EndFor
    \State Partition the $N_{\mathrm{node}}$ snapshot values into $K$ batches of equal size.
    \State Let $\hat f_A(t_j)$ be the median of the $K$ batch means.
  \EndFor
\EndFor
\For{each witness $A \in \cW$}
  \State $\displaystyle \hat f'_A(0)\gets\ell'_0(0) \abs[\big]{\frac{1}{d}\tr A}^2+\sum_{j=1}^q\ell'_j(0)\,\hat f_A(t_j)$.
\EndFor

\State Reconstruct the candidate coefficients:
\begin{align*}
\hat h_a &\gets \hat f'_{A^{\iu}_{a0}}(0)- \frac{1}{2}\ab\big(\hat f'_{P_a}(0)+\hat f'_{I}(0)), && a\in\hat\cS_H,\\
\hat\gamma_{aa} &\gets \hat f'_{P_a}(0), && (a,a)\in\hat\cS_D,\\
\hat\gamma_{ab} & \gets\ab\Big(\hat f'_{A_{ab}}(0) - \frac{1}{2}\ab\big(\hat f'_{P_a}(0)+\hat f'_{P_b}(0))) + \iu\ab\Big(\frac{1}{2}\ab\big(\hat f'_{P_a}(0)+\hat f'_{P_b}(0))-\hat f'_{A^{\iu}_{ab}}(0)), && (a,b)\in\hat\cS_D,\ a\neq b.
\end{align*}
\State \Return the estimates $\{\hat h_a\}$ and $\{\hat\gamma_{ab}\}$.
\end{algorithmic}
\end{algorithm}

\begin{theorem}[Coefficient learning by Clifford probing]\label{thm:shadow_full}
Let $\cL$ be an arbitrary $n$-qubit Lindbladian with $\norm{\cL^\dag}_{\infty\to\infty}\leq\Lambda$. Suppose candidate supports $\hat\cS_H$ and $\hat\cS_D$, of total size $M\coloneq\abs{\hat\cS_H}+\abs{\hat\cS_D}$, are given. For any $\eps,\delta\in(0,1)$, \cref{alg:shadow} outputs, with probability at least $1-\delta$, estimates satisfying
\[
\max_{a\in\hat\cS_H}\abs[\big]{\hat h_a-h_a} \leq \eps
\qquad \text{and} \qquad \max_{(a,b)\in\hat\cS_D}\abs[\big]{\hat\gamma_{ab}-\gamma_{ab}} \leq \eps.
\]
The total number of experiments is
\[
N=\widetilde O\ab\Big(\frac{\Lambda^2}{\eps^2}\log\frac{M}{\delta}).
\]
The total evolution time is $\Ttot=\widetilde O\ab\big(\Lambda\eps^{-2}\log(M/\delta))$, and the classical running time is $O\ab\big(NMn^3)$.
\end{theorem}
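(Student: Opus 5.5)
The plan is to decompose the error of each reconstructed coefficient into an interpolation bias and a statistical error, control each at the level of a single witness derivative, and then propagate through the linear identities of \cref{lem:witness_identities}. First, I will fix one witness $A\in\cW$. By \cref{eq:witness_derivative} each signal $f_A$ obeys $\abs{f_A^{(r)}(t)}\leq 2\Lambda^r$. For the Lagrange interpolation error argument of \cref{lem:interpolation_error} the relevant bound is $2\Lambda^{q+1}$, and rerunning its Rolle-based proof with this constant gives bias at most $2\Lambda/(q+1)!$. Since $q$ is chosen so that $(q+1)!\geq 12\Lambda/\eps$, this bias is at most $\eps/6$. The $t_0=0$ term is handled exactly, because $f_A(0)=\abs{\tfrac1d\tr A}^2$ is known by \cref{lem:witness_signals}. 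Hence
\[
\abs[\big]{\hat f'_A(0)-f'_A(0)}\leq \frac{\eps}{6}+\sum_{j=1}^q\abs{\ell'_j(0)}\,\abs[\big]{\hat f_A(t_j)-f_A(t_j)}\leq \frac{\eps}{6}+W\eta_0=\frac{\eps}{3},
\]
provided every node estimate is $\eta_0$-accurate.

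Second, for the statistical part I will invoke \cref{lem:shadow_variance}, which gives unbiased snapshot values with $\E[X_A^2]\leq160$. The snapshots at a fixed node are i.i.d., so \cref{fact:mom} applies. With $K=\lceil 8\log(2q\abs{\cW}/\delta)\rceil$ batches and $N_{\mathrm{node}}=K\lceil 640/\eta_0^2\rceil$, which matches $4\sigma^2/\eta_0^2$ for $\sigma^2=160$, each estimate $\hat f_A(t_j)$ fails to be $\eta_0$-accurate with probability at most $\delta/(2q\abs{\cW})$. A union bound over the $q\abs{\cW}$ pairs $(A,j)$ gives simultaneous accuracy with probability at least $1-\delta/2\geq1-\delta$. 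The union bound is legitimate even though the same snapshots are reused across witnesses.

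Third, on this event I will propagate through \cref{lem:witness_identities}. For $\hat h_a$, the error is at most $\eps/3+\tfrac12(\eps/3+\eps/3)=2\eps/3$. For $\hat\gamma_{aa}$, it is at most $\eps/3$. For off-diagonal $\hat\gamma_{ab}$, the real and imaginary parts each err by at most $2\eps/3$, so the modulus errs by at most $2\sqrt2\eps/3$. This exceeds $\eps$, so the constants need adjusting: the bias must actually be $\eps/12$ via the $12\Lambda$ choice. I would tighten $\eta_0$ by a constant, for instance bounding each part by $\eps/\sqrt2$, and I flag this constant bookkeeping as the only delicate point. All such changes affect only constants.

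Finally, for the costs: $\abs{\cW}\leq 1+2M+2M+\dots=O(M)$ and $W=O(q^2\Lambda)$ by \cref{lem:weight_bound}, with $q=O(\log(\Lambda/\eps)/\log\log(\Lambda/\eps))$. Then
\[
N=qN_{\mathrm{node}}=O\ab\big(qKW^2/\eps^2)=\widetilde O\ab\big(\Lambda^2\eps^{-2}\log(M/\delta)).
\]
Each evolution time is at most $T=1/\Lambda$, so $\Ttot\leq N/\Lambda$. For the running time, each $X_A^{(j,m)}$ is a trace of products of stabilizer projectors with Pauli combinations of at most two terms. It expands into $O(1)$ terms $\tr(\psi P_a\phi P_b)$ and $\tr(\psi P)$, each computable from stabilizer tableaux in $O(n^3)$ time~\cite{AG04}. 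This gives $O(NMn^3)$ overall, plus negligible median and reconstruction costs. I expect the main obstacle to be the tableau evaluation of $\tr(\psi P_a\phi P_b)$ in $O(n^3)$; I would handle it by writing $P_b\phi P_b$ as another stabilizer state and using the standard overlap formula for two stabilizer states, with the phase coming from Gaussian elimination.
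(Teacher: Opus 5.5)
Your proposal follows the paper's proof essentially step for step. You use the same split of each witness-derivative error into an interpolation bias of at most $\eps/6$ and a statistical error of at most $W\eta_0=\eps/6$. For the bias, the paper applies \cref{lem:interpolation_error} to the rescaled signal $f_A/2$, while you rerun the Rolle argument with the constant $2$; these are equivalent. You also use the same median-of-means estimates with a union bound over the $q\abs{\cW}$ node--witness pairs, the same linear propagation through \cref{lem:witness_identities}, and the same tableau-based post-processing.

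The only thing to fix is an arithmetic slip at the off-diagonal step. Since $2\sqrt{2}/3\approx0.943<1$, the bound $\abs{\hat\gamma_{ab}-\gamma_{ab}}\leq 2\sqrt{2}\eps/3\leq\eps$ already holds with the constants of \cref{alg:shadow}. In fact each part errs by $2\eps/3<\eps/\sqrt{2}$, which is exactly the target you proposed. So no tightening of $\eta_0$ is needed. You should not make one, because the theorem concerns \cref{alg:shadow} as written, with $\eta_0=\eps/(6W)$ and $(q+1)!\geq12\Lambda/\eps$.

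Your remark that ``the bias must actually be $\eps/12$'' also mixes up two quantities. The value $\eps/12$ is the bias of $f_A/2$ in the paper's rescaling, while the bias of $f_A$ is the $\eps/6$ you had already derived. With this detour removed, your argument coincides with the paper's.
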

\begin{proof}
The algorithm evaluates only the witnesses in the list $\cW$, which contains the identity, one diagonal witness per participating label, one boundary-pair witness per Hamiltonian candidate, and at most two pair witnesses per dissipative candidate, so
\[
\abs{\cW}\leq 1+\ab\big(\abs{\hat\cS_H}+2\abs{\hat\cS_D})+\abs{\hat\cS_H}+2\abs{\hat\cS_D}\leq 4M+1.
\]

We first bound the error of each witness derivative. Fix $A \in \cW$ and decompose
\[
\hat f'_A(0)-f'_A(0)
= \underbrace{\ab\Big(\sum_{j = 0}^q\ell'_j(0)f_A(t_j)-f'_A(0))}_{\text{interpolation bias}}
+ \underbrace{\sum_{j=1}^q \ell'_j(0) \ab\big(\hat f_A(t_j)-f_A(t_j))}_{\text{statistical error}},
\]
which we bound separately.

For the interpolation bias, the rescaled signal $f_A/2$ satisfies the derivative bounds $\abs[\big]{(f_A/2)^{(r)}(t)}\leq\Lambda^r$ required by \cref{lem:interpolation_error}. The algorithm chooses $q$ so that $(q+1)!\geq12\Lambda/\eps$. Consequently, the interpolation bias of $f_A/2$ is at most $\eps/12$, and that of $f_A$ is at most $\eps/6$.

For the statistical error, \cref{lem:shadow_variance} gives $\E[X_A]=f_A(t_j)$ and $\E[(X_A)^2] \leq 160$. Applying \cref{fact:mom} with second-moment bound $160$ and accuracy $\eta_0$ shows that each median-of-means estimate satisfies
\[
\Pr\ab[\abs{\hat f_A(t_j)-f_A(t_j)}>\eta_0] \leq \frac{\delta}{q\abs{\cW}}.
\]
A union bound over the $q$ nodes and the $\abs{\cW}$ witnesses ensures that all $q\abs{\cW}$ estimates are simultaneously accurate with probability at least $1-\delta$. By the choice $\eta_0=\eps/(6W)$ of the algorithm,
\[
\abs[\Big]{\sum_{j=1}^q\ell'_j(0)\ab\big(\hat f_A(t_j)-f_A(t_j))}
\leq\eta_0\sum_{j=1}^q\abs{\ell'_j(0)}=\eta_0W\leq\frac{\eps}{6}.
\]
Combining the two parts, every witness derivative satisfies
\[
\abs[\big]{\hat f'_A(0)-f'_A(0)}\leq\frac{\eps}{6}+\frac{\eps}{6}=\frac{\eps}{3}.
\]

We then bound the errors of the Lindbladian coefficients using the identities in \cref{lem:witness_identities}. Each diagonal estimate $\hat\gamma_{aa}=\hat f'_{P_a}(0)$ has error at most $\eps/3$. The real and imaginary parts of each off-diagonal estimate are combinations of three derivatives whose absolute weights sum to at most $2$, so each part is estimated with error at most $2\eps/3$, and
\[
\abs{\hat\gamma_{ab}-\gamma_{ab}}
\leq\sqrt{\ab\Big(\frac{2\eps}{3})^2+\ab\Big(\frac{2\eps}{3})^2}
=\frac{2\sqrt{2}}{3}\eps\leq\eps.
\]
Each Hamiltonian estimate is likewise a combination of three derivatives whose absolute weights sum to at most $2$, so its error is at most $2\eps/3\leq\eps$. The same bounds cover coefficients absent from $\cL$, because their true values are zero.

We now analyze the cost of \cref{alg:shadow}. The number of experiments is
\[
N = q N_{\mathrm{node}} = O\ab\Big(q\cdot\frac{W^2}{\eps^2} \log\frac{q\abs{\cW}}{\delta}) = O\ab\Big(\frac{q^5\Lambda^2}{\eps^2}\log\frac{q\abs{\cW}}{\delta})
= \widetilde O\ab\Big(\frac{\Lambda^2}{\eps^2}\log \frac{M}{\delta}),
\]
using $W=O(q^2\Lambda)$ from \cref{lem:weight_bound}, $q=\widetilde\Theta(1)$, and $\abs{\cW}\leq4M+1$. Every experiment evolves for time at most $1/\Lambda$, giving $\Ttot\leq N/\Lambda$.

For the classical cost, although the snapshot $\hat J_t$ is written as a $4^n\times4^n$ matrix, the algorithm never computes it. Each snapshot is stored as the classical pair $(\psi_{j,m},\phi_{j,m})$, consisting of two stabilizer tableaux with $O(n^2)$ bits each~\cite{AG04}. Thus all snapshots occupy $O(Nn^2)$ bits. The post-processing needs only the scalars
\[
X_A^{(j,m)}=\frac{1}{d}\tr\ab\big(S(\psi_{j,m})\,A^\dagger S(\phi_{j,m})A),
\qquad
S(\xi)=(d+1)\xi-I,
\]
in the trace form established in the proof of \cref{lem:shadow_variance}. Since every witness $A$ is a combination of at most two Pauli operators, each such scalar expands into at most four terms of the form $\frac{1}{d}\tr\ab\big(S(\psi)P\,S(\phi)P')$ with Pauli operators $P,P'$. The required classical quantities are:
\begin{itemize}
    \item the overlap $\tr(\psi P\phi P')$, computable---including its complex phase---from stabilizer inner products with Pauli insertions in $O(n^3)$ time~\cite{AG04};
    \item the Pauli expectations $\tr(\psi\,PP')$ and $\tr(\phi\,P'P)$, where $PP'$ and $P'P$ are single Pauli operators up to the known phases described in \cref{fact:xi_ab}, each computable in $O(n^2)$ time;
    \item the known constant $\tr(PP')=d\,\mathbf{1}[P=P']$ by \cref{fact:pauli_inprod}.
\end{itemize}
Each of these quantities is zero or an integer power of $\sqrt{2}$ times an eighth root of unity, so the snapshot values are computed without expanding any matrices. Each witness--snapshot pair therefore costs $O(n^3)$ time, and the total classical running time is $O(N\abs{\cW}\,n^3)=O(NMn^3)$.
\end{proof}

The theorem imposes no constraints on the candidate supports. Taking them to contain all nonidentity labels and all ordered pairs of nonidentity labels gives $M\leq4^n+16^n$ and $\log M=O(n)$. Thus $\widetilde O\ab\big(\Lambda^2\eps^{-2}(n+\log(1/\delta)))$ experiments suffice without any structural or support assumptions. The exponential cost is confined to classical post-processing and output size. Whenever the candidate supports have polynomial size, both the output and the classical cost become polynomial as well, and the number of experiments depends on the candidate supports only through the logarithmic union-bound factor, because every shadow snapshot serves all witnesses at once. The simplest such instance is given by the candidate supports produced by the support-learning procedure of \cref{sec:support_learning}.

Composing the support-learning procedure of \cref{sec:support_learning} with the coefficient-learning algorithm yields the end-to-end guarantee of this paper: every heavy coefficient of an arbitrary Lindbladian is found and estimated in polynomial time, with a number of experiments independent of the system size.

\begin{corollary}[Learning arbitrary Lindbladians in polynomial time]\label{cor:learning_arbitrary_lind}
Let $\cL$ be an arbitrary $n$-qubit Lindbladian with $\norm{\cL^\dag}_{\infty\to\infty}\leq\Lambda$. For any $\eps,\delta\in(0,1)$, there exists an algorithm that outputs coefficient estimates satisfying
\[
\max_{a\neq 0}{\abs[\big]{\hat h_a-h_a}} \leq \eps,
\qquad
\max_{\substack{a\neq0\\b\neq0}}{\abs[\big]{\hat\gamma_{ab}-\gamma_{ab}}} \leq \eps,
\]
with probability at least $1-\delta$. The total number of experiments is
\[
\widetilde O\ab\Big(\frac{\Lambda^2}{\eps^2}\log\frac{1}{\delta}),
\]
the total evolution time is
\[
\widetilde O\ab\Big(\frac{\Lambda}{\eps^2}\log\frac{1}{\delta}),
\]
and the classical running time is $\poly(n,\Lambda/\eps,\log(1/\delta))$.
\end{corollary}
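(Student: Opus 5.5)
The plan is to compose \cref{alg:displacement} with \cref{alg:shadow}, each run with failure probability $\delta/2$, and to report zero for every coordinate outside the candidate supports. Concretely, we first run \cref{alg:displacement} at threshold $\eta=\min\{\eps,\Lambda\}$, which lies in $(0,\Lambda]$ as \cref{thm:support_learning} requires. This produces $\hat\cS_H$ and $\hat\cS_D$. We then feed these sets to \cref{alg:shadow} with accuracy $\eps$. The final estimates are $\hat h_a,\hat\gamma_{ab}$ from the second stage on the candidate supports, and $\hat h_a=0$, $\hat\gamma_{ab}=0$ elsewhere.

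For correctness, a union bound shows that with probability at least $1-\delta$ both events hold: the completeness event $\cS_H^{\geq\eta}\subseteq\hat\cS_H$, $\cS_D^{\geq\eta}\subseteq\hat\cS_D$ from \cref{thm:support_learning}, and the accuracy event of \cref{thm:shadow_full} on the candidate supports. On this joint event, coordinates inside the candidate supports have error at most $\eps$ by \cref{thm:shadow_full}. Any coordinate outside them is not $\eta$-heavy, so its true magnitude is below $\eta\leq\eps$, and reporting $0$ incurs error below $\eps$. There is one subtlety. \cref{thm:shadow_full} is applied to random candidate sets produced by the first stage. This is legitimate because the two stages use independent experiments and the guarantee of \cref{thm:shadow_full} holds for every fixed input support. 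We therefore condition on the output of stage one and then integrate over it.

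For the cost, \cref{thm:support_learning} gives $\widetilde O(\Lambda^2\eps^{-2}\log(1/\delta))$ experiments and $\widetilde O(\Lambda\eps^{-2}\log(1/\delta))$ total evolution time, using $\eta=\eps$ when $\eps\leq\Lambda$; when $\eps>\Lambda$ the first-stage cost only decreases. Stage one also guarantees deterministically that $M\leq\widetilde O(\Lambda^8\eta^{-8}\log^4(1/\delta))$. Hence $\log(M/\delta)=O(\log(\Lambda/\eps)+\log\log(1/\delta)+\log(1/\delta))$, and this is absorbed into $\widetilde O(\log(1/\delta))$. \cref{thm:shadow_full} therefore contributes $\widetilde O(\Lambda^2\eps^{-2}\log(1/\delta))$ experiments and $\widetilde O(\Lambda\eps^{-2}\log(1/\delta))$ total evolution time. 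Summing the two stages gives the claimed bounds.

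The classical running time is $\widetilde O(n\Lambda^8\eps^{-8}\log^4(1/\delta))$ for stage one plus $O(NMn^3)$ for stage two. Both are $\poly(n,\Lambda/\eps,\log(1/\delta))$. Outputting zeros on omitted coordinates is implicit and needs no enumeration.

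The main obstacle is bookkeeping rather than new mathematics. One must check that the $\log M$ factor does not spoil the $\log(1/\delta)$ dependence, which works because $M$ is only polylogarithmic in $1/\delta$. One must also handle the edge case $\eps>\Lambda$ through the choice $\eta=\min\{\eps,\Lambda\}$.
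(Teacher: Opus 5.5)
Your proposal is correct and follows the paper's proof essentially step for step. Like the paper, you run \cref{alg:displacement} at threshold $\eps$ with failure probability $\delta/2$, feed $\hat\cS_H=\hat\cS$ and $\hat\cS_D=\hat\cS\times\hat\cS$ to \cref{alg:shadow} at accuracy $\eps$ and failure probability $\delta/2$, absorb $\log(M/\delta)$ into $\widetilde O(\log(1/\delta))$ using $M=\widetilde O(\Lambda^8\eps^{-8}\log^4(1/\delta))$, take a union bound, and report zero off the candidate family. Your explicit conditioning on the stage-one output and the guard $\eta=\min\{\eps,\Lambda\}$ are refinements the paper leaves implicit; the paper simply sets $\eta=\eps$.

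One small correction: for $\eps>\Lambda$ the cost does not ``only decrease''. Each stage still uses at least order $\log(1/\delta)$ experiments, and that is not $\widetilde O(\Lambda^2\eps^{-2}\log(1/\delta))$ when $\Lambda\ll\eps$. The clean fix is that \cref{lem:choi_derivative_bound} with $r=1$ gives $\abs{h_a},\abs{\gamma_{ab}}\leq\Lambda$, so for $\eps\geq\Lambda$ the all-zero output is already $\eps$-accurate with no experiments at all.
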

\begin{proof}
The method composes the support-learning procedure of \cref{sec:support_learning} with the coefficient-learning procedure of this section, setting the support-learning threshold equal to the target accuracy $\eps$. It consists of the following steps.

\emph{Step 1: support learning.} Run \cref{alg:displacement} with threshold $\eta\coloneq\eps$, failure probability $\delta/2$, and the dynamical strength bound $\Lambda$. It returns the candidate supports $\hat\cS_H=\hat\cS$ and $\hat\cS_D=\hat\cS\times\hat\cS$.
By \cref{thm:support_learning}, with probability at least $1-\delta/2$, $\hat\cS_H$ contains every Hamiltonian label $a$ with $\abs{h_a}\geq\eps$ and $\hat\cS_D$ contains every dissipative pair $(a,b)$ with $\abs{\gamma_{ab}}\geq\eps$.

\emph{Step 2: candidate size.} The returned supports constitute a candidate family of total size
\[
M \coloneq \abs{\hat\cS_H} + \abs{\hat\cS_D} = \abs{\hat\cS}+\abs{\hat\cS}^2 = \widetilde O\ab\big(\Lambda^8 \eps^{-8} \log^4(1/\delta)).
\]

\emph{Step 3: coefficient learning.} Run \cref{alg:shadow} with accuracy $\eps$, failure probability $\delta/2$, the dynamical strength bound $\Lambda$, and the candidate supports $(\hat\cS_H,\hat\cS_D)$. By \cref{thm:shadow_full}, with probability at least $1-\delta/2$, the returned estimates satisfy
\[
\max_{a\in\hat\cS_H}\abs{\hat h_a-h_a}\leq\eps,
\qquad
\max_{(a,b)\in\hat\cS_D}\abs{\hat\gamma_{ab}-\gamma_{ab}}\leq\eps.
\]
The algorithm returns these estimates and assigns zero to every coordinate outside the candidate family. Thus the output is a list of the $M$ candidate estimates.

By a union bound, Steps 1 and 3 succeed simultaneously with probability at least $1-\delta$. Every candidate coordinate is estimated with error at most $\eps$ by Step 3, including candidate coordinates absent from $\cL$, whose true values are zero. A Hamiltonian coefficient with $\abs{h_a}\geq\eps$ implies $a\in\hat\cS_H$, and a dissipative coefficient with $\abs{\gamma_{ab}}\geq\eps$ implies $(a,b)\in\hat\cS_D$, so every Hamiltonian coordinate outside $\hat\cS_H$ and every dissipative coordinate outside $\hat\cS_D$ has magnitude less than $\eps$. Its estimate is zero, so its error is also less than $\eps$. Thus every coordinate has error at most $\eps$.

For the complexity of the algorithm, Step 1 uses
\[
N_1 = 2N_{\mathrm{supp}}=\widetilde O\ab\Big(\frac{\Lambda^2}{\eps^2}\log\frac{1}{\delta})
\]
experiments and $\widetilde O\ab\big(n\Lambda^8\eps^{-8}\log^4(1/\delta))$ classical running time by \cref{thm:support_learning}. For Step 3, the candidate size gives $\log(M/\delta)=O\ab\big(\log(\Lambda/\eps)+\log(1/\delta))$, so by \cref{thm:shadow_full} the number of experiments is
\[
N_3=\widetilde O\ab\Big(\frac{\Lambda^2}{\eps^2}\log\frac{M}{\delta})=\widetilde O\ab\Big(\frac{\Lambda^2}{\eps^2}\log\frac{1}{\delta})
\]
and the classical running time is $O(N_3M\,n^3)=\widetilde O\ab\big(n^3\Lambda^{10}\eps^{-10}\log^5(1/\delta))$. Therefore, the total number of experiments is
\[
N = N_1 + N_3 = \widetilde O\ab\Big(\frac{\Lambda^2}{\eps^2}\log\frac{1}{\delta}).
\]
Every experiment of both stages evolves for time at most $1/\Lambda$, so
\[
\Ttot \leq \frac{N}{\Lambda} = \widetilde O\ab\Big(\frac{\Lambda}{\eps^2}\log\frac{1}{\delta}).
\]
The total classical running time is $\widetilde O\ab\big(n^3\Lambda^{10}\eps^{-10}\log^5(1/\delta))=\poly(n,\Lambda/\eps,\log(1/\delta))$, dominated by the shadow post-processing in Step 3.
\end{proof}

\section*{Acknowledgments}
While preparing this manuscript, the authors became aware of an independent, concurrent study~\cite{ZG26a}. ChatGPT was used interactively to check the correctness of proofs, identify relevant references, and polish the manuscript. All writing, including mathematical statements and reasoning, was completed by the authors. Z.C.\ and Z.Y.\ acknowledge support by the CQT Young Researcher Career Development Grant.

\bibliographystyle{alphaurl}
\bibliography{ref}
\end{document}